\title{More powerful multiple testing under dependence\\ via randomization}
\author{
Ziyu Xu\thanks{Department of Statistics and Data Science, Carnegie Mellon University, USA. Email: \texttt{xzy@cmu.edu}.}
\and
Aaditya Ramdas\thanks{Departments of Statistics and Data Science, and Machine Learning, Carnegie Mellon University, USA. Email: \texttt{aramdas@cmu.edu}.}
}
\date{\today}
\begin{document}
\maketitle
\begin{abstract}
    We develop a technique to improve the power of any e-value by a simple randomization involving one independent uniform random variable.
Using this framework, we show that two procedures for false discovery rate (FDR) control --- the Benjamini-Yekutieli procedure for dependent p-values, and the e-Benjamini-Hochberg procedure for dependent e-values --- can be improved through randomization.
We also improve the Hommel test under dependence, and post-selection inference procedures for confidence intervals with false coverage rate (FCR) that allow for arbitrary selection rules and dependence.
Importantly, our randomized improvements are never worse than the originals and are typically strictly more powerful, with marked improvements in simulations.
 \end{abstract}
\tableofcontents

\section{Introduction}
In the multiple testing problem, we wish to test $K$ hypotheses, out of which an unknown subset $\mathcal N \subseteq [K] \coloneqq \{1, \dots, K
\}$ contains those where the null hypothesis is true. Let $\mathcal D$ denote the set of rejections (``discoveries'') made by a multiple testing procedure. Then, $\mathcal N \cap \mathcal D$ are the ``false discoveries''. The false discovery rate (FDR) is defined as follows:
\[
\FDR \coloneqq \expect\left[ \frac{|\mathcal D \cap \mathcal N|}{|\mathcal D| \vee 1}\right],
\]
where $a \vee b := \max(a,b)$. The goal of an FDR controlling multiple testing method is to always ensure $\FDR \leq \alpha$ for a predefined level $\alpha \in [0, 1]$.

One of the seminal results in multiple testing under arbitrary dependence was developed by \cite{benjamini_control_false_2001}. They showed that if the $K$ input p-values are arbitrarily dependent, then the Benjamini-Hochberg (BH) procedure \citep{benjamini_controlling_false_1995} is only guaranteed to control the false discovery rate (FDR) if it is run at the target level $\alpha$ divided by an approximately $\log K$ factor. We refer to this more conservative version of the BH procedure --- which still ensures FDR control under arbitrary dependence --- as the \emph{BY procedure}. The BY procedure is known to be unimprovable in the sense that there is an ``extreme'' example for which it achieves FDR equal to $\alpha$ \citep{guo_control_false_2008}. Here, we show the surprising fact that outside of this extreme example, its power can be (usually strictly) improved using independent, external randomization.

Despite the above procedure using only p-values and having nothing to do with ``e-values'', the key idea for its improvement stems from the latter concept. In particular, our technique involves a simple stochastic rounding of e-values.
A random variable $X$ is said to be an \emph{e-value} if $X$ is nonnegative and $\expect[X] \leq 1$ under the null hypothesis. E-values have been recognized as a powerful tool for hypothesis testing in the last few years \citep{wasserman_universal_inference_2020,grunwald_safe_testing_2020,shafer_testing_betting_2021,vovk_e-values_calibration_2021,howard_time-uniform_chernoff_2020}.
For example, the universal inference statistic provides an e-value for testing any composite null without regularity conditions~\citep{wasserman_universal_inference_2020}; as a result, one can now construct tests using e-values in problems where no prior test was known. One such example is testing the log-concavity of a distribution; the only known valid tests are based on e-values \citep{dunn_universal_inference_2022}. Many other examples exist \citep{ramdas_gametheoretic_statistics_2022}.

E-values have an intrinsic connection to anytime-valid inference --- ``anytime-valid'' refers to the sequential nature of sampling data, where one can choose when to stop and make an inference based on the collected data. Admissible anytime-valid tests (or confidence intervals) must be derived from nonnegative martingales, which are e-values \citep{ramdas_admissible_anytime-valid_2020}.  

Recently, \cite{wang_false_discovery_2022} showed that the BY procedure is intimately connected to the \emph{e-BH procedure}, which is the e-value analog of the BH procedure for FDR control. Our paper also improves the e-BH procedure using a similar randomization technique.
To recap the e-BH procedure, let $X_1,\dots, X_K$ be $K$ e-values (testing $K$ different null hypotheses), and
let $X_{[i]}$ refer to the $i$-th order statistic of the e-values when sorted in \emph{decreasing} order (so $X_{[i]}$ is the $i$-th largest e-value). Define
\begin{align}
k^* \coloneqq \max\left\{i \in [K]: X_{[i]} \geq \frac{K}{i \alpha}\right\},\label{eq:ebh-kstar}
\end{align}
where the max of an empty set is defined as 0. The e-BH procedure rejects the hypotheses corresponding to the $k^*$ largest e-values, and we denote this set of discoveries as $\Dcal_{\rmEBH}$.  \cite{wang_false_discovery_2022} showed that the e-BH procedure controls FDR at $\alpha$ under any arbitrary dependence between the e-values (a fact that is not true for the BH procedure based on p-values, which is why the corrected BY procedure is needed).

It turns out that one can view the e-BH procedure as selecting a data-dependent threshold
\begin{align}
    \alphath^* \coloneqq \alpha(k^* + 1) / K
\end{align}
and rejecting (i.e., including in $\Dcal_{\rmEBH}$) the $i$th hypothesis if and only if $X_i \geq 1 / \alphath^*$ (indeed, if $k^*=K$, the claim is trivially true, and if $k^*<K$, then there can only be $k^*$ hypotheses with e-values larger than $K/(\alpha(k^*+1))$, otherwise we violate the maximality of $k^*$ in its definition). Now, observe that $1 / \alphath^*$ can only take values from the grid $\{K / (k\alpha): k \in [K]\}$.
If we ``rounded'' down each $X_i$ to the closest value in the grid that is less than $X_i$ (or 0 if $X_i$ is smaller than any value in the grid), the discovery set that is output by e-BH would be identical to the one where no rounding had occurred. But if we could round \emph{up} the e-value, we would potentially gain power; however, this would inflate its expectation, and it would no longer be an e-value. Our key insight is that if we \emph{randomly} rounded every e-value up or down --- appropriately so that its expectation is unchanged --- then we could increase power with positive probability.

The fact that most e-values are continuous and will typically lie between grid points provides a broad opportunity to significantly increase the power.
Rounding to the grid was previously considered (called ``e-value boosting'') in \cite{wang_false_discovery_2022}, but such methods required knowledge of the exact distribution of the e-value $X_i$ when the null hypothesis is true; that is often infeasible, particularly when we are testing composite null hypotheses. In what follows, we develop methods that use independent external randomness to \emph{stochastically round} e-values and increase the number of discoveries made by e-BH. This idea can, in turn, be used to directly improve the power of the seminal BY procedure, which is based on p-values and may at first glance appear to have nothing to do with rounding e-values. Further, we show that stochastic rounding induces power improvements whenever e-values are utilized.

\boldparagraph{A note on the use of randomization}
While the set of randomized multiple testing procedures contains all deterministic ones, it is far from clear whether the most powerful randomized procedure is \emph{strictly} more powerful (in at least some situations) than the most powerful deterministic one.

It could be the case that randomization confers no additional benefit in every situation. In order to find and exploit the gap between deterministic and randomized procedures, one must have a clear algorithmic idea: just throwing randomness at the problem will not suddenly confer more power. To the best of our knowledge, our paper is the first to recognize that it is even possible for the BY procedure to be improved via randomization.
As an important counterpoint, we do not know of any way for randomization to improve the power of other procedures like the BH procedure under positive dependence, and indeed, we conjecture that it does not. Both the BH and BY procedures have their FDR upper bounds being achieved with equality in a particular setting; yet it appears as if the power of BY can be essentially improved in almost all other situations, while we conjecture that (without further knowledge) the BH procedure cannot be strictly improved in any situation without worsening its power in other situations. Thus, it is not the case that randomization should indeed always or ``naturally'' help in multiple testing.

Most existing methods that use randomization in multiple testing are for single hypothesis tests, e.g., discrete p-values \citep{habiger_randomised_p-values_2011,dickhaus_how_analyze_2012,dohler_unified_class_2023}, Neyman-Pearson \citep{cvitanic_generalized_neyman-pearson_2001}, permutation p-values, conformal p-values \citep{vovk_machine-learning_applications_1999,vovk_algorithmic_learning_2005a}, etc.
However, our methods are the first to consider randomization that applies uniquely in the more general multiple testing setting. All of the aforementioned techniques are orthogonal to our approach --- after randomizing individual p-values or e-values, our randomized methods would still confer additional power.
Our randomized procedures strictly improve over the corresponding deterministic procedure (e.g., reject a superset of the deterministic procedure's discovery set, produce a smaller or equal p-value, etc.). Among randomization techniques in statistics (such as the bootstrap or sample splitting), this type of uniform improvement is rare.

We also note that extrinsic randomization is widely seen across many different statistical methods. In fact, it is often crucial for enabling certain guarantees and ensuring improved precision. In addition to the examples we have already described, multiple recent works in conformal prediction rely on randomization to derive either valid or more powerful covariate conditional guarantees \citep{hore_conformal_prediction_2024,gibbs_conformal_prediction_2024}. In the post-selection inference literature, a line of work developed by \citet{tian_selective_inference_2018}, \citet{rasines_splitting_strategies_2023}, \citet{leiner_data_fission_2024}, \citet{neufeld_data_thinning_2024}, \citet{neufeld_inference_latent_2024}, and \citet{dharamshi_generalized_data_2024} involves utilizing external randomization in the selection procedure or for the generation of synthetic data so that one can ensure valid selective inference. In addition, \citet{leiner_graph_fission_2024}, \citet{dharamshi_decomposing_gaussians_2024}, and \citet{liu_cross-validation_antithetic_2024} utilize similar external randomization techniques to generate synthetic data that are then used to derive a more efficient form of cross-validation. In addition, \citet{ignatiadis_empirical_bayes_2024} note that the method for empirical Bayes estimation from \citet{ignatiadis_empirical_bayes_2023a}, which originally assumed multiple samples are collected per unit, can be utilized when synthetic copies of data are generated using external randomization from a single sample for each unit.

When applying randomization, one natural concern that may arise is whether the results are reproducible. This concern can be addressed by constructing a superuniform random variable through sources of randomness derived directly from the data itself (e.g., the rank of the first sample in the data divided by 1 more than the total number of samples when the data is exchangeable), as noted in \citet[Section 10.6]{ramdas_randomized_exchangeable_2023}. Regardless of whether the above or other solutions are satisfactory, the earlier paragraph shows that the community has embraced external randomization in a huge variety of contexts, even just in recent years. It is certainly of interest to examine the curious phenomenon that a simple uniform random variable can exploit power that is left on the table by the original deterministic procedure.

\subsection{Summary of contributions}
Having explained our new ideas abstractly above, we now concisely highlight the key improvements in multiple testing methodology that we obtain through randomization. In the sequel, let $U$ be a uniform random variable over $[0, 1]$ independent of any e-values or p-values.
\boldparagraph{More powerful e-values via stochastic rounding} Our first contribution is formulating \emph{stochastic rounding}, a method for randomizing a e-value that will only increase its power over the original e-value for any hypothesis test. We generalize this to apply to situations where the threshold for rejection is not fixed, but data-adaptive.
Previous work on randomizing e-values \citep{ramdas_randomized_exchangeable_2023,ignatiadis_e-values_unnormalized_2022} ultimately produce a randomized p-value. However, in many applications, maintaining an e-value is critical to ensuring important robustness properties. Our primary example of interest is in the use of e-values for multiple testing with FDR control --- the e-BH procedure that uses e-values is robust to arbitrary dependence, while the BY procedure for p-values under arbitrary dependence pays a test level penalty that is proportional to $\log K$. Thus, \emph{we develop the first randomized e-value that is never less (and usually more) powerful than the original e-value}. Further, our randomized e-value is equally as powerful as randomized p-values developed in previous work \citep{ramdas_randomized_exchangeable_2023} for a single hypothesis test.
\boldparagraph{More powerful multiple testing and post-selection inference under arbitrary dependence} Each of the randomized procedures satisfies two main properties: (1) the procedure will never be worse (where ``worse'' is defined based on the problem, e.g., fewer discoveries in multiple testing) than the deterministic procedure which it is derived from and (2) under no conditions or some weak regularity conditions on the distribution, the randomized procedure is better with positive probability (e.g., more discoveries in multiple testing, while still having the same error rate guarantee). Thus, the power increase from randomization does not result in any kind of tradeoff or cost, and is a ``strict'' improvement in this sense.

For all of our multiple testing procedures, randomization is a way to increase discoveries in a way that is proportionate to the accumulated evidence --- larger e-values and smaller p-values that were not rejected by e-BH and BY, respectively, have larger probabilities of being rejected by the randomized procedures.

\begin{enumerate}
    \item \textit{Improving e-BH through stochastic rounding:} One of our core contributions is a new method, \UEBH, that has the following simple characterization:
    \begin{gather}
        \text{Apply BH to }(U / X_1, \dots, U / X_K). \tag*{(\UEBH)}
    \end{gather} \UEBH\ dominates the standard e-BH procedure:
    \begin{gather}
        \text{Apply BH to }(1 / X_1, \dots, 1 / X_K). \tag*{(e-BH)}
    \end{gather} Both procedures retain valid FDR control under arbitrary dependence, which we prove in \Cref{thm:jrebh}. Notably, \UEBH\ will \emph{only enlarge} the discovery set that is rejected by e-BH, i.e., rejections made by e-BH will always be a subset of the rejections produced by \UEBH. We prove its FDR control by developing a notion called ``stochastic rounding'' and we use this to show that many other randomized methods (generalizing \UEBH) have FDR control and will also only enlarge the e-BH discovery set.

    \item \textit{Randomized version of the BY procedure for p-values:} We can improve the BY procedure for multiple testing with FDR control under arbitrary dependence through randomization.
Let $(P_1, \dots, P_K)$ be $K$ arbitrarily dependent p-values and $P_{(i)}$ denote the $i$th smallest p-value. The BY procedure finds $k^*_{\BY}$ discoveries through a step-up procedure, and produces a discovery set $\Dcal_{\BY}$ that contains the hypotheses with the $k^*_{\BY}$ smallest p-values. These quantities are formulated as follows:
\begin{align}
    k^*_{\text{\BY}} \coloneqq \max \left\{i \in [K]: P_{(i)} \leq \frac{\alpha i}{K\ell_K}\right\},\
    \Dcal_{\text{BY}} \coloneqq \left\{i \in [K]: P_i \leq \frac{\alpha k^*_{\BY}}{K\ell_K}\right\}. \tag*{(\text{BY})}
\end{align}
When we have access to an random variable $U$ that is uniformly distributed on $[0, 1]$ and independent of $(P_1, \dots, P_K)$, we define a novel procedure, the \RBY\ procedure, which makes $k^*_{\text{\RBY}}$ discoveries, and produces a discovery set $\Dcal_{\text{\RBY}}$ contains the hypotheses with the $k^*_{\text{\RBY}}$ smallest p-values:
\begin{align}
    k^*_{\text{\RBY}} \coloneqq \max \left\{i \in [K]: P_{(i)} \leq \frac{\alpha\lfloor i / U\rfloor}{K\ell_K}\right\},\  \Dcal_{\text{\RBY}} &\coloneqq \left\{i \in [K]: P_i \leq \frac{\alpha k^*_{\text{\RBY}}}{K\ell_K}\right\}. \tag*{(\text{\RBY})}
\end{align}Since $U\leq 1$, $\lfloor k / U \rfloor \geq k$ for every $k \in [K]$, and the \RBY\ procedure never produces fewer discoveries than the BY procedure and will typically make more discoveries in most settings.
We prove FDR control of the \RBY\ procedure and provide sufficient conditions for strictly greater power in \Cref{thm:rounded-by}.
The FDR control is a result of applying p-to-e calibration and subsequently utilizing the FDR control of our randomized e-value procedures. We also prove a stronger version of the superuniformity lemma \citep{blanchard_two_simple_2008,ramdas_unified_treatment_2019} in \Cref{lemma:rand-superuniform} for arbitrarily dependent p-values. As a result, we can derive randomized versions of multiple testing procedure that uses reshaping functions with provable FDR control in \Cref{thm:reshaped-rand-by}.

    \item \textit{Randomized Hommel for global null testing:} Our randomization techniques are also useful for deriving randomized p-value merging (p-merging) procedures (i.e., for global null testing). We can randomize the Hommel procedure \citep{simes_improved_bonferroni_1986} to formulate a randomized U-Hommel p-value that is always upper bounded by (i.e., at least as powerful as) the typical Hommel p-value. In fact, we can use stochastic rounding to improve any admissible e-merging or homogeneous p-merging procedure, as characterized by \cite{wang_only_admissible_2025} and \cite{vovk_admissible_ways_2022}, respectively, in \Cref{sec:admissible}.
Our randomized U-Hommel procedure and its deterministic analog, the Hommel procedure, are formulated as follows:
\begin{align}
     P_{\UHommel} \coloneqq \min_{i \in [K]} \frac{P_{(i)}K \ell_K}{(\lfloor i / U\rfloor \wedge K)}, \qquad P_{\Hommel} \coloneqq \min_{i \in [K]} \frac{P_{(i)}K \ell_K}{i}.
\end{align} We show that the U-Hommel p-value also possesses Type I error control in \Cref{thm:uhommel} while never being larger and often smaller than the Hommel p-value, also develop an improvement in its corresponding closed testing procedure.
\item \textit{Rounding for post-selection inference with false coverage rate (FCR) control.} In addition to our results for multiple hypothesis testing, we extend the results of our rounding techniques to post-selection inference with false coverage rate (FCR) control, where FCR is the expected proportion of CIs that are selected (based on the data), which do not cover the true parameter. This includes the e-BY procedure \citep{xu_post-selection_inference_2022}, which solely uses e-CIs, as well as the classic post-selection procedure of \citet{benjamini_false_discovery_2005}, which we refer to as the CI-BY procedure. Our rounded versions of these two procedures hold under arbitrary dependence among the CIs and any (perhaps unknown) choice of selection rule. Here, the rounding results in narrower CIs as a result.
\end{enumerate}

\begin{figure}[h]
    \centering
    \includegraphics[width=0.9\textwidth]{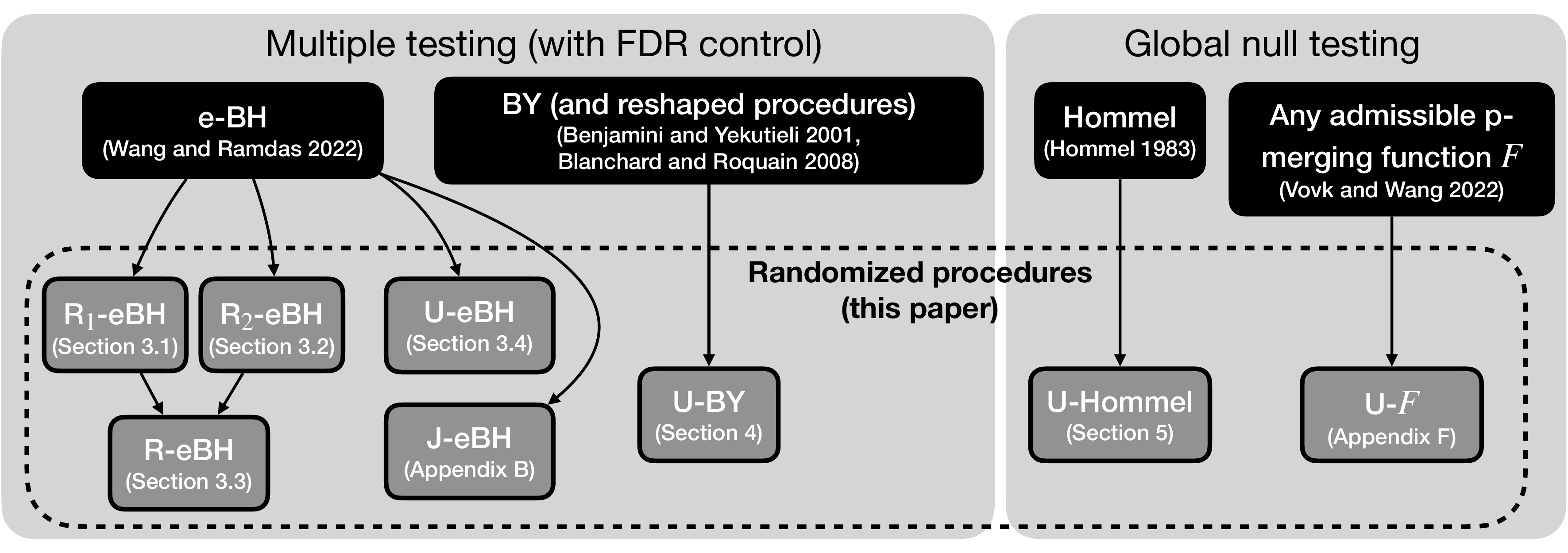}
    \caption{Summary of procedures we improve through randomization in this paper. All randomized procedures retain the validity of their deterministic counterparts, i.e., FDR control for multiple testing and Type I error control for global null testing.}
    \label{fig:contributions}
\end{figure}

\Cref{fig:contributions} summarizes our results for randomized multiple testing procedures. The remainder of the paper is organized as follows. In \Cref{sec:stoch-round}, we introduce the notion of stochastic rounding. In \Cref{sec:rand-ebh}, we use the notion of stochastic rounding and describe different ways in which it can be combined with the e-BH procedure to improve power. We formulate the \RBY\ procedure---the randomized version of the BY procedure---in \Cref{sec:rand-by}. \Cref{sec:rand-simes} contains our derivation of the U-Hommel procedure that randomizes the Hommel procedure.
We show that our methods have more power than their deterministic counterparts in numerical simulations in \Cref{sec:Simulations}, and if the sole goal of a practitioner was to maximize power, one should seriously consider using randomization to make more discoveries.

\subsection{Related work}

Prior work for boosting e-values to improve the power of e-BH in \citet{wang_false_discovery_2022} required precise knowledge of the distribution of e-values or their dependence structures (i.e., no longer assuming arbitrary dependence) to increase power. Similarly, \citet{fithian_conditional_calibration_2022} derive a method for improving the power of the BH procedure when the conditional distribution of all p-values given a sufficient statistic is known. Hence, both of these methods require additional information --- the weakest requirement \citep{wang_false_discovery_2022} is for the marginal distribution of each e-value, and the strongest \citep{fithian_conditional_calibration_2022} requires knowing the conditional distribution of all p-values. Subsequent work that builds on conditional calibration \citep{luo_improving_knockoffs_2024,lee_boosting_e-bh_2024a} inherits this limitation. In contrast, our improvements are \emph{assumption-free} --- we require no additional knowledge about the distribution of the e-values or p-values, nor about their dependence structure. In addition, our development of the stochastic rounding concept for e-values allows us to not only improve e-BH in every situation but is also method agnostic in the sense that it improves any procedure that utilizes e-values (e.g., the follow-up work of \citet{xu_online_multiple_2023} directly uses stochastic rounding to improve power in the online setting).

\section{Stochastic rounding of e-values}\label{sec:stoch-round}

Define $\reals_+$ as the nonnegative reals, and $\preals \coloneqq \reals_+ \cup \{+\infty\}$, and note the latter is closed set.
Consider any closed subset $\Gcal \subseteq \preals$ (where $\Gcal$ stands for ``grid'' since $\Gcal$ will often be countable or finite below).  Denote $g_* \coloneqq \inf\{x: x\in \Gcal\}$ and $g^* \coloneqq \sup\{x: x \in \Gcal\}$, and note that $g_*,g^* \in \Gcal$ since $\Gcal$ is closed. Further, for any $x \in [g_*, g^*]$, denote
\begin{align}
x^+ \coloneqq \inf\{y \in \Gcal: y \geq x\}, \qquad x_- \coloneqq \sup\{y \in \Gcal: y \leq x\},
\end{align} and note that $x^+, x_- \in \Gcal$ with $x^+ \geq x$ and $x_- \leq x$, and if $x \notin \Gcal$, then $x_- < x < x^+$.

Now define the stochastic rounding of $x \in \preals$ onto $\Gcal$, denote $S_\Gcal(x)$, as follows. If $x < g_*$, $x > g^*$, $x \in \Gcal$, or $x^+ = \infty$ then define $S_\Gcal(x)=x$.
Otherwise, define
\begin{align}
    S_\Gcal(x) =
\begin{cases}
 x_- \text{ with probability } \frac{x^+ - x}{x^+ - x_-}, \\
 x^+ \text{ with probability } \frac{x- x_-}{x^+ - x_-}.
\end{cases}
\label{eq:StochasticRounding}
\end{align}
Note that $S_\Gcal(x)$ need not lie in $\Gcal$, because if $x$ lies outside the range of $\Gcal$ then it is left unchanged. Also note that when $\Gcal = \preals$, we have $S_\Gcal(x)=x$ for all $x\in\preals$. \added{More generally, for any grid $\Gcal$, if $x \in \Gcal$, then $S_\Gcal(x) = x$.}

\begin{proposition}\label{prop:stoch-round-e}
If $X$ is an e-value, then $S_\Gcal(X)$ is also an e-value for any $\Gcal$. Further, $\expect[X] = \expect[S_\Gcal(X)]$.
\end{proposition}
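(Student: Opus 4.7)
The plan is to show both claims simultaneously by proving the stronger pointwise statement $\expect[S_\Gcal(X) \mid X = x] = x$ for every $x$ in the support of $X$, and then invoke the tower property together with nonnegativity.

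First I would verify that $S_\Gcal(X) \geq 0$ almost surely. Since $\Gcal \subseteq \preals$, we have $g_* \geq 0$, so in the deterministic branches of the definition ($x < g_*$, $x > g^*$, $x \in \Gcal$, or $x^+ = \infty$) the output equals $x \geq 0$, and in the random branch the two possible outputs $x_-, x^+$ lie in $\Gcal \subseteq \preals$ and hence are nonnegative. This handles the nonnegativity requirement for $S_\Gcal(X)$ being an e-value, so it remains only to prove the expectation identity $\expect[S_\Gcal(X)] = \expect[X]$, which will simultaneously yield $\expect[S_\Gcal(X)] \leq 1$.

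Next I would compute the conditional expectation pointwise. In each of the deterministic branches we trivially have $\expect[S_\Gcal(X) \mid X = x] = x$. In the random branch, where $g_* \leq x_- < x < x^+ \leq g^*$ with $x^+ < \infty$, a direct calculation gives
\begin{align*}
\expect[S_\Gcal(X) \mid X = x] = x_- \cdot \frac{x^+ - x}{x^+ - x_-} + x^+ \cdot \frac{x - x_-}{x^+ - x_-} = \frac{x(x^+ - x_-)}{x^+ - x_-} = x,
\end{align*}
where I have used that the external randomization used to carry out the stochastic rounding is independent of $X$. This is the whole content of the proof; it is the simple fact that $x$ is the unique convex combination of $x_-$ and $x^+$ whose mean is $x$, and the definition of $S_\Gcal$ is constructed precisely to realize this combination.

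Finally, applying the tower property yields $\expect[S_\Gcal(X)] = \expect[\expect[S_\Gcal(X) \mid X]] = \expect[X] \leq 1$, so $S_\Gcal(X)$ is nonnegative with mean at most one and hence an e-value. I do not anticipate any real obstacle; the only care needed is to ensure measurability of the random branch and to explicitly note the independence of the rounding coin from $X$, both of which are standard. The edge cases (where $x$ is outside $[g_*, g^*]$, equals a grid point, or satisfies $x^+ = \infty$) are handled automatically by the deterministic branch of the definition.
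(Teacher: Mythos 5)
Your proof is correct and follows the same route as the paper, which simply observes that by design $\expect[S_\Gcal(x)] = x$ for every fixed $x$ (the convex-combination identity you compute) and then takes expectation over $X$. Your version just spells out the nonnegativity check, the deterministic branches, and the tower-property step that the paper leaves implicit.
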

The proof is simple: by design, $\expect[S_\Gcal(x)] = x$ for any real $x$, and thus when applied to any random variable, it leaves the expectation unchanged.
The important property for us will be that $S_\Gcal(X)$ can be larger than $X$ with (usually) positive probability, since it can get rounded up to $X^+$, and is at least $X_-$, even when rounded down.

\subsection{More powerful hypothesis testing via stochastic rounding}
In typical hypothesis testing with e-values, one ensures type I error control at level $    \alpha \in [0, 1]$ by checking if an e-value $E$ is greater than a fixed test level threshold $\alpha^{-1}$. This follows immediately from Markov's inequality and the definition of an e-value. Stochastic rounding allows us to define an e-value $S_{\alpha}(X)$ (which is how we denote $S_{\Gcal}(X)$ with $\Gcal =\{0, \alpha^{-1}, \infty\}$) that uniformly improves the power (i.e., rejection probability), and can quantify exactly the power improvement gained by $S_\alpha(X)$.
\begin{proposition}\label{prop:stoch-round-power}
    $X \geq \alpha^{-1}$ implies that $S_\alpha(X) \geq \alpha^{-1}$. Further, under any distribution (i.e., either null or alternative), the following is true.
\begin{align}
    \prob{S_\alpha(X) \geq \alpha^{-1}}  - \prob{X \geq \alpha^{-1}} = \alpha \cdot \expect[X \cdot \ind{X < \alpha^{-1}}].
\end{align}
\end{proposition}
\begin{proof}
    The implication simply follows from $S_\alpha(X) = X$ for all $X \geq \alpha^{-1}$.

    Let $B_X \mid X \sim \text{Bern}(\alpha X \wedge 1)$. Then, we can rewrite the difference in power of $S_\alpha(X)$ and $X$ as the following:
    \begin{align}
        \prob{S_\alpha(X) \geq \alpha^{-1}} - \prob{X \geq \alpha^{-1}} &= \prob{X < \alpha^{-1} \text{ and } B_X = 1}
                                                                        =\expect[B_X \cdot \ind{X < \alpha^{-1}}]\\
                                                                        &=\expect[\expect[B_X \mid X] \cdot \ind{X < \alpha^{-1}}]
                                                                        =\alpha \cdot \expect[X \cdot \ind{X < \alpha^{-1}}].
    \end{align}
    The third equality is by iterated expectation, and the last is by definition of $B_X$.
\end{proof}
One can view this as the e-value analog of the p-value derived uniformly from the randomized Markov's inequality in \citet{ramdas_randomized_exchangeable_2023}. However, we are able to achieve our power improvements by directly deriving a more powerful e-value --- this allows us to directly plug in these stochastically rounded e-values into any procedures that utilize e-values. This is an advantage in the multiple testing and post-selection inference setting since p-value procedures often have stringent requirements about the dependence between test statistics.

\subsection{Stochastic rounding to data-dependent thresholds}
In the stochastic rounding procedure we described previously, $\Gcal$ is a fixed set of values that is determined without any reference to the data. In the vast majority of multiple testing and selective inference procedures, however, the test level at which we would reject a hypothesis is data-dependent, e.g., the e-BH procedure's rejection level is dependent on all e-values, including the e-value that is being tested. Thus, an \emph{adaptive stochastic rounding} procedure stochastically rounds any input e-value $X$ to meet a data-dependent threshold $\widehat{\alpha}^{-1}$ for some $\widehat\alpha \in [0, 1]$.

Let $S_{\widehat{\alpha}}$ be shorthand for $S_{\Gcal(\widehat\alpha)}$ where $\Gcal(\widehat\alpha) = \{0, \widehat{\alpha}^{-1},\infty\}$. Note that $\widehat{\alpha}$ (and the $\Gcal(\widehat\alpha)$ we just defined) are random variables that can be arbitrarily dependent with an e-value $X$.

\begin{proposition}
\label{prop:dynamic-round-e}
     For any e-value $X$, and random variable $\widehat{\alpha} \in [0, 1]$ that possibly depends on $X$, $S_{\widehat{\alpha}}(X)$ is also an e-value and satisfies $\expect[S_{\widehat{\alpha}}(X)] = \expect[X]$.
\end{proposition}

To show \Cref{prop:dynamic-round-e} is true, notice that another way of writing $S_{\widehat{\alpha}}$ is as follows:
\begin{equation}\label{eq:rand-e}
    S_{\widehat{\alpha}}(X) = (X \cdot \ind{X \geq \widehat{\alpha}^{-1}}) \vee (\widehat\alpha^{-1} \cdot \ind{X \geq U\widehat\alpha^{-1}}),
\end{equation}
where $U$ is a uniform random variable over $[0, 1]$ that is independent of $X$ and $\widehat\alpha$, and we treat $0 / 0 = 0$. Indeed, we can check that $\expect[S_{\widehat{\alpha}}(X)] = \expect[X]$ by first taking expectation with respect to $U$ (while conditioning on $X$) and then with respect to $X$. 

\begin{proof}
We rewrite $S_{\widehat\alpha}(X)$ into the following equivalent form:
\begin{align}
    S_{\widehat\alpha}(X) = X \cdot \ind{X \geq \widehat{\alpha}^{-1}} + \underset{A}{\underbrace{\widehat{\alpha}^{-1} \cdot \ind{\widehat{\alpha}^{-1}
    > X \geq U\widehat{\alpha}^{-1}}}}, \label{eq:RandomERewrite}
\ifarxiv{}{\vspace{-5pt}}
\end{align}
where $U \sim \text{Uniform}[0, 1]$ is independent of $X$ and $\widehat{\alpha}$.
First, we derive the following equality for the expectation of $A$ as indicated in \eqref{eq:RandomERewrite}:
\begin{align}
    \expect[A]
    &=\expect[\expect[\widehat{\alpha}^{-1} \cdot \ind{\widehat{\alpha}^{-1} > X \geq U\widehat{\alpha}^{-1}} \mid X, \widehat{\alpha}]]
    = \expect[\widehat{\alpha}^{-1} \cdot \ind{X < \widehat{\alpha}^{-1}} \prob{U \leq \widehat{\alpha} X \mid X, \widehat{\alpha}}]\\
    &\labelrel{=}{rel:UniformInd} \expect[\widehat{\alpha}^{-1} \cdot \ind{X < \widehat{\alpha}^{-1}} \cdot (\widehat{\alpha} X \wedge 1)] = \expect[X \cdot \ind{X < \widehat{\alpha}^{-1}}]
    \label{eq:AUpperBound},
\end{align} where \eqref{rel:UniformInd} is by uniform distribution and independence from $X$ and $\widehat{\alpha}$ of $U$, and the last equality is by $\ind{X < \widehat{\alpha}^{-1}} = \ind{\widehat{\alpha} X < 1}$.

Under the null, we can simply upper bound the expectation of $S_{\widehat{\alpha}}(X)$ as follows.
\begin{align}
    \expect[S_{\widehat\alpha}(X)] = \expect[X\cdot\ind{X \geq \widehat{\alpha}^{-1}}] + \expect[X \cdot \ind{X < \widehat{\alpha}^{-1}}] = \expect[X] \leq 1, \label{eq:add-parts}
\end{align} where the first equality is by application of \eqref{eq:RandomERewrite} and \eqref{eq:AUpperBound}. Hence, we have shown our desired result. Note that if $U$ were to be a superuniform (stochastically larger than uniform) random variable, the last equality in \eqref{eq:add-parts} would be an inequality.
\end{proof}
We can derive a power statement for this e-value as well.
\begin{proposition}\label{prop:adaptive-round-power}
    $X \geq \widehat\alpha^{-1}$ implies that $S_{\widehat\alpha}(X) \geq \widehat\alpha^{-1}$. Under any distribution (i.e., either null or alternative), the following is true:
\begin{align}
    \prob{S_{\widehat\alpha}(X) \geq \widehat\alpha^{-1}}  - \prob{X \geq \widehat\alpha^{-1}} = \expect[\widehat\alpha X \cdot \ind{X < \widehat\alpha^{-1}}].
\end{align}
\end{proposition}
The proof follows from a similar argument to the one for \Cref{prop:stoch-round-power}.
\added{
This result quantifies precisely how much power is gained by stochastic rounding even when the test threshold is data-dependent. The power improvement is the expected product of the test level and the e-value when the e-value is below the test threshold. Notably, no power improvement will result from stochastic rounding when $X \geq \hat\alpha^{-1}$. 
}

Note that the boosting method considered in \citet{wang_false_discovery_2022} requires precise knowledge of the distribution of an e-value and the dependence structure between e-values to improve its power. In contrast, stochastic rounding requires no knowledge of the underlying distribution and hence can be applied in every situation where e-values are used.
\begin{remark}[Rounding trades off e-power for power]
    Stochastic rounding trades off between the notion of ``e-power'' that has been used to characterize the power of an e-value, and the classic notion of power in statistical inference, i.e., the probability of rejection. Recall that e-power of an e-value $X$ is $\expect[\log X]$, \added{where the expectation is taken under the alternative}. For any e-value, the following is true from Jensen's inequality \added{(for any underlying distribution)} and the fact that $\expect[S_{\hat\alpha}(X)] \leq \expect[X]$.
    \begin{align}
        \expect[\log S_{\hat\alpha}(X)] = \expect[\expect[\log S_{\hat\alpha}(X) \mid X]]  \leq \expect[\log \expect[S_{\hat\alpha}(X) \mid X]] \leq \expect[\log X]
    \end{align} Thus, \added{under the alternative,} the e-power of a stochastically rounded e-value is always at most that of the original e-value, while its power (at level $\alpha$) is at least that of the original. In short, stochastic rounding trades off e-power for power.
\end{remark}
\added{
    One can choose to perform stochastic rounding in a post hoc fashion. That is, one can choose to view all the data (i.e., the original e-value $X$ and the threshold $\hat\alpha$) before deciding whether to apply stochastic rounding or not. Formally, define a binary random variable $T \in \{0, 1\}$ that is a function of $X$ and $\hat\alpha$, and define the post hoc stochastic rounding procedure as:
    \begin{align}
        \widetilde{S}_{T, \hat\alpha}(X) = T \cdot S_{\hat\alpha}(X) + (1 - T) \cdot X.
    \end{align}
    The same arguments for validity as seen in \Cref{prop:dynamic-round-e} still apply since $T$ is measurable w.r.t.\ $X$ and $\hat\alpha$. Of course, one cannot decide whether to round or not after peeking at the external randomness in $S_{\hat\alpha}(X)$. 
}
\added{
\begin{remark}[Rounding helps decision making, but not evidence accumulation]
    In a sequential setting where we have a e-process (a nonnegative process that is an e-value at any stopping time), one could choose to round after deciding to stop, and this would increase power if we stopped before reaching a rejection threshold.  
    However, when e-values are being aggregated by averaging or multiplication (say within the construction of the e-process), it may not be beneficial to stochastically round; for example, if the grid contains zero, the product could become zero.
    Nevertheless, if the aggregated e-value must be converted into a rejection decision, we can still use the stochastically rounded e-value to gain power. Thus, one would typically only perform the rounding when data collection is complete in order to improve the power of the subsequent decision-making procedure, but would not apply it to e-values that are used or produced in the midst of an ongoing analysis. In short, rounding is beneficial for decision-making, but not for aggregating evidence.  
\end{remark}
}
\section{Randomization improves the e-BH procedure}
\label{sec:rand-ebh}
Below, we show how to (usually strictly) improve the power of the e-BH procedure through stochastic rounding of e-values.
This notion lies at the core of our randomized procedures.
In e-BH (among other testing procedures), this means that the rejection behavior would not change even if $X$ was rounded down. Let
\begin{gather}
    \alpha_i \coloneqq \alpha i / K, \qquad \Kcal \coloneqq \{\alpha_i^{-1}: i \in [K]\} \cup \{0, \infty\}
\end{gather}
denote the set of possible levels that e-BH may reject e-values at and in addition to $0$ and $\infty$. If $X \geq K / (\alpha k)$ for some $i \in [K]$, then $S_\Kcal(X) \geq K / (\alpha k)$ as well. Thus, a stochastically rounded e-value can only improve power when used in conjunction with e-BH.

\subsection{Warm-up: the stochastically rounded \RoneEBH\  procedure}

The \emph{\RoneEBH\  procedure} simply applies the e-BH procedure to the set of e-values $\{S_{\Kcal}(X_{k})\}_{i \in [K]}$.
Let $\Dcal_1$  be the set of rejections made by \RoneEBH. Recall that $k^*$ is the number of discoveries made by e-BH as defined in \eqref{eq:ebh-kstar}.
We can prove the following guarantee for \RoneEBH.
\begin{theorem}
\label{thm:grid-ebh}
For any arbitrarily dependent e-values $(X_1, \dots, X_K)$, the \RoneEBH\ procedure ensures $\FDR \leq \alpha$ and $\Dcal_1\supseteq \Dcal_{\rmEBH}$. Further, $\prob{\Dcal_1 \supset \Dcal_{\rmEBH}} > 0$, i.e., the probability that \RoneEBH\ makes extra discoveries over e-BH is positive, if and only if
\begin{align}
    \prob{\exists k \in [K - k^*]: X_{[k^* + j]} > K/(\alpha(k^* + k + 1)) \text{ for all }j \in [k]} > 0. \label{eq:kcal-cond}
\end{align}
\end{theorem}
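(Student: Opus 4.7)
The plan is to prove the three assertions of Theorem \ref{thm:grid-ebh} in sequence. For FDR control, I would invoke Proposition \ref{prop:stoch-round-e} to conclude that each $S_\Kcal(X_i)$ is an e-value, so that \RoneEBH\ is just the standard e-BH procedure applied to the (arbitrarily dependent) collection $(S_\Kcal(X_1), \ldots, S_\Kcal(X_K))$. The FDR control of e-BH under arbitrary dependence then immediately gives $\FDR \leq \alpha$.

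For the inclusion $\Dcal_1 \supseteq \Dcal_{\rmEBH}$, the critical observation is that the e-BH rejection level $K/(\alpha k^*)$ lies in the grid $\Kcal$. Hence, for any $i \in \Dcal_{\rmEBH}$, we have $X_i \geq K/(\alpha k^*)$, so $(X_i)_- \geq K/(\alpha k^*)$, and thus both possible outcomes of stochastic rounding satisfy $S_\Kcal(X_i) \geq K/(\alpha k^*)$. Consequently, at least $k^*$ rounded e-values lie above $K/(\alpha k^*)$, forcing $k_1^* \geq k^*$, where $k_1^*$ denotes the analogue of $k^*$ for the rounded inputs. The \RoneEBH\ rejection threshold $K/(\alpha(k_1^* + 1)) \leq K/(\alpha k^*)$ then automatically includes every hypothesis in $\Dcal_{\rmEBH}$.

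For the iff characterization, observe that $\Dcal_1 \supset \Dcal_{\rmEBH}$ is equivalent to $k_1^* > k^*$, i.e., $k_1^* \geq k^* + k$ for some $k \in [K - k^*]$. Since the top $k^*$ original e-values automatically satisfy $S_\Kcal(X_{[i]}) \geq K/(\alpha k^*) \geq K/(\alpha(k^* + k))$, the event $k_1^* \geq k^* + k$ reduces to requiring that at least $k$ of the remaining e-values $X_{[k^* + 1]}, \ldots, X_{[K]}$ round up to a value $\geq K/(\alpha(k^* + k))$. The key sub-claim is: $\prob{S_\Kcal(X_{[k^* + j]}) \geq K/(\alpha(k^* + k))} > 0$ iff $X_{[k^* + j]} > K/(\alpha(k^* + k + 1))$, since this is precisely the condition for $(X_{[k^* + j]})^+$ to reach $K/(\alpha(k^* + k))$. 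Combining this sub-claim with monotonicity of sorted e-values (so that if some $X_{[k^* + j]}$ exceeds $K/(\alpha(k^* + k + 1))$, so do all $X_{[k^* + i]}$ with $i \leq j$) and independence of the rounding applied to distinct indices yields both directions: for ($\Leftarrow$), one conditions on the positive-probability event in \eqref{eq:kcal-cond} and uses independence so that all $k$ relevant roundings simultaneously succeed with positive probability; for ($\Rightarrow$), one uses sortedness to push the positive-probability condition onto the top $k$ of the remaining e-values.

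The main obstacle will be the careful bookkeeping of the grid $\Kcal$, which requires identifying $(X_{[k^* + j]})^+$ and $(X_{[k^* + j]})_-$ in each cell of the grid and verifying the sub-claim's ``iff'' rigorously, including boundary cases near $g_*$ and $g^*$. A secondary subtlety is making explicit the independence of the rounding randomizations across indices, which powers the positive-probability intersection in the ($\Leftarrow$) direction.
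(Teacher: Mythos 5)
Your proposal follows essentially the same route as the paper: FDR control via \Cref{prop:stoch-round-e} plus e-BH applied to the rounded e-values, the inclusion $\Dcal_1 \supseteq \Dcal_{\rmEBH}$ via $S_\Kcal(X_i) \geq (X_i)_-$ together with the fact that the e-BH threshold lies on the grid, and the iff via the observation that $(X_{[k^*+j]})^+$ reaches the target level exactly when $X_{[k^*+j]}$ exceeds the preceding grid point; the paper's appendix argument is the same reduction, with your ``condition on the positive-probability event and intersect the independent roundings'' step carried out there via an explicit $\varepsilon$-buffer event $\mathcal{E}_\varepsilon$ (either way works, since a pointwise-positive conditional success probability integrated over a positive-probability event is positive). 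One caution on the point you yourself flag: your sub-claim is stated as ``$\prob{S_\Kcal(X_{[k^*+j]}) \geq K/(\alpha(k^*+k))} > 0$ iff $X_{[k^*+j]} > K/(\alpha(k^*+k+1))$,'' but when $k^*+k = K$ the grid predecessor of $K/(\alpha K)$ in $\Kcal$ is $0$ (the value $K/(\alpha(K+1))$ is not a grid point), so any $x \in (0, K/(\alpha(K+1))]$ still rounds up to $K/(\alpha K)$ with probability $\alpha x > 0$ and the ``only if'' half of the sub-claim fails there; this boundary case is glossed over in the paper's own necessity argument as well, so it is not a defect of your approach relative to the paper's, but it is the one place where the literal iff in \eqref{eq:kcal-cond} needs repair rather than mere bookkeeping.
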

The proof follows from the fact that if $X_i$ ever takes on a value that is between levels in $\Kcal$, the e-BH procedure will reject at the same level (and make the same rejections) as if $(X_i)_{-}$ were substituted in its place. Stochastic rounding guarantees that $S_{\Kcal}(X_i) \geq (X_i)_{-}$ almost surely, so it can only increase the number of rejections. Further, when $X_i$ is between two levels in $\Kcal$, then $S_{\Kcal}(X_i) =  (X_i)_{+} > X_i$ with positive probability, which leads to rejecting hypotheses that e-BH did not reject. This intuition leads us to the condition in \eqref{eq:kcal-cond} for which \RoneEBH\ has strictly more power than e-BH. We defer the full proof to \Cref{sec:ebh-improve}.

It is worth remarking that \eqref{eq:kcal-cond} is an extremely weak condition that would be very frequently satisfied. For example, if the e-values are independent and continuously distributed over $[0, K\alpha^{-1}]$ (or a larger interval), then \eqref{eq:kcal-cond} will hold. As an explicit example, if the data $Z_i$ for testing the $i$-th hypothesis are Gaussian with variance $\sigma^2$, and we are testing whether the mean of $Z_i$ is nonpositive (or equal to zero) against the alternative that the mean is positive, \citet{ramdas_admissible_anytime-valid_2020} show that all admissible e-values will be mixtures of likelihood ratios between a positive mean Gaussian and a zero mean Gaussian: these likelihood ratio e-values take the form $\exp(\lambda Z_i - \lambda^2\sigma^2/2)$ for $\lambda > 0$, which are clearly continuous and unbounded, as are many other e-values for testing parametric and nonparametric hypotheses. The independence mentioned above is far from necessary, but it is sufficient to ensure that the probability in \eqref{eq:kcal-cond} is not pathologically equal to zero due to some awkward worst-case dependence structure.
\begin{remark}
    In \Cref{sec:general-stoch-round}, we propose a generalized version of rounding, where one stochastically rounds an input, $x$, by sampling from a mixture distribution over all values in the grid $\Gcal$ that are larger than $x$, instead of only sampling from the point mass at $x^+$.
\end{remark}

\subsection{Data-dependent rounding for the \RtwoEBH\ procedure}
We define the \emph{\RtwoEBH\ procedure} as the procedure that rejects the $i$th hypothesis when $S_{\alphath^*}(X_i) \geq 1 / \alphath^*$ (recall that $1 / \alphath^*$ was the threshold of rejection for e-BH). Let $\Dcal_2$ be the resulting discovery set and $k^*_2 \coloneqq |\Dcal_2|$.
\begin{theorem}
     \label{thm:rand-ebh}
    For any arbitrarily dependent e-values $(X_1, \dots, X_K)$, the \RtwoEBH\ procedure ensures $\FDR \leq \alpha$ and $\Dcal_2 \supseteq \Dcal_{\rmEBH}$. Further, $\prob{\Dcal_2 \supset \Dcal_{\rmEBH}
    }> 0$ if and only if \begin{align}
    \prob{\exists i \in [K]: X_i \in (0, (\widehat\alpha^*)^{-1})} > 0. \label{eq:improve-cond}
\end{align}
\end{theorem}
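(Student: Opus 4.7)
My plan is to mirror the standard e-BH FDR argument, applied to the rounded e-values $\tilde X_i \coloneqq S_{\alphath^*}(X_i)$, and let $k_2^* \coloneqq |\Dcal_2|$ throughout. The containment $\Dcal_2 \supseteq \Dcal_{\rmEBH}$ is immediate from property~\eqref{eq:dynamic-rej-property}: whenever $X_i \geq 1/\alphath^*$ we automatically have $\tilde X_i \geq 1/\alphath^*$, so every e-BH rejection is inherited by \RtwoEBH.

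The main technical step is the pointwise bound
\[
\text{for every } i \in \Dcal_2, \qquad \tilde X_i \geq \frac{K}{\alpha k_2^*},
\]
which I would establish by a two-case argument comparing $k_2^*$ to $k^*$ (we always have $k_2^* \geq k^*$ by the containment). If $k_2^* \geq k^*+1$, then $K/(\alpha k_2^*) \leq K/(\alpha(k^*+1)) = 1/\alphath^* \leq \tilde X_i$ directly. Otherwise $k_2^* = k^*$, in which case $\Dcal_2 = \Dcal_{\rmEBH}$ by the size equality, so for every $i \in \Dcal_2$ the e-value $X_i$ is one of the top $k^*$, giving $X_i \geq X_{[k^*]} \geq K/(\alpha k^*)$; and since $X_i \geq 1/\alphath^*$, formula~\eqref{eq:rand-e} yields $\tilde X_i = X_i$ and the bound holds. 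Given this pointwise inequality, the standard e-BH chain
\[
\FDR_2 = \sum_{i \in \mathcal{N}} \expect\!\left[\frac{\ind{i \in \Dcal_2}}{k_2^* \vee 1}\right] \leq \sum_{i \in \mathcal{N}} \expect\!\left[\frac{\alpha \tilde X_i}{K}\right] \leq \frac{|\mathcal{N}|\alpha}{K} \leq \alpha
\]
concludes FDR control; the penultimate step uses Proposition~\ref{prop:dynamic-round-e} (which applies because $\alphath^*$ is allowed to depend on $X_i$) to get $\expect[\tilde X_i] = \expect[X_i] \leq 1$ under each null.

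For the if-and-only-if characterization~\eqref{eq:improve-cond}, I would argue both directions directly from the representation~\eqref{eq:rand-e} of $S_{\widehat\alpha}$. If $\prob{\exists i: X_i \in (0, 1/\alphath^*)} > 0$, then on this event $\tilde X_i$ is rounded up to $1/\alphath^*$ with conditional probability $X_i \alphath^* \in (0,1)$, placing $i$ in $\Dcal_2 \setminus \Dcal_{\rmEBH}$ with positive probability. Conversely, if almost surely no $X_i$ lies in $(0, 1/\alphath^*)$, then every $X_i$ is either $0$ (so $\tilde X_i = 0$ almost surely) or at least $1/\alphath^*$ (so $\tilde X_i = X_i$ and $i$ is already in $\Dcal_{\rmEBH}$), forcing $\Dcal_2 = \Dcal_{\rmEBH}$ almost surely. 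The main obstacle is the pointwise bound $\tilde X_i \geq K/(\alpha k_2^*)$: naively using only $\tilde X_i \geq 1/\alphath^* = K/(\alpha(k^*+1))$ is too weak when $k_2^* = k^*$, since this would introduce a factor $(k^*+1)/k_2^*$ exceeding $1$. The resolution is precisely that $k_2^* = k^*$ forces $\Dcal_2 = \Dcal_{\rmEBH}$, so the stronger original e-BH threshold $K/(\alpha k^*)$ is automatically attained.
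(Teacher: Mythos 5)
Your proof is correct and follows essentially the same route as the paper: the containment via \eqref{eq:dynamic-rej-property}, the standard e-BH chain using Proposition~\ref{prop:dynamic-round-e}, and the same two-sided analysis of \eqref{eq:improve-cond}. Your two-case pointwise bound $\tilde X_i \geq K/(\alpha k_2^*)$ is just a more explicit justification of the paper's step \eqref{eq:k2-dominates} (which asserts the equivalence of \RtwoEBH\ with e-BH applied to the rounded e-values), so nothing is missing.
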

\eqref{eq:improve-cond} requires some probability of e-value realizations where (1) e-BH does not reject all hypotheses and (2) the e-values of the unrejected hypotheses are not all 0. In other words, the only cases where we \emph{do not} get a strictly larger power are when either e-BH rejects all hypotheses or when all unrejected e-values equal zero. Evidently, in practice, we should essentially always see a strict power increase.

\begin{proof}
    FDR control follows from the fact that $S_{\widehat\alpha^*}(X_i)$ is an e-value for each $i \in [K]$, and that e-BH controls FDR under arbitrary dependence as shown in Theorem 2 of \citet{wang_false_discovery_2022}.

Now, we will show the result about a strict improvement. Note that $\Dcal_2 \supseteq \Dcal_{\rmEBH}$ almost surely since $S_{\alphath^*}(X_i) = X_i$ for each $i \in \Dcal_{\rmEBH}$. Define the event
\begin{align}
\mathcal{E}_\varepsilon \coloneqq \{\exists i \in [K]: X_i \in (\varepsilon, (\widehat\alpha^*)^{-1})\}.
\end{align} $\mathcal{E}_0$ occurring is necessary for $\Dcal_2 \supset \Dcal_{\rmEBH}$ to be true, so $\eqref{eq:improve-cond}$ is necessary for $\prob{\Dcal_2 \supset \Dcal_{\rmEBH}} > 0$. Note that $\prob{\mathcal{E}_\varepsilon} > 0$ for some fixed $\varepsilon > 0$ if \eqref{eq:improve-cond} holds. Since $\prob{\Dcal_2 \supset \Dcal_{\rmEBH} \mid \mathcal{E}_\varepsilon} > p$ for some fixed $p > 0$, we have $\prob{\Dcal_2 \supset \Dcal_{\rmEBH}} > 0$, thus proving our desired result.
\end{proof}

We can also quantify the power improvement over base e-BH.
\begin{proposition}\label{prop:r2ebh-power}
    The expected number of discoveries \RtwoEBH\ makes over e-BH is
    \begin{align}
    \expect[k_2^*] - \expect[k_{\rmEBH}^*] = \sum_{i = 1}^K \expect[\widehat\alpha^* X_i \cdot \ind{X_i < (\widehat\alpha^*)^{-1}}].
\end{align}
\end{proposition}
\begin{proof}
    We can make the following derivation of the difference in expected discoveries.
    \begin{align}
        \expect[k_2^* - k_{\rmEBH}^*] &=
        \sum\limits_{i = 1}^K \expect[\ind{S_{\widehat\alpha^*}(X_i) \geq (\widehat\alpha^*)^{-1}} - \ind{X \geq (\widehat\alpha^*)^{-1}}]\\
         &=\sum\limits_{i = 1}^K \prob{S_{\widehat\alpha^*}(X_i) \geq (\widehat\alpha^*)^{-1}} - \prob{X \geq (\widehat\alpha^*)^{-1}}]
         =\sum\limits_{i = 1}^K \expect[\widehat\alpha^* X_i \cdot \ind{X_i < (\widehat\alpha^*)^{-1}}.
    \end{align} The last equality is by \Cref{prop:adaptive-round-power}. Thus, we have shown our desired result.
\end{proof}
\begin{remark} For any e-value $X$, $S_{\widehat{\alpha}}(X)$ allows us to tighten some of the looseness in the $\ind{X \geq \widehat{\alpha}} \leq \widehat \alpha X$ inequality, which is the core inequality in the proof of FDR control for e-BH when $\ind{X \geq \widehat{\alpha}} = 0$. Normally, $\ind{X \geq \widehat{\alpha}} = 0$ does not imply $\alpha X = 0$. With adaptively rounded e-values, however, $\ind{S_{\widehat{\alpha}}(X) \geq \widehat{\alpha}}  = 0$ if and only if $\alpha S_{\widehat{\alpha}}(X) = 0$. Hence, \RtwoEBH\ significantly tightens the gap between the theoretical bound and the true FDR.
\end{remark}

\subsection{Combining fixed and data-dependent rounding: \RbothEBH}We can combine fixed and data-dependent rounding of \RoneEBH\ and \RtwoEBH\ to derive the \RbothEBH\ procedure in \Cref{alg:rboth} and the following guarantees about it.
\begin{algorithm}[h]
    \caption{\RbothEBH\ combines fixed and data-dependent rounding of input e-values $(X_1, \dots, X_K)$.\label{alg:rboth}}
    \SetAlgoLined
    Apply the e-BH procedure to $(S_\Kcal(X_1), \dots, S_\Kcal(X_K))$.\\
    Let $k^*_1$ denote the number of rejections.\\
    Reject the $i$th hypothesis if $S_{\widehat\alpha^*_1}(S_\Kcal(X_i)) \geq \alpha_{k^*_1 + 1}^{-1}$. In other words, reject if $S_\Kcal(X_i) \geq U_i\alpha_{k_1^* + 1}^{-1}$ where $U_i \sim \text{Uniform}[0, 1]$ and independent of $(S_{\Kcal}(X_1), \dots, S_{\Kcal}(X_K))$.\\
    Let $\Dcal_{\R}$ denote the discovery set and let $k^*_{\R} \coloneqq |\Dcal_{\R}|$.
\end{algorithm}
\begin{theorem}
    For any arbitrarily dependent e-values, $(X_1, \dots, X_K)$, the \RbothEBH\ procedure controls $\FDR \leq \alpha$. Further, \RbothEBH\ makes no fewer discoveries than either of \RoneEBH\  and e-BH, i.e., $\Dcal_{\R} \supseteq \Dcal_1 \supseteq \Dcal_{\rmEBH}$. Further, $\prob{\Dcal_{\R} \supset \Dcal_{\rmEBH}} > 0$ if and only if \eqref{eq:improve-cond} holds.
    \label{thm:rboth-ebh}
\end{theorem}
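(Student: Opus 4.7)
The plan is to adapt the proofs of Theorems~\ref{thm:grid-ebh} and~\ref{thm:rand-ebh} by composing the two rounding operations. First, I would establish that $Z_i \coloneqq S_{\widehat\alpha^*_1}(S_\Kcal(X_i))$ is a valid e-value for each null $i \in \Ncal$. By Proposition~\ref{prop:stoch-round-e}, $Y_i \coloneqq S_\Kcal(X_i)$ is an e-value with $\expect[Y_i] = \expect[X_i] \leq 1$. Treating $\widehat\alpha^*_1$ as a data-dependent test level (it is a function of $Y_1,\dots,Y_K$, hence of $Y_i$ together with external randomness), Proposition~\ref{prop:dynamic-round-e} then yields $\expect[Z_i] \leq 1$. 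The FDR chain from the proof of Theorem~\ref{thm:rand-ebh} now transfers after substituting $Z_i$ for $S_{\widehat\alpha^*}(X_i)$ and $\widehat\alpha^*_1$ for $\widehat\alpha^*$: since $Z_i \in \{0\} \cup [1/\widehat\alpha^*_1, \infty)$, the rejection indicator $\ind{Z_i \geq 1/\widehat\alpha^*_1}$ equals $\ind{Z_i \geq \alpha_{k^*_\R}^{-1}}$ (the step-up analog of~\eqref{eq:k2-dominates}), and bounding by $\alpha_{k^*_\R} Z_i$ gives $\FDR \leq (\alpha/K) \sum_{i\in\Ncal} \expect[Z_i] \leq \alpha$.

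Next, for the containment $\Dcal_{\R} \supseteq \Dcal_1 \supseteq \Dcal_{\rmEBH}$, the second inclusion is Theorem~\ref{thm:grid-ebh}. For the first, if $i \in \Dcal_1$ then R1-eBH applied to $(Y_1,\dots,Y_K)$ rejects $i$, so $Y_i \geq \alpha_{k^*_1}^{-1} \geq \alpha_{k^*_1+1}^{-1} = 1/\widehat\alpha^*_1$, and by the dynamic rejection property~\eqref{eq:dynamic-rej-property}, $Z_i = S_{\widehat\alpha^*_1}(Y_i) \geq 1/\widehat\alpha^*_1 = \alpha_{k^*_1+1}^{-1}$, exactly the step-3 criterion of \RbothEBH. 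Turning to strict improvement, I would mirror the probabilistic argument of Theorem~\ref{thm:rand-ebh}. For necessity, if~\eqref{eq:improve-cond} fails then almost surely each $X_i$ lies in $\{0\} \cup [1/\widehat\alpha^*, \infty)$; one checks via condition~\eqref{eq:kcal-cond} that $\Dcal_1 = \Dcal_{\rmEBH}$ a.s.\ (forcing $\widehat\alpha^*_1 = \widehat\alpha^*$), and the adaptive rounding step then maps zeros to zeros and above-threshold values to above-threshold values, giving $\Dcal_\R = \Dcal_{\rmEBH}$ a.s. For sufficiency, if $\prob{\mathcal{E}_\varepsilon} > 0$ for $\mathcal{E}_\varepsilon = \{\exists i: X_i \in (\varepsilon, 1/\widehat\alpha^*)\}$, then on $\mathcal{E}_\varepsilon$ with positive probability either (a) such an $X_i$ is stochastically rounded up to a grid point at or above $1/\widehat\alpha^*$, making $i \in \Dcal_1 \supset \Dcal_{\rmEBH}$, or (b) stochastic rounding leaves $Y_i$ below the step-3 threshold $1/\widehat\alpha^*_1 \leq 1/\widehat\alpha^*$ but the subsequent adaptive rounding rounds $Y_i$ up across it, making $i \in \Dcal_\R \setminus \Dcal_1$. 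Either way $\Dcal_\R \supset \Dcal_{\rmEBH}$ with positive probability.

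The main obstacle is verifying the FDR control step rigorously under composed randomization: specifically, Proposition~\ref{prop:dynamic-round-e} must apply with $\widehat\alpha^*_1$ depending on the stochastically rounded values $\{Y_j\}$ rather than the original $X_i$'s, and the step-up threshold equivalence used in~\eqref{eq:k2-dominates} must still hold at the composed level. Both concerns resolve cleanly because the proposition only requires the adaptive-rounding uniform $U_i$ to be independent of its input and of $\widehat\alpha^*_1$ (assured by Algorithm~\ref{alg:rboth}'s use of fresh independent uniforms), and because $Z_i$ inherits the structural property $Z_i \in \{0\} \cup [1/\widehat\alpha^*_1, \infty)$ that underlies the threshold equivalence.
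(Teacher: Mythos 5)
Your proof is correct and is essentially the paper's own argument: the paper proves this theorem in one line by "combining" \Cref{thm:grid-ebh} and \Cref{thm:rand-ebh}, and your composition of the two rounding steps --- grid rounding yielding e-values $Y_i = S_\Kcal(X_i)$, then adaptive rounding at the level $\widehat\alpha^*_1$ with fresh independent uniforms so that \Cref{prop:dynamic-round-e} applies, followed by the same FDR chain, the same threshold-equivalence observation, and the same $\mathcal{E}_\varepsilon$-style argument for the strict-improvement equivalence --- is exactly the detailed version of that combination. No gaps.
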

This is simply a result of combining \Cref{thm:grid-ebh} and \Cref{thm:rand-ebh}. \added{As we will later observe in simulations in \Cref{sec:Simulations}, \RbothEBH\ is typically more powerful than \RtwoEBH, but neither strictly dominates the other in all cases.}

For general problems, we should expect $k^*_{\R} > k_1^* > k^*$, although it is possible to derive ``corner cases'' where randomization does not help.
\subsection{The jointly rounded \UEBH}
So far, we have treated every stochastic rounding operation as being independent of every other. However, the validity of stochastic rounding does not need independence in the algorithmic randomness across e-values.

First, define the \emph{\UEBH\ procedure} as applying the e-BH procedure to $(X_1 / U, \dots, X_K / U)$ for a single uniform random variable $U$. $X / U$ is generally not an e-value. In fact, $\expect[X / U] = \infty$ unless $X = 0$ almost surely, since $\expect[U^{-1}] = \infty$.  However, we can still derive an e-value procedure that has equivalent power.
Let $\Dcal_\U$ be the resulting discovery set and let $k^*_{\U}$ be its cardinality.

\boldparagraph{Joint stochastic rounding} To show that \UEBH\ has valid FDR control, we first define a joint stochastic rounding procedure. Let $(S_{\Gcal_1}(X_1), \dots,  S_{\Gcal_K}(X_K))$ denote the rounded e-values of a joint stochastic rounding procedure, with $K$ different grids $\Gcal_1, \dots, \Gcal_K$. Let $U$ be a uniform random variable on $[0, 1]$ that is independent of $(X_1, \dots, X_K)$. We can define the jointly rounded e-values as follows, where we replace \eqref{eq:StochasticRounding} with the following construction:
\begin{align}
    S_{\Gcal_i}(x) \coloneqq \begin{cases}
        x_- \text{ if }U > \frac{x - x_-}{x^+ - x_-}\\
        x^+ \text{ if }U \leq \frac{x - x_-}{x^+ - x_-}.
    \end{cases}
    \label{eq:joint-rounding}
\end{align} Here, $x_-, x^+$ are defined w.r.t.\ to the grid $\Gcal_i$. Obviously, as before, jointly rounded e-values are still (dependent) e-values.
\begin{proposition}
    If $(X_1, \dots, X_K)$ are e-values, then $(S_{\Gcal_1}(X_1), \dots, S_{\Gcal_K}(X_K))$ are also e-values. Further, $\expect[S_{\Gcal_i}(X_i)] = \expect[X_i]$ for each $i \in [K]$.
    \label{prop:joint-round-evalue}
\end{proposition}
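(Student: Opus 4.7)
The plan is to reduce the claim to the already-established marginal result in \Cref{prop:stoch-round-e} by showing that the joint construction in \eqref{eq:joint-rounding} preserves the marginal distribution of each $S_{\Gcal_i}(X_i)$ given in \eqref{eq:StochasticRounding}. The key observation is that being an e-value and having a specified expectation are both properties of the marginal distribution of a single random variable; coupling across $i$ via a shared $U$ is irrelevant to verifying the conclusion for any fixed $i$.

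First I would fix $i$ and condition on $X_i = x$. By hypothesis, $U$ is uniform on $[0,1]$ and independent of $(X_1,\dots,X_K)$, hence in particular independent of $X_i$. Therefore, in the nontrivial case where $x \in [g_*^{(i)}, g^{*(i)}]$ with $x^+ < \infty$ and $x_- < x^+$,
\[
\prob{U \leq \tfrac{x - x_-}{x^+ - x_-} \mid X_i = x} = \tfrac{x - x_-}{x^+ - x_-},
\]
so the conditional law of $S_{\Gcal_i}(X_i)$ given $X_i = x$ matches exactly the law prescribed in \eqref{eq:StochasticRounding}. In the trivial cases ($x \notin [g_*^{(i)}, g^{*(i)}]$, $x \in \Gcal_i$, or $x^+ = \infty$), both constructions return $x$. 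Integrating out $X_i$, the marginal law of $S_{\Gcal_i}(X_i)$ under the joint construction coincides with the marginal law in \Cref{prop:stoch-round-e}.

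From here the conclusion is immediate: \Cref{prop:stoch-round-e} gives $\expect[S_{\Gcal_i}(X_i)] = \expect[X_i] \leq 1$, and $S_{\Gcal_i}(X_i) \geq 0$ almost surely by construction (every grid point is nonnegative and so is $X_i$ when $X_i$ is left unchanged). Thus each $S_{\Gcal_i}(X_i)$ is an e-value, and the expectation identity holds. No obstacle is anticipated; the only subtlety is confirming the edge-case handling of $x \notin [g_*^{(i)}, g^{*(i)}]$ and $x^+ = \infty$ so that the coupling via $U$ does not accidentally change the output in those regimes, which follows directly from the definition preceding \eqref{eq:StochasticRounding}.
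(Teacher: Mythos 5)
Your proof is correct and takes essentially the same route as the paper: the paper simply observes that each $S_{\Gcal_i}(X_i)$ under the joint construction is marginally exactly the stochastic rounding of $X_i$ onto $\Gcal_i$, so the claim follows from \Cref{prop:stoch-round-e}. You spell out the conditioning-on-$X_i$ and edge-case details that the paper leaves implicit, but the argument is the same.
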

Notice that the for each $i \in [K]$, $S_{\Gcal_i}(X_i)$ is exactly the stochastic rounding of $X_i$ onto $\Gcal_i$.
Therefore, \Cref{prop:joint-round-evalue} follows immediately from its definition in \eqref{eq:joint-rounding}.

\begin{theorem}
    \label{thm:jrebh}
    For any arbitrarily dependent e-values, $(X_1, \dots, X_K)$, the \UEBH\ procedure ensures that $\FDR \leq \alpha$ and $ \Dcal_{\U}\supseteq\Dcal_{\rmEBH} $. Further, $\prob{\Dcal_{\U} \supset \Dcal_\rmEBH} > 0$ iff \eqref{eq:improve-cond} holds.
\end{theorem}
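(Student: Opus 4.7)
The plan is to leverage both equivalent characterizations of \UEBH\ established above: as e-BH applied to the jointly rounded e-values $(S_{\alpha_{\ell(i)}}(X_i))_{i \in [K]}$, and (by \Cref{prop:uebh-as-bh}) as BH applied to $(U/X_1, \dots, U/X_K)$. The first view will drive FDR control; the second will make the containment and strict-improvement claims transparent.

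For FDR control, first I would check that each rounded coordinate is marginally an e-value: since $\alpha_{\ell(i)} = \alpha \ell(i)/K \in [0,1]$ depends only on $(X_1,\dots,X_K)$ and not on $U$, \Cref{prop:dynamic-round-e} applies coordinatewise and yields $\expect[S_{\alpha_{\ell(i)}}(X_i)] = \expect[X_i] \leq 1$ for every null $i \in \Ncal$. The standard e-BH FDR bound of \cite{wang_false_discovery_2022} uses only the marginal e-value property of each input (via the deterministic inequality $\ind{Y \geq 1/\widehat\alpha} \leq \widehat\alpha Y$) and no joint structure, so it applies to the jointly rounded e-values even though they all share a single coupling variable $U$.

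For $\Dcal_{\U} \supseteq \Dcal_{\rmEBH}$, recall from the proof of \Cref{prop:uebh-as-bh} that $k^*_{\U} = \max\{i \in [K] : U \leq \alpha i X_{[i]}/K\}$, while by definition of $k^*$ one has $\alpha k^* X_{[k^*]}/K \geq 1 \geq U$ almost surely whenever $k^* \geq 1$. Hence $k^*_{\U} \geq k^*$, and since both procedures reject the top-ranked e-values, $\Dcal_{\U} \supseteq \Dcal_{\rmEBH}$. For the strict-improvement equivalence, I would first observe that \eqref{eq:improve-cond} is equivalent to $\prob{X_{[k^*+1]} > 0} > 0$ (any $X_i \in (0, 1/\alphath^*)$ must have rank exceeding $k^*$ by maximality, and conversely any positive $X_{[j]}$ with $j > k^*$ must lie strictly below $1/\alphath^*$ for the same reason). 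On $\{X_{[k^*+1]} > 0\}$ the quantity $c := \alpha(k^*+1) X_{[k^*+1]}/K$ is strictly positive, and by independence of $U$ from $(X_1,\dots,X_K)$ we get $\prob{U \leq c,\ X_{[k^*+1]} > 0} > 0$, forcing $k^*_{\U} \geq k^* + 1$ with positive probability. Conversely, if \eqref{eq:improve-cond} fails, then almost surely $X_{[j]} = 0$ for every $j > k^*$, so no such $j$ can satisfy $U \leq \alpha j X_{[j]}/K$ and hence $k^*_{\U} = k^*$ a.s.

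The one subtlety I would flag is that \Cref{prop:joint-round-evalue} was stated for fixed grids, while \UEBH\ uses data-dependent grids $\Gcal_i = \{0, \alpha_{\ell(i)}^{-1}, \infty\}$. This is precisely why I route the FDR argument through \Cref{prop:dynamic-round-e} applied one coordinate at a time, rather than through \Cref{prop:joint-round-evalue}; since e-BH needs only coordinatewise expectations at most $1$ and no joint e-value property, this is enough, and no further work on the joint law of the rounded vector is needed.
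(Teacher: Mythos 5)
Your proof is correct. The FDR-control and containment parts follow essentially the same route as the paper: the paper likewise bounds the FDR by running the standard e-BH argument on the rounded e-values, using only the marginal identity $\expect[S_{\alpha_{\ell(i)}}(X_i)] = \expect[X_i]$ (which, as you note, must come from the adaptive-rounding argument of \Cref{prop:dynamic-round-e} rather than the fixed-grid \Cref{prop:joint-round-evalue}, since $\alpha_{\ell(i)}$ is data-dependent --- the only requirement is that $U$ is independent of everything determining the grid, which holds here); and it obtains $\Dcal_{\U} \supseteq \Dcal_{\rmEBH}$ from the $(X_1/U, \dots, X_K/U)$ representation of \Cref{prop:uebh-as-bh} exactly as you do. Where you genuinely diverge is the strict-improvement equivalence. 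The paper routes this through \RtwoEBH: it invokes \Cref{prop:uebh-better-than-r2ebh} ($\expect[k^*_{\U}] \geq \expect[k^*_2]$) together with the fact that \eqref{eq:improve-cond} gives $\prob{\Dcal_2 \supset \Dcal_{\rmEBH}} > 0$, and then concludes via the a.s.\ domination $k^*_{\U} \geq k^*$. You instead prove it directly: you observe that \eqref{eq:improve-cond} is equivalent to $\prob{k^* < K,\ X_{[k^*+1]} > 0} > 0$ (both directions of that equivalence check out, using that $X_{[k^*+1]} < 1/\widehat\alpha^*$ holds deterministically by maximality of $k^*$; the only caveat is that $X_{[k^*+1]}$ is undefined when $k^* = K$, so the event should carry the qualifier $k^* < K$, under which your argument is exactly right), and then integrate out $U$ to get $\prob{U \leq \alpha(k^*+1)X_{[k^*+1]}/K,\ X_{[k^*+1]}>0} = \expect[\ind{X_{[k^*+1]}>0}(\alpha(k^*+1)X_{[k^*+1]}/K \wedge 1)] > 0$. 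This is cleaner and more self-contained: it avoids the auxiliary comparison proposition entirely and also sidesteps the $\varepsilon$-truncation device the paper uses in the analogous step for \RtwoEBH. The paper's route has the side benefit of establishing the power ordering between \UEBH\ and \RtwoEBH, which it wants anyway, but as a proof of this theorem alone your argument is the more economical one.
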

To prepare for defining the grids we round to, we introduce the following quantities:
\begin{align}
    \rank[i] \coloneqq \sum_{j \in [K]} \ind{X_j \geq X_i}, \qquad \ell(i) \coloneqq \underset{j \geq \rank[i]}{\argmax}\ j X_{[j]},
\end{align} for each $i \in [K]$. If there are multiple indices that satisfy the argmax, we take the largest index.

\begin{lemma}
    The \UEBH\ procedure is equivalent to the e-BH procedure applied to the jointly rounded e-values $(S_{\alpha_{\ell(1)}}(X_1), \dots,S_{\alpha_{\ell(K)}}(X_K))$.
    \label{prop:uebh-as-bh}
\end{lemma}
\begin{proof}
Let $k^*_{\rmEBH}$ represent the number of rejections made by e-BH. Both the e-BH and the \UEBH\ formulation will reject the hypotheses corresponding to the $k^*_{\rmEBH}$ and $k^*_\U$ largest e-values. The number of rejections made by e-BH is as follows:
\begin{align}
    k^*_{\rmEBH} = \max \left\{i \in [K]: \frac{X_{[i]}}{U} \leq \frac{ K}{\alpha i}\right\} = \max \left\{i \in [K]: U \leq \frac{\alpha iX_{[i]}}{K}\right\}
\end{align}
Now, we simply want to show that
\begin{align}
    k^*_\U = \ell^* \coloneqq \max\left\{\ell(i): U \leq \alpha X_i \ell(i) / K, i \in [K]\right\} = k^*_{\rmEBH}.
    \label{eq:u-bh-equal}
\end{align}
We will first argue that the left equality is true. Stochastic rounding rejects the $i$th hypothesis if and only if $U \leq \alpha X_i \ell(i) / K$ by expanding the definition of $\alpha_{\ell(i)}$ and applying \Cref{prop:adaptive-round-power}. $U \leq \alpha X_i \ell(i) / K$ if and only if $\ell(i) \leq \ell^*$, and exactly $\ell^*$ hypotheses satisfy that.

The right equality is true because if $k^*_{\rmEBH} > \ell^*$, then the definition of $\ell$ is violated, as $k^*_{\rmEBH}X_{[k^*_\rmEBH]} > \ell(i) X_{[\ell(i)]}$ for some $i$ where $\ell(i) \leq \ell^*$. If $\ell^* > k^*_\rmEBH$, this violates the definition $k^*_\rmEBH$, as $\ell^* > k^*_\rmEBH$ and $U \leq \alpha k X_{[\ell^*]} / K$.
Since \eqref{eq:u-bh-equal} is true, our desired result follows.
\end{proof}
When viewed through the lens of being an application of the BH procedure, it is not apparent that the \UEBH\ procedure can control FDR under arbitrary dependence. However, FDR control does follow directly from \UEBH\ being a stochastic rounding procedure.
Our proof of strict power improvement in \Cref{thm:jrebh} utilizes the following result.
\begin{proposition}
\label{prop:uebh-better-than-r2ebh}
   \UEBH\ is at least as powerful as \RtwoEBH, i.e.,  $\expect[k_{\U}^*] \geq \expect[k_2^*]$.
\end{proposition}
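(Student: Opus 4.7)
The plan is to condition on the e-values $(X_1, \dots, X_K)$ and compare the conditional expectations $\expect[k_\U^* \mid X_1, \dots, X_K]$ and $\expect[k_2^* \mid X_1, \dots, X_K]$; the unconditional inequality then follows by integration. On this event $k^*$ and $\alphath^* = \alpha(k^*+1)/K$ are deterministic. For $\RtwoEBH$, the representation \eqref{eq:rand-e} shows that hypothesis $i$ is rejected iff $U_i \leq \alphath^* X_i$ with independent uniforms $U_i$, so by linearity,
\begin{align}
\expect[k_2^* \mid X_1, \dots, X_K] = \sum_{i \in [K]} (\alphath^* X_i) \wedge 1.
\end{align}

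For $\UEBH$, \Cref{prop:uebh-as-bh} lets me define $N_k \coloneqq \#\{i \in [K]: X_i \geq UK/(\alpha k)\}$ and write $k_\U^* = \max\{k \in [K]: N_k \geq k\}$; note $N_k$ is nondecreasing in $k$ since the threshold $UK/(\alpha k)$ is nonincreasing. The key observation is that with the single coupling uniform $U$,
\begin{align}
\expect[N_{k^*+1} \mid X_1, \dots, X_K] = \sum_{i \in [K]} \prob{U \leq \alphath^* X_i} = \sum_{i \in [K]} (\alphath^* X_i) \wedge 1,
\end{align}
exactly matching the $\RtwoEBH$ formula above — coupling versus decoupling the uniforms leaves the marginal expectation of this particular sum unchanged.

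The substantive step is to establish the almost-sure pointwise inequality $k_\U^* \geq N_{k^*+1}$. Each $i \in \Dcal_\rmEBH$ satisfies $X_i \geq K/(\alpha k^*) \geq UK/(\alpha k^*)$ because $U \in [0,1]$, so $N_{k^*} \geq k^*$ and hence $k_\U^* \geq k^*$. If $N_{k^*+1} \leq k^*$, this directly gives $k_\U^* \geq k^* \geq N_{k^*+1}$. Otherwise, set $m \coloneqq N_{k^*+1} \geq k^* + 1$; monotonicity of $N_k$ yields $N_m \geq N_{k^*+1} = m$, so the maximality in the definition of $k_\U^*$ gives $k_\U^* \geq m = N_{k^*+1}$. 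Combined with the expectation identity above, $\expect[k_\U^* \mid X] \geq \expect[N_{k^*+1} \mid X] = \expect[k_2^* \mid X]$, and integrating over $(X_i)$ closes the argument.

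The main conceptual obstacle is recognizing why coupling the randomness across hypotheses should not hurt: intuitively, $\RtwoEBH$'s independent $U_i$'s give every hypothesis a ``fresh'' chance, whereas $\UEBH$'s single $U$ could in principle doom all rejections simultaneously. The resolution is that the BH step-up inside $\UEBH$ compensates for the coupling by adaptively selecting a threshold $\alpha k_\U^*/K$ that is at least as generous as $\alphath^*$ whenever the shared $U$ allows it, and the pointwise inequality $k_\U^* \geq N_{k^*+1}$ is the precise formalization of this adaptivity being enough to dominate the independent-randomness procedure in expectation.
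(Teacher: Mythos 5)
Your proof is correct, and it takes a genuinely different route from the paper's. The paper computes the conditional expectation of $k_\U^*$ exactly: it introduces the levels $\ell(i)$ and the set $L$, establishes $\ind{k_\U^* \geq i} = \ind{U \leq \alpha_i X_{[i]}}$ for the relevant indices $i > k^*$, obtains $\expect[k_\U^* \mid X_1,\dots,X_K] = k^* + \sum_{i > k^*} \prob{U \leq \alpha_i X_{[i]}}$, and then compares this term by term with $\expect[k_2^* \mid X_1,\dots,X_K] = k^* + \sum_{i > k^*} \prob{U_i \leq \alphath^* X_{[i]}}$ using $\alpha_i \geq \alphath^*$ for $i \geq k^*+1$. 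You instead introduce the intermediate count $N_{k^*+1} = |\{i : X_i \geq UK/(\alpha(k^*+1))\}|$, note that it shares the conditional mean $\sum_i (\alphath^* X_i) \wedge 1$ with $k_2^*$ (linearity of expectation is indifferent to whether the uniforms are coupled or independent), and then prove the almost-sure pointwise bound $k_\U^* \geq N_{k^*+1}$ via the self-consistency property of step-up procedures: if $m = N_{k^*+1} > k^*$ then monotonicity of $k \mapsto N_k$ gives $N_m \geq m$ and hence $k_\U^* \geq m$, while otherwise $k_\U^* \geq k^* \geq N_{k^*+1}$ because the e-BH rejections survive any $U \leq 1$. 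This avoids the $\ell(i)$/$L(i)$ bookkeeping entirely and isolates the conceptual point — the adaptive step-up threshold compensates for coupling the randomness — as a surely-valid domination of a variable with the right mean, which is arguably cleaner. What the paper's exact computation buys in exchange is the closed form for $\expect[k_\U^* \mid X_1,\dots,X_K]$ itself, which quantifies the gain at each level $i > k^*$ rather than only certifying the inequality. Both arguments handle the edge cases $k^* = 0$ and $k^* = K$ without trouble.
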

We defer the proof of \Cref{sec:uebh-better-proof}. Now that we have set up the appropriate notation, we complete the proof of Theorem 4.
\begin{proof}[Proof of \Cref{thm:jrebh}]
The FDR control is simply because we are running e-BH on e-values, and FDR control is achieved by Theorem 2 of \cite{wang_false_discovery_2022},

We know that \UEBH\ rejects all discoveries made by e-BH by noting that \UEBH\ is equivalent to applying e-BH on $(X_1 / U, \dots, X_K / U)$ by \Cref{prop:uebh-as-bh}. Since $X_i / U \geq X_i$ almost surely for each $i \in [K]$, we know that \UEBH\ rejects all $i \in \Dcal$.

Now, we note that \eqref{eq:improve-cond} implies that $\prob{\Dcal_2 \supset \Dcal_{\rmEBH}} > 0$, together with \Cref{prop:uebh-better-than-r2ebh}, justifies that $\prob{\Dcal_{\U} \supset \Dcal_{\rmEBH}}> 0$. Now, because of the alternative formulation of \UEBH\ in \Cref{prop:uebh-as-bh}, we know that if \eqref{eq:improve-cond} does not hold, \UEBH\ simply outputs $\Dcal_{\rmEBH}$ almost surely. This is because, almost surely, $X_i / U = 0$ if $i \not\in\Dcal_{\rmEBH}$ and $X_i / U \geq X_i$ if $i \in \Dcal_{\rmEBH}$. Thus, we have shown our desired result.
\end{proof}

We now make a few remarks about the randomized e-BH procedures we have introduced.
\begin{remark}
    While the theorems we have shown so far rely on $(X_1, \dots, X_K)$ being e-values, we actually only require them to be compound e-values \citep{ignatiadis_asymptotic_compound_2025}, i.e., they satisfy
    $\sum_{i \in \Ncal} \expect[X_i] \leq K.$ This is sufficient for FDR control, as \cite{wang_false_discovery_2022} noted for standard e-BH.
\end{remark}

\begin{remark}\label{remark:derandomization}
    To derandomize our randomized e-BH procedures, one could take the stochastically rounded e-values and average them over multiple runs (i.e., draws of external randomness). Let $S_{\Gcal_i}^{(n)}(X_i)$ denote the $i$th stochastically rounded e-value for the $n$th run. By construction, all of the stochastically rounded e-values that we used in the randomized procedures satisfy $\expect[S_{\Gcal_i}^{(n)}(X_i) \mid X_i] = X_i$. Hence, by law of large numbers, $\frac{1}{N}\sum_{n = 1}^N S_{\Gcal_i}^{(n)}(X_i) \rightarrow X_i$ as the number of runs, $N$, approaches infinity. Thus, a derandomized version of any of our randomized e-BH procedures recovers exactly e-BH on the original e-values $(X_1, \dots, X_K)$.

	    \added{
	        \citet{ignatiadis_asymptotic_compound_2025} suggest an alternative technique for derandomization. For any level $\alpha$ FDR controlling discovery set $\rejset$, we have that $X_i = \ind{i \in R} \cdot (\alpha_{|R|})^{-1}$ for each $i \in [K]$ are compound e-values. Thus, one could construct compound e-values from the discovery set of many runs of any randomized e-BH procedure and then average the e-values for each $i \in [K]$ across runs. We empirically compare the two methods of derandomization in \Cref{sec:derandomization-simulations} and conclude that our method of averaging stochastically rounded e-values is more powerful and stable than averaging the compound e-values of \citet{ignatiadis_asymptotic_compound_2025}.
	    }
\end{remark}
\begin{remark}
We note that while the \UEBH\ procedure is randomized, it always respects the `moral' property of all step-up procedures in that a hypothesis is rejected only if all hypotheses with more evidence (i.e., larger e-values) are also rejected. Formally, this means that if $j \in \Dcal_{\U}$ then $\left\{i \in [K]: X_i \geq X_j \right\} \subseteq \Dcal_{\U}$.
\end{remark}
\ifarxiv{}{\vspace{-20pt}}

\section{Randomization for multiple testing with p-values}
\label{sec:rand-by}

Our randomization techniques are also applicable to the Benjamini-Yekutieli procedure \citep{benjamini_control_false_2001}, which provides FDR control for arbitrarily dependent p-values.
Recall that $P$ is a \emph{p-value} if it is \emph{superuniform}, i.e., $\prob{P \leq s} \leq s$ for all $s \in [0, 1]$, when the null hypothesis is true. Let $(P_1, \dots, P_K)$ be $K$ arbitrarily dependent p-values and $U$ be a superuniform random variable that is independent of the p-values.
The BY procedure makes the following number of discoveries:
\begin{align}
    k^*_{\rmBY} \coloneqq \max \left\{i \in [K]: P_{(i)} \leq \frac{\alpha i}{K \ell_K}\right\},
\end{align} where $\ell_K \coloneqq \sum_{i = 1}^K 1 / i$ is the $K$th harmonic number. The BY procedure rejects the $k^*_\rmBY$ smallest p-values, which is equivalent to rejecting p-values where $P_i \leq \alpha k^*_\rmBY / (K\ell_K)$. Let $\Dcal_{\rmBY}$ be the resulting discovery set. Now, define a randomized version of $k^*_{\rmBY}$:
\begin{align}
k^*_{\text{\RBY}} \coloneqq \max \left\{i \in [K]: P_{(i)} \leq \frac{\alpha(\lfloor i / U\rfloor \wedge K)}{K\ell_K}\right\},
\end{align}
The \emph{\RBY\ procedure} rejects the $i$th hypothesis if
${P_i \leq \alpha (\lfloor k^*_{\text{\RBY}} / U \rfloor \wedge K)/ (K\ell_K)},$
and denote this discovery set as $\Dcal_{\rmRBY}$.
This leads us to our guarantee for \RBY.

\begin{theorem}\label{thm:rounded-by}
    For any arbitrarily dependent p-values $(P_1, \dots, P_K)$, the \RBY\ procedure ensures $\FDR \leq \alpha$ and \RBY\ rejects all hypotheses rejected by \BY, i.e., $ \Dcal_{\rmRBY} \supseteq \Dcal_{\rmBY}$. Further, $\prob{\Dcal_{\RBY} \supset \Dcal_{\BY}} > 0$ is true if and only if
    \begin{align}
    \prob{\exists i \in [K]: \alpha_{k^*_{\BY} + 1} / \ell_K < P_i \leq \alpha  / \ell_K} > 0. \label{eq:p-value-improv}
\end{align}
\end{theorem}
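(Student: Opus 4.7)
The plan is to reduce \RBY\ to a randomized e-BH procedure via p-to-e calibration, then invoke \Cref{thm:jrebh}. The key ingredient is the calibrator
\[
X_i \;\coloneqq\; \lfloor 1/(\ell_K P_i) \rfloor \wedge K,
\]
which I would verify is a valid e-value by a direct integration: for superuniform $P_i$,
\[
\expect[X_i] \;\le\; K \cdot \prob{P_i \le 1/(K\ell_K)} + \sum_{k=1}^{K-1} k \cdot \prob{P_i \in (1/((k+1)\ell_K),\, 1/(k\ell_K)]} \;=\; 1,
\]
where the final equality uses the telescoping identity $\sum_{k=1}^{K-1} 1/(k+1) = \ell_K - 1$. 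Hence $(X_1, \dots, X_K)$ are arbitrarily dependent e-values to which the randomized e-BH machinery applies.

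Next, I would establish that \RBY\ on $(P_i)$ coincides with a randomized e-BH procedure applied to $(X_i)$ using the same uniform $U$. The \RBY\ rejection rule $P_i \le \alpha(\lfloor k^*/U\rfloor \wedge K)/(K\ell_K)$ is engineered precisely so that, after the calibration $P_i \mapsto X_i$, it matches a rule of the form $X_i / U \ge K/(k^*\alpha)$ on the grid $\Kcal = \{K/(k\alpha): k \in [K]\}$: the $\ell_K$ factor originates in the calibration, the $K,\alpha$ come from the e-BH grid, and the $\lfloor k^*/U\rfloor \wedge K$ reflects the interaction of the integer-valued $X_i$ with the randomized rounding. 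Once this equivalence is verified, the FDR bound $\FDR \le \alpha$ transfers from \Cref{thm:jrebh}, and the inclusion $\Dcal_{\RBY} \supseteq \Dcal_{\BY}$ follows because $\lfloor i/U\rfloor \ge i$ for $U \le 1$, so the \RBY\ threshold is pointwise at least the BY threshold.

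For the strict improvement claim, I would argue that the event $\{\exists i: P_i \in (\alpha_{k^*_\BY + 1}/\ell_K,\, \alpha/\ell_K]\}$ in \eqref{eq:p-value-improv} is exactly the preimage under the calibration of the event $\{\exists i: X_i \in (0,(\widehat\alpha^*)^{-1})\}$ in \eqref{eq:improve-cond}. The upper endpoint $\alpha/\ell_K$ is the largest $P_i$ whose calibration $X_i$ is nonzero (i.e., $\lfloor 1/(\ell_K P_i)\rfloor \ge 1$), while the lower endpoint $\alpha_{k^*_\BY+1}/\ell_K$ is the threshold just above BY's rejection cutoff. The "iff" part of \Cref{thm:jrebh} then transfers directly.

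The main obstacle will be the precise algebraic equivalence in the second step: two nested rounding layers (the floor in the calibration of $X_i$, and the $\lfloor k^*/U\rfloor \wedge K$ in the \RBY\ rule) must align exactly, which requires careful case analysis over whether $UK/(k^*\alpha)$ exceeds $K$. A cleaner alternative, which I would fall back on if the equivalence is too fragile, is to bypass the exact reduction and instead invoke the randomized superuniformity lemma \Cref{lemma:rand-superuniform} directly on the \RBY\ rejection threshold, combined with the standard BY-style telescoping over $k \in [K]$ to bound the FDR summand-by-summand.
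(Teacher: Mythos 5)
Your overall strategy --- calibrate the p-values to e-values, identify \RBY\ with \UEBH\ on the calibrated e-values, and inherit everything from \Cref{thm:jrebh} --- is exactly the paper's route. However, the specific calibrator you propose, $X_i = \lfloor 1/(\ell_K P_i)\rfloor \wedge K$, does not make the reduction work, and this is a genuine gap rather than a technicality. Your $X_i$ is a valid e-value (your Abel-summation check is fine, though the final step is ``$\leq 1$'' under superuniformity, with equality only for exactly uniform $P_i$), but it takes values in $\{0,1,\dots,K\}$, whereas the e-BH rejection thresholds live on the grid $\{K/(k\alpha): k\in[K]\}$. Since $\alpha<1$ forces $K/\alpha > K \geq X_{[1]}$, e-BH applied to your $(X_i)$ can never reject at rank $1$ no matter how small $P_{(1)}$ is, so it does not recover BY; and the randomized version does not recover \RBY\ either. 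Concretely, with $K=2$, $\alpha=1/2$, $P_{(1)}=0.15$: \RBY\ rejects the smallest p-value for every $U\in(0,1]$ (since $0.15\leq \alpha/(K\ell_K)=1/6$... in fact $0.15< 1/6$), while your rule $U\leq \alpha_1(\lfloor 1/(\ell_K P_{(1)})\rfloor\wedge K)$ rejects only when $U\leq 1/2$. The calibrator that does work is the BY calibrator used in the paper,
\begin{align}
f^{\rmBY(\alpha,K)}(x)=\frac{K}{\alpha}\left(\left\lceil \frac{xK\ell_K}{\alpha}\right\rceil\vee 1\right)^{-1}\ind{x\leq \alpha/\ell_K},
\end{align}
which discretizes $[0,\alpha/\ell_K]$ into bins of uniform width $\alpha/(K\ell_K)$ and outputs values exactly on the e-BH grid; with it one verifies the identity $\ind{U\leq \alpha_i f^{\rmBY(\alpha,K)}(P_{(i)})}=\ind{P_{(i)}\leq \alpha(\lfloor i/U\rfloor\wedge K)/(K\ell_K)}$, which is the whole content of the equivalence. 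Your strict-improvement argument is correct in spirit once the right calibrator is in place: the event in \eqref{eq:p-value-improv} is indeed the preimage of \eqref{eq:improve-cond} under $f^{\rmBY(\alpha,K)}$.

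Your stated fallback --- invoking \Cref{lemma:rand-superuniform} directly on the \RBY\ threshold --- is sound and is in fact the paper's second proof: \RBY\ is the $\beta$-reshaped \RBY\ procedure with $\beta(r)=(\lfloor r\rfloor\wedge K)/\ell_K$, so \Cref{thm:reshaped-rand-by} subsumes \Cref{thm:rounded-by}. So the proposal is rescuable, but as written its primary route fails at the first step.
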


\eqref{eq:p-value-improv} is an extremely weak condition, and indeed one should clearly expect it to hold in almost all situations where the p-values are continuous (as long as their dependence structure is not too adversarially constructed).

\begin{proof}
We will show that the \RBY\ procedure controls FDR and can only improve power by seeing that it is the result of applying \UEBH\ to calibrated p-values. We defer proof of strict power improvement under \eqref{eq:p-value-improv} to \Cref{sec:by-improve}.

A \emph{calibrator} $f: [0, 1] \mapsto [0, \infty)$ is an upper semicontinuous function that satisfies $\int_0^1 f(x)\ dx = 1$ which implies that $f(P)$ is an e-value for any p-value $P$. Utilizing this connection, we can then apply \UEBH\ to the calibrated p-values.
The BY calibrator \citep{xu_post-selection_inference_2022} is defined as follows:
\begin{align}
    f^{\rmBY(\alpha, K)}(x) \coloneqq K\alpha^{-1}
    \left(\left\lceil x(\alpha/ (K\ell_K))^{-1} \right\rceil \vee 1 \right)^{-1}\ind{x \leq \alpha^{-1}\ell_K}. \label{eq:by-calibrator}
\end{align}
Hence, we can produce arbitrarily dependent e-values
${(f^{\rmBY(\alpha, K)}(P_1), \dots, f^{\rmBY(\alpha, K)}(P_K))}$ --- applying the e-BH procedure to e-values constructed using the BY calibrator is equivalent to running the BY procedure on the original p-values.
Now note the following for any $i \in [K]$:
\begin{align}
    &\ind{U \leq \alpha_i f^{\rmBY(\alpha, K)}(P_{(i)})}
    = \dynind{U \leq i\left(\left\lceil P_{(i)}(\alpha/ (K\ell_K))^{-1} \right\rceil \vee 1 \right)^{-1}\ind{P_{(i)} \leq \alpha / \ell_K}}\\
    &=\dynind{\left(\left\lceil P_{(i)}(\alpha/ (K\ell_K))^{-1} \right\rceil \vee 1 \right)\leq (i / U) \wedge K}
    =\ind{P_{(i)} \leq \alpha(\lfloor i / U \rfloor \wedge K)\rfloor / (K\ell_K)},
\label{eq:threshold-rej}
\end{align}
where the last line is because $i \in [K]$, and $\lceil x \rceil \leq y \Leftrightarrow x \leq \lfloor y \rfloor$ for any $x, y \geq 0$.
Thus, we have shown \RBY\ has the same rejection behavior as \UEBH\, i.e., the exact conditions under which the $i$th smallest p-value is rejected by \RBY\ are the same conditions under which the $i$th largest calibrated e-value is rejected by \UEBH\, as characterized in \eqref{eq:uni-swap}. Consequently, \RBY\ inherits its properties from \Cref{thm:jrebh}. This justifies our desired results.
\end{proof}

\begin{remark}
    \citet{guo_control_false_2008} present an instance where the FDR control of BY is tight, that is, exactly $\alpha$. Notably, the instance does not satisfy \eqref{eq:p-value-improv} by explicitly constructing a degenerate distribution where the p-values only take on values $\alpha i / (K\ell_K)$ for $i \in [K]$. We note in \Cref{sec:tight-fdr} that \RBY\ behaves identically to the BY procedure in such a setting and empirically verify that this is indeed the case.
\end{remark}
\ifarxiv{}{\vspace{-10pt}}
\boldparagraph{FDR control of any randomized reshaped procedure} The above proof of \Cref{thm:rounded-by} uses the FDR control of the p-to-e calibration guarantee. Alternatively, we can also prove a randomized version of the superuniformity lemma of \cite{blanchard_two_simple_2008} and \citet{ramdas_unified_treatment_2019}, which is used to justify FDR control of multiple testing procedures that utilize a reshaping function to handle arbitrary dependence. Define a \emph{reshaping function} $\beta: [0, \infty) \rightarrow [0, \infty)$ to be a (nondecreasing) function that satisfies $\beta(r) = \int_0^r x\ d \nu(x)$ for probability measure $\nu$ over $[0, \infty)$.
\begin{lemma}[Randomized superuniformity lemma]
\label{lemma:rand-superuniform}
Let $P$ be a superuniform random variable that can be arbitrarily dependent with a positive random variable $R$, and $U$ be a superuniform random variable that is independent of both $P$ and $R$. Let $c$ be a nonnegative constant and $\beta$ be a reshaping function. Then, the following holds:
\begin{align}
    \expect\left[\frac{\ind{P \leq c\beta(R / U)}}{R}\right] \leq c.
\end{align}
Note that this recovers the original superuniformity lemmas (\citealt[Lemma 3.2]{blanchard_two_simple_2008} and \citealt[Lemma 1]{ramdas_unified_treatment_2019}) when $U = 1$.
\end{lemma}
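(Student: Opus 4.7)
The plan is to reduce the inequality to two clean applications of superuniformity (one for $U$, one for $P$) glued together by a Tonelli swap. Introduce the generalized inverse $T := \beta^{-1}(P/c) := \inf\{y \geq 0 : \beta(y) \geq P/c\}$. Because $\beta$ is nondecreasing, the Galois-style equivalence $\beta(y) \geq t \iff y \geq \beta^{-1}(t)$ holds, so
\[
\ind{P \leq c\beta(R/U)} = \ind{\beta(R/U) \geq P/c} = \ind{R/U \geq T} = \ind{U \leq R/T},
\]
with the harmless conventions $R/0 = \infty$ and $R/\infty = 0$, justified since $P > 0$ and $U > 0$ almost surely by superuniformity.

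First I would condition on $(P,R)$, hence on $T$, and use the independence and superuniformity of $U$ to obtain $\expect[\ind{U \leq R/T} \mid P,R] \leq R/T$. Dividing by $R$ and taking an outer expectation reduces the lemma to proving $\expect[1/T] \leq c$.

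Next I would bound $\expect[1/T]$ via the layer-cake identity $1/y = \int_y^\infty z^{-2}\,dz$, Tonelli, and superuniformity of $P$ (together with the identity $\{T \leq z\} = \{P \leq c\beta(z)\}$ from the same monotonicity as above):
\[
\expect[1/T] = \int_0^\infty z^{-2}\,\prob{T \leq z}\,dz = \int_0^\infty z^{-2}\,\prob{P \leq c\beta(z)}\,dz \leq c\int_0^\infty \frac{\beta(z)}{z^{2}}\,dz.
\]
A second Tonelli swap, using $\beta(z) = \int_0^\infty x\,\ind{x \leq z}\,d\nu(x)$, then gives the ``reshaping identity''
\[
\int_0^\infty \frac{\beta(z)}{z^{2}}\,dz = \int_0^\infty x\int_x^\infty \frac{dz}{z^2}\,d\nu(x) = \int_0^\infty x\cdot\frac{1}{x}\,d\nu(x) = 1,
\]
which closes the chain and yields the bound. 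Taking $U \equiv 1$ (which is trivially superuniform) recovers the classical lemmas of \citet{blanchard_two_simple_2008} and \citet{ramdas_unified_treatment_2019} as a sanity check.

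The only real subtleties are bookkeeping around boundary cases: $T$ may be infinite when $\beta$ is bounded above, $\nu$ may have atoms, or $\beta$ may fail to be strictly increasing near $0$. These are all benign, since $\{T = \infty\}$ contributes zero on both sides, atoms of $\nu$ are handled by Lebesgue--Stieltjes integration inside the Tonelli swap, and the events $\{P = 0\}$ and $\{U = 0\}$ have probability zero by superuniformity. I do not anticipate any serious obstacle beyond verifying these conventions.
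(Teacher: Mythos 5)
Your proof is correct, and it takes a mildly but genuinely different route from the paper's. Both arguments run on the same engine --- the layer-cake identity $1/z = \int_0^\infty \ind{r \geq z}\, r^{-2}\, dr$, a Tonelli swap, superuniformity applied once to $U$ and once to $P$, and the reshaping identity $\int_0^\infty \beta(r) r^{-2}\, dr = 1$ --- but they decompose the event differently. The paper introduces the threshold $R^* \coloneqq \inf\{s \geq R : P \leq c\beta(s)\}$, which depends on both $P$ and $R$, splits the indicator as $\ind{P \leq c\beta(R)} + \ind{P > c\beta(R),\, U \leq R/R^*}$, layer-cakes both $1/R$ and $1/R^*$, and only at the very end discards the event $\{r \geq R\}$ to decouple from $R$. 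You instead invert $\beta$ on the $P$ side, setting $T \coloneqq \beta^{-1}(P/c)$, so the whole indicator collapses to $\ind{U \leq R/T}$ in one step; integrating out $U$ then cancels $R$ immediately and reduces the lemma to the clean statement $\expect[1/T] \leq c$, which is the deterministic superuniformity computation in disguise. (In fact $R^* = R \vee T$, so the two constructions are reparametrizations of one another.) Your version is arguably tidier and isolates where each hypothesis enters; its only extra burden is the Galois equivalence $\beta(y) \geq t \iff y \geq \beta^{-1}(t)$, which requires $\beta$ to be right-continuous --- true for $\beta(r) = \int_{[0,r]} x\, d\nu(x)$ --- and the boundary conventions you already flag ($T = \infty$ when $\beta$ is bounded, $c = 0$, an atom of $\nu$ at zero) are indeed benign and do not affect either side of the inequality.
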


We include the proof of \Cref{lemma:rand-superuniform} in \Cref{sec:rand-superuniform-proof}. As a consequence, we can show FDR control and an improvement from randomization for any reshaped variant of the BY procedure. Let the \emph{$\beta$-reshaped BY\ procedure} make the following number of discoveries:
\begin{align}
    k^*_\beta \coloneqq \max \left\{i \in [K]: P_{(i)} \leq \frac{\alpha \beta(i)}{K}\right\}     ,
\end{align}
and rejects the $k^*_\beta$ smallest p-values to form the discovery set $\Dcal_{\beta}$. Now, define
\begin{align}
        k^*_{\beta,\U} \coloneqq \max \left\{i \in [K]: P_{(i)} \leq \frac{\alpha \beta(i / U)}{K}\right\}.
\end{align}
The $\beta$-reshaped \RBY\ rejects the $k^*_{\beta,\U}$ smallest p-values and outputs the discovery set $\Dcal_{\beta,\U}$.

\begin{theorem}\label{thm:reshaped-rand-by}
    For any reshaping function $\beta$, the $\beta$-reshaped \RBY\ procedure ensures $\FDR \leq \alpha$ and rejects a superset of $\beta$-reshaped \BY, i.e., $\Dcal_{\beta,\U} \supseteq \Dcal_\beta$. Further, $\prob{\Dcal_{\beta, \U} \supset \Dcal_\beta}> 0$ iff \begin{align}
    \dynprob{\exists i \in [K]: \frac{\alpha\beta(k^*_{\beta} + 1)}{K} < P_i \leq \max_{r > k^*_{\beta}}\ \frac{\alpha\beta(r)}{K}} > 0. \label{eq:reshap-improv}
\end{align}
\end{theorem}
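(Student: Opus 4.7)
The plan is to mirror the structure of the proof of \Cref{thm:rounded-by}, but with \Cref{lemma:rand-superuniform} replacing the p-to-e calibration argument. First I would handle FDR control. Since the $\beta$-reshaped \RBY\ procedure is a step-up procedure, $i \in \Dcal_{\beta,\U}$ iff $P_i \leq (\alpha/K)\beta(k^*_{\beta,\U}/U)$, and the FDR decomposes as
\[
\FDR = \sum_{i \in \Ncal} \expect\left[\frac{\ind{P_i \leq (\alpha/K)\beta(k^*_{\beta,\U}/U)}}{k^*_{\beta,\U} \vee 1}\right].
\]
Applying \Cref{lemma:rand-superuniform} with $P = P_i$, $R = k^*_{\beta,\U} \vee 1$, $c = \alpha/K$, and the given reshaping $\beta$ will bound each summand by $\alpha/K$; summing over $|\Ncal| \leq K$ nulls yields $\FDR \leq \alpha$.

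Next I would show the dominance $\Dcal_{\beta,\U} \supseteq \Dcal_\beta$. Because $U \in (0,1]$ almost surely and $\beta$ is nondecreasing, $\beta(i/U) \geq \beta(i)$, so every index in the set defining $k^*_\beta$ also lies in the set defining $k^*_{\beta,\U}$, yielding $k^*_{\beta,\U} \geq k^*_\beta$ almost surely.

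For the iff, I would handle both directions. For sufficiency, if \eqref{eq:reshap-improv} holds then with positive probability some $P_j$ lies in the stated interval; by maximality of $k^*_\beta$ together with monotonicity of $\beta$, the rank $i_j$ of $P_j$ must exceed $k^*_\beta$ (otherwise $P_j \leq P_{(k^*_\beta)} \leq \alpha\beta(k^*_\beta)/K \leq \alpha\beta(k^*_\beta+1)/K$, contradicting the lower bound), and some $r_j > k^*_\beta$ witnesses $P_j \leq \alpha\beta(r_j)/K$. Conditional on this event, independence and positivity of $U$ give positive probability of $U \leq i_j/r_j$; on that further event $\beta(i_j/U) \geq \beta(r_j)$, so $P_{(i_j)} \leq \alpha\beta(i_j/U)/K$ and $k^*_{\beta,\U} \geq i_j > k^*_\beta$. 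For necessity, I would argue the contrapositive: if \eqref{eq:reshap-improv} fails almost surely, then for every $i \notin \Dcal_\beta$ we have $P_{(i)} > \alpha\beta(k^*_\beta+1)/K$ by step-up maximality of $k^*_\beta$, and the failure then forces $P_{(i)} > \sup_{r > k^*_\beta}\alpha\beta(r)/K \geq \alpha\beta(i/U)/K$ for all $U \in (0,1]$ (since $i/U > k^*_\beta$). Hence no index above $k^*_\beta$ enters the argmax, so $\Dcal_{\beta,\U} = \Dcal_\beta$ almost surely.

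The main obstacle will be cleanly invoking \Cref{lemma:rand-superuniform} in the FDR step, since $R = k^*_{\beta,\U}$ depends on $U$ as well as on the p-values; care is needed to match the lemma's hypotheses (in particular the role and independence of $U$). A secondary subtlety is whether the ``$\max$'' in \eqref{eq:reshap-improv} is attained or should be read as a supremum when $\beta$ is unbounded, since the necessity argument uses $\beta(i/U) \leq \sup_{r > k^*_\beta}\beta(r)$ with the ratio $i/U$ ranging over $[i,\infty)$.
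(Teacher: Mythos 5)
Your dominance argument and your two-directional treatment of the strict-improvement condition are essentially the paper's own (the sup-versus-max caveat you raise for unbounded $\beta$ is a real but minor edge case that the paper also glosses over). The genuine problem is the FDR step: you have correctly identified the obstacle but not resolved it, and it is not a matter of ``care'' --- as written, \Cref{lemma:rand-superuniform} cannot be invoked with $R = k^*_{\beta,\U} \vee 1$. The lemma requires $U$ to be independent of both $P$ and $R$; its proof hinges on the step $\expect[\ind{U \leq R/R^*} \mid R, P] = R/R^*$, which fails if $R$ is a function of $U$. Since $k^*_{\beta,\U}$ is defined through $U$, the lemma's hypothesis is violated in exactly the way that matters, and the claimed bound of $\alpha/K$ per summand does not follow.

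The paper closes this gap with a de-randomized surrogate for the rejection count. For each $i$ it defines
\begin{align}
    m(j) \coloneqq \max\{u \in [0,1] : P_{(j)} \leq \alpha\beta(j/u)/K\}, \qquad t(i) \coloneqq \underset{j \geq \rank(i)}{\argmax}\ m(j),
\end{align}
both functions of the p-values alone and hence independent of $U$. One then verifies the pointwise identity $\ind{P_i \leq \alpha\beta(k^*_{\beta,\U}/U)/K} = \ind{P_i \leq \alpha\beta(t(i)/U)/K}$ together with $t(i) \leq k^*_{\beta,\U}$ on the event that hypothesis $i$ is rejected, so that $1/(k^*_{\beta,\U}\vee 1) \leq 1/t(i)$ inside the expectation. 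Only after this substitution is \Cref{lemma:rand-superuniform} applicable, with $P = P_i$, $R = t(i)$, $c = \alpha/K$, giving each summand at most $\alpha/K$ and hence $\FDR \leq \alpha$. This is the analogue of the usual self-consistency/leave-one-out device in step-up proofs; without some such replacement of $k^*_{\beta,\U}$ by a $U$-independent quantity, your FDR bound does not go through.
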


As before, \eqref{eq:reshap-improv} is a very weak condition, and should be easily satisfied in almost all cases where the p-values are continuous (as long as their dependence structure is not too adversarially constructed) and for reasonable $\beta$-reshaping functions with a measure $\nu$ that allocates probability on all of $[0, K]$.

We defer the proof to \Cref{sec:reshape-proof}.
Note that \Cref{thm:reshaped-rand-by} implies \Cref{thm:rounded-by} since the BY procedure is a special case of $\beta$-reshaped BY with $\beta(r) = (\lfloor r \rfloor \wedge K) / \ell_K$, so FDR control of the \RBY\ procedure can be proven through \Cref{lemma:rand-superuniform} as well. There may be some direct relationship that can be shown between calibration functions and reshaping functions that would unify all these results, and we leave that to future work.
\added{
\begin{remark}
    One might naturally wonder whether the BH procedure \citep{benjamini_controlling_false_1995} can also be improved using randomization, as \citet{wang_false_discovery_2022} showed that it is a special case of the \emph{boosted} e-BH procedure, and~\cite{ignatiadis_asymptotic_compound_2025} point out that every FDR procedure is an instance of the (non-boosted) e-BH procedure with compound e-values.  \citet{li_note_e-values_2025a} defined e-values for which the application of e-BH recovers the BH procedure as follows:
    \begin{align}
        X_i = \frac{\ind{P_i \leq \alpha_{k^*_{\rmBH}}}}{\alpha_{k^*_{\rmBH}}},
    \end{align} where $k^*_{\rmBH}$ is the number of discoveries made by BH. However, these e-values cannot be improved, i.e., $X_i = 1 / \alpha_{k^*_{\rmBH}}$ iff $i$ is rejected by BH and $X_i = 0$ otherwise. Thus, improving the BH procedure using randomization would require further developments in the e-value formulation of BH, and it is a topic for future work.
\end{remark}
}

\section{Randomized Hommel procedure for the global null}
\label{sec:rand-simes}
In addition to FDR control in multiple testing, we can also apply our randomization ideas to testing the global null using the procedure of \citet{hommel_tests_overall_1983} (i.e., the variant of \citet{simes_improved_bonferroni_1986} that is valid under arbitrary dependence). The global null hypothesis is the hypothesis that every individual null hypothesis is true. For arbitrarily dependent p-values, the Hommel procedure simply rejects the global null if the BY procedure has made at least a single discovery. Testing the global null can be formulated as the following hypothesis test:
\begin{align}
    H_0^\gl: \Ncal = [K]\text{ vs. }H_1^\gl: \Ncal \subset [K].
\end{align}

To see how the Hommel procedure controls type I error, we note the following fact: for any procedure $\mathcal{A}$ which rejects the global null if and only if a multiple testing procedure $\mathcal{B}$ rejects any hypothesis, the following relationship will exist between the type I error of $\mathcal{A}$ and FDR of the discovery set of $\Dcal$ that is output by $\mathcal{B}$ when $H_0^\gl$ is true:
\begin{align}
    \prob{\mathcal{A} \text{ rejects }H_0^\gl} = \prob{\Dcal > 0} = \expect\left[\ind{\Dcal > 0}\right] = \FDR. \label{eq:type-I-fdr}
\end{align} Note that the last equality is because the FDP is 1 if any rejection is made under $H_0^\gl$. Thus, if $\mathcal{B}$ ensures $\FDR \leq \alpha$, the corresponding procedure $\mathcal{A}$ has type I error controlled under $\alpha$ as well. Since the BY procedure controls FDR, the Hommel procedure also controls type I error. Thus, we derive a randomized improvement of the Hommel procedure by defining the \emph{U-Hommel procedure} to reject the global null when \RBY\ makes a single discovery.

Another way of viewing the Hommel and U-Hommel procedures is as p-merging (p-value merging) procedures, i.e., a function that takes $K$ p-values as input and outputs another p-value that tests the intersection hypothesis of the input p-values; the Hommel and U-Hommel p-values are defined as follows:
\begin{align}
    P_{\Hommel} \coloneqq \min_{i \in [K]} \frac{P_{(i)}K \ell_K}{i}, \qquad P_{\UHommel} \coloneqq \min_{i \in [K]} \frac{P_{(i)}K \ell_K}{(\lfloor i / U\rfloor \wedge K)}.
\end{align} The Hommel (U-Hommel) procedure rejects at level $\alpha$ if and only if $P_{\Hommel} \leq \alpha$ ($P_{\UHommel} \leq \alpha)$.
Now,  \Cref{thm:rounded-by} and \eqref{eq:type-I-fdr} imply the following.
\begin{theorem}
    \label{thm:uhommel}
    For arbitrarily dependent p-values $(P_1, \dots, P_K)$, the U-Hommel procedure has type I error at most $\alpha$, i.e., $P_{\UHommel}$ is a p-value under $H_0^\gl$. In addition, $P_{\UHommel} \leq P_{\Hommel}$ almost surely and $\prob{P_{\UHommel} < P_{\Hommel}} > 0$ iff $\prob{\exists i \in [K - 1]: P_{i} > 0} > 0$.
\end{theorem}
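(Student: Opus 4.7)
The plan is to transfer each conclusion from the corresponding statement for the \RBY\ procedure via an identity linking U-Hommel and \RBY. The central observation is that $P_{\UHommel} \leq \alpha$ is equivalent to the existence of some $i \in [K]$ with $P_{(i)} \leq \alpha(\lfloor i/U\rfloor \wedge K)/(K\ell_K)$, i.e., to $k^*_{\RBY} \geq 1$ at level $\alpha$. Hence U-Hommel rejects $H_0^\gl$ at level $\alpha$ if and only if \RBY\ at level $\alpha$ makes at least one discovery.

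Combining this equivalence with \eqref{eq:type-I-fdr} (under $H_0^\gl$ the FDP collapses to $\ind{|\Dcal_{\RBY}|>0}$) converts the FDR guarantee for \RBY\ from \Cref{thm:rounded-by} into the bound $\prob{P_{\UHommel} \leq \alpha \mid H_0^\gl} \leq \alpha$, uniformly in $\alpha \in [0,1]$; this is exactly the superuniformity that makes $P_{\UHommel}$ a valid p-value. The almost-sure dominance $P_{\UHommel} \leq P_{\Hommel}$ is elementary: $U \leq 1$ gives $\lfloor i/U\rfloor \geq i$, and since $i \leq K$ we have $\lfloor i/U\rfloor \wedge K \geq i$, so each term in the U-Hommel minimization is pointwise no larger than the corresponding Hommel term, and the minima inherit the inequality.

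For the strict-inequality iff, the forward direction follows because $P_{\UHommel} < P_{\Hommel}$ demands $P_{\Hommel} > 0$, which rules out $P_{(1)} = 0$ (that would make the $i=1$ Hommel term vanish), ultimately yielding the claimed existence of a positive $P_i$. For the converse, on a positive-probability event where the stated condition forces $P_{(1)} > 0$, I will pick an index $i^* \in [K-1]$ attaining the Hommel minimum together with a positive-probability range of $U$-values (e.g., $U \leq i^*/K$, which pushes $\lfloor i^*/U\rfloor \wedge K$ up to $K > i^*$) so that the U-Hommel term at $i^*$ is strictly smaller than $P_{\Hommel} = P_{(i^*)} K \ell_K / i^*$. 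The main obstacle will be the corner case where the Hommel minimum is attained only at $i = K$; there I will instead exhibit some $j < K$ with $P_{(j)} < P_{(K)}$ together with a small-$U$ event that makes the $j$-th U-Hommel term strictly smaller than $P_{\Hommel} = P_{(K)}\ell_K$, which is possible precisely under the assumed condition on the $P_i$.
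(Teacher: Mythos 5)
Your treatment of the first two claims is correct and follows exactly the paper's route: Type I error control comes from \Cref{thm:rounded-by} combined with \eqref{eq:type-I-fdr} (under $H_0^\gl$ the FDP is the indicator of any rejection, so the FDR of \RBY\ at level $\alpha$ upper-bounds $\prob{P_{\UHommel}\leq\alpha}$), and the almost-sure dominance is just the pointwise inequality $\lfloor i/U\rfloor\wedge K\geq i$. The necessity direction of the strict-inequality claim is also fine: $P_{\UHommel}<P_{\Hommel}$ forces $P_{\Hommel}>0$, hence $P_{(1)}>0$, hence all $P_i>0$.

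The sufficiency direction is where your sketch has a genuine gap, and it sits precisely at the corner case you flag. First, the hypothesis $\exists i\in[K-1]:P_i>0$ does not force $P_{(1)}>0$ (it concerns unsorted p-values; even in sorted form it would only control $P_{(K-1)}$), so you cannot open with ``the stated condition forces $P_{(1)}>0$.'' Second, and more seriously, when the Hommel minimum is attained only at $i=K$, your proposed fix --- exhibiting $j<K$ with $P_{(j)}<P_{(K)}$ --- is not available under the stated hypothesis: take all $P_i$ identically equal to a constant $c\in(0,1)$. The condition then holds surely, yet every U-Hommel term is bounded below by $P_{(i)}\ell_K=c\ell_K$ (the denominator is capped at $K$), while the $i=K$ term equals $c\ell_K=P_{\Hommel}$ for every $U$; hence $P_{\UHommel}=P_{\Hommel}$ almost surely and no strict improvement occurs. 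The correct conditional characterization is: given the p-values, strict improvement has positive probability over $U$ if and only if $P_{(1)}\ell_K<P_{\Hommel}$, since each term is nonincreasing as $U$ decreases with infimum $P_{(i)}\ell_K$ attained on $U\leq 1/K$. To be fair, the paper's own justification of this part is a one-sentence remark with the same defect, so you have not missed an idea the paper supplies; but your proposal as written cannot be completed, because the equivalence it aims to prove fails exactly in the corner case you identified.
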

Note that, conditioned on the input p-values, $P_{\UHommel}$ will be less than $P_{\Hommel}$ iff one of the $K - 1$ smallest p-values is positive and $U$ is sufficiently small --- hence, there is always a positive probability that $P_{\UHommel} < P_{\Hommel}$ unless there is no probability that one of the smallest $K - 1$ p-values is positive.

\boldparagraph{Closed testing for FWER control} We can utilize the \UHommel\ p-merging function to develop a randomized closed testing procedure that controls the \emph{family-wise error (FWER)}, which is defined as the probability of making a single false discovery, i.e.,
\begin{align}
    \textnormal{FWER}\coloneqq \prob{\Dcal \cap \Ncal \neq \emptyset}.
\end{align}

For each $I\subseteq [K]$, let $H_I \coloneqq \bigcap\limits_{i \in I} H_i$ be the intersection hypothesis for $I$, where $H_i$ is the $i$th hypothesis. We call $\varphi^I \in \{0, 1\}$ an \emph{$\alpha$-level test} for $H_I$
if $\prob{\varphi^I = 1} \leq \alpha$ when $H_I$ is true ($H_i$ is true for each $i \in I$) for some fixed $\alpha \in [0, 1]$.
A \emph{closed testing procedure} using a family of $\alpha$-level tests $(\varphi^I)_{I\subseteq [K]}$ produces the following discoveries:
\begin{align}
    \Dcal_{\textnormal{closed}} \coloneqq \{i \in [K]: \varphi^I = 1 \text{ for all }I\subseteq[K]\text{ s.t.\ }i \in I\}.\label{eq:closed-discovery-set}
\end{align}

\begin{fact}[Section 2 of \citealt{marcus_closed_testing_1976}]\label{fact:fwer-closed}
    Let $(\varphi^I)_{I\subseteq [K]}$ be a family of $\alpha$-level tests for a fixed $\alpha \in [0, 1]$. Then, $\Dcal_{\textnormal{closed}}$ of the corresponding closed testing procedure satisfies $\FWER \leq \alpha$.
\end{fact}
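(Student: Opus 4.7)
The plan is to reduce control of FWER to the level of the single test $\varphi^{\Ncal}$, where $\Ncal\subseteq[K]$ is the (unknown) set of true null indices. If $\Ncal = \emptyset$, then there are no possible false discoveries and $\FWER = 0 \leq \alpha$ trivially, so I may assume $\Ncal \neq \emptyset$; in that case $H_{\Ncal} = \bigcap_{i \in \Ncal} H_i$ is a true hypothesis, so by assumption $\prob{\varphi^{\Ncal} = 1}\leq \alpha$.

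The main step is a one-line containment of events. Suppose the closed procedure makes at least one false discovery, i.e.\ $\Dcal_{\textnormal{closed}} \cap \Ncal \neq \emptyset$. Pick any $i \in \Dcal_{\textnormal{closed}} \cap \Ncal$. By the definition in \eqref{eq:closed-discovery-set}, membership of $i$ in $\Dcal_{\textnormal{closed}}$ requires $\varphi^{I} = 1$ for \emph{every} $I\subseteq[K]$ with $i \in I$. Since $i \in \Ncal$ is one such choice of $I$, we must have $\varphi^{\Ncal} = 1$ on this event. Hence
\begin{align}
\{\Dcal_{\textnormal{closed}} \cap \Ncal \neq \emptyset\} \subseteq \{\varphi^{\Ncal} = 1\},
\end{align}
and taking probabilities gives $\FWER \leq \prob{\varphi^{\Ncal} = 1} \leq \alpha$.

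There is essentially no obstacle: the entire argument rests on the observation that $\Ncal$ is always one of the index sets $I$ quantified over in the definition of $\Dcal_{\textnormal{closed}}$, so any false rejection immediately forces a rejection of the true intersection hypothesis $H_{\Ncal}$. Notably, the proof does not use independence, continuity, or any structure across the family $(\varphi^{I})_{I \subseteq [K]}$ beyond each individual test being $\alpha$-level for its own intersection hypothesis; the $\alpha$-level guarantee only needs to hold for the single (random-from-the-statistician's-perspective but deterministic in nature) choice $I = \Ncal$.
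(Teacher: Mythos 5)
Your proof is correct and is exactly the classical closed-testing argument of Marcus, Peritz, and Gabriel: any false discovery forces $\varphi^{\Ncal}=1$ because $\Ncal$ is among the index sets quantified over, so $\FWER \leq \prob{\varphi^{\Ncal}=1} \leq \alpha$. The paper states this as a fact cited from the literature without reproducing a proof, and your argument (including the trivial $\Ncal=\emptyset$ case) is the standard one.
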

\citet{sonnemann_vollstaendigkeitssaetze_fuer_1988,sonnemann1982allgemeine,sonnemann_general_solutions_2008} showed that all admissible FWER controlling procedures are closed testing procedures.
Now, let $P_{(i, I)}$ be the $i$th smallest p-value among p-values corresponding to hypotheses in $I$.
Based on the Hommel p-merging function, define the following $\alpha$-level test for $H_I$:
\begin{align}
P_{\Hommel}^I \coloneqq \min_{i \in |I|}\ \frac{P_{(i, I)}|I|\ell_{|I|}}{i}, \qquad \varphi^I_{\alpha\textnormal{-}\Hommel} \coloneqq \ind{P_{\Hommel}^I \leq \alpha},
\end{align}
\citet{meijer_hommels_procedure_2019} develop a more efficient way of computing $\Dcal_{\textnormal{closed}}$ of the closed testing procedure for $(\varphi^I_{\alpha\textnormal{-}\Hommel})_{I\subseteq [K]}$ using the following formula.
\begin{gather}
h(\alpha) \coloneqq \max\{i \in [K]: P_{(K - i + j)} > \alpha j / (i\ell_i) \text{ for all }j \in [i]\},\\
\Dcal_{\Hommel} \coloneqq \{i \in [K]: P_i \leq \alpha / (h(\alpha)\ell_{h(\alpha)})\}.     \label{eq:closed-hommel-discovery-set}
\end{gather}
\begin{fact}[Lemma 1 of \citealt{meijer_hommels_procedure_2019}]
    $\Dcal_{\textnormal{closed}}$ output by the closed testing procedure that uses $(\varphi^I_{\alpha\textnormal{-}\Hommel})_{I\subseteq [K]}$ is equivalent to $\Dcal_{\Hommel}$ and ensures $\FWER \leq \alpha$.
\end{fact}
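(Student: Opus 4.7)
The plan is to establish the FWER bound and the combinatorial identity $\Dcal_{\textnormal{closed}} = \Dcal_{\Hommel}$ separately. The FWER bound is essentially immediate: for each nonempty $I \subseteq [K]$, the statistic $P^I_{\Hommel}$ is precisely the Hommel p-merging function applied to the $|I|$ p-values indexed by $I$, so under $H_I$ it is a valid p-value by the $U \equiv 1$ specialization of \Cref{thm:uhommel}. Hence $\varphi^I_{\alpha\textnormal{-}\Hommel}$ is an $\alpha$-level test for $H_I$, and invoking \Cref{fact:fwer-closed} yields $\FWER \leq \alpha$ for $\Dcal_{\textnormal{closed}}$.

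For the equivalence, I would first prove the combinatorial lemma that $h(\alpha)$ equals the largest cardinality $|I|$ over all $I \subseteq [K]$ with $P^I_{\Hommel} > \alpha$. The direction $h(\alpha) \leq \max |I|$ is witnessed by the set $I^* \subseteq [K]$ of indices of the $h(\alpha)$ largest p-values, since by definition the ordered p-values of $I^*$ strictly exceed $\alpha j / (h(\alpha)\,\ell_{h(\alpha)})$ for all $j \in [h(\alpha)]$. The reverse direction follows from a simple monotonicity argument: given any $I$ with $|I| = k$ and $P^I_{\Hommel} > \alpha$, replacing $I$ by the $k$ indices holding the largest p-values produces a set whose ordered p-values pointwise dominate those of $I$, so the strict threshold inequalities are preserved and $k \leq h(\alpha)$ by the definition of $h(\alpha)$.

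With this lemma in hand, the equivalence splits into two cases. If $P_i \leq \alpha / (h(\alpha)\,\ell_{h(\alpha)})$ and some $I \ni i$ had $\varphi^I_{\alpha\textnormal{-}\Hommel} = 0$, then taking $j = 1$ in the definition of $P^I_{\Hommel}$ would give $P_i \geq P_{(1, I)} > \alpha / (|I|\,\ell_{|I|})$; since $|I| \leq h(\alpha)$ by the lemma and $k \mapsto k\,\ell_k$ is nondecreasing, this contradicts the assumption on $P_i$. Conversely, if $P_i > \alpha / (h(\alpha)\,\ell_{h(\alpha)})$, I would explicitly exhibit an unrejected intersection containing $i$: set $I = (I^* \setminus \{j^*\}) \cup \{i\}$, where $j^*$ is the index in $I^*$ holding the smallest p-value (and skip the removal if $i \in I^*$ already). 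The ordered p-values of $I$ are $P_i$ followed by the top $h(\alpha)-1$ p-values of the sample, so the $j = 1$ threshold inequality is supplied by the hypothesis on $P_i$ and the $j \geq 2$ inequalities inherit directly from the defining inequalities of $h(\alpha)$ evaluated on $I^*$. The main obstacle is carrying out this swap cleanly --- one must verify that exchanging $j^*$ for $i$ preserves both the ordering of the remaining top $h(\alpha)-1$ p-values and the index alignment with the thresholds $\alpha j/(h(\alpha)\,\ell_{h(\alpha)})$; once that bookkeeping is in place, everything else is routine.
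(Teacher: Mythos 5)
Your proposal is correct and follows essentially the same route as the paper's argument (which itself mirrors Lemma 1 of Meijer et al., and which the paper reproduces for the randomized analogue in its proof of \Cref{thm:closed-uhommel}): the FWER bound via \Cref{fact:fwer-closed}, the order-statistic dominance $P_{(j,I)} \leq P_{(K-|I|+j)}$ to dispose of intersections with $|I| > h(\alpha)$, the monotonicity of $k \mapsto k\ell_k$ for the case $|I| \leq h(\alpha)$, and the explicit witness set consisting of $i$ together with the indices of the $h(\alpha)-1$ largest p-values for the converse. Packaging the first of these steps as a standalone lemma characterizing $h(\alpha)$ as the largest cardinality of an unrejected intersection is a cosmetic reorganization, not a different proof.
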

For our randomized version, define the local tests as follows:
\begin{align}
P_{\UHommel}^I \coloneqq \min_{i \in |I|}\ \frac{P_{(i, I)}|I|\ell_{|I|}}{(\lfloor i / U \rfloor \wedge |I|)}, \qquad \varphi^I_{\alpha\textnormal{-}\UHommel} \coloneqq \ind{P_{\UHommel}^I \leq \alpha},
\end{align}
Define the following corresponding discovery set:
\begin{gather}
h_{\U}(\alpha) \coloneqq \max\{i \in [K]: P_{(K - i + j)} > \alpha (\lfloor j / U \rfloor \wedge i) / (i\ell_i) \text{ for all }j \in [i]\},\\
\Dcal_{\UHommel} \coloneqq \{i \in [K]: P_i \leq \alpha \lfloor U^{-1} \wedge h_{\U}(\alpha)  \rfloor/ (h_{\U}(\alpha)\ell_{h_{\U}(\alpha)})\}.     \label{eq:closed-uhommel-discovery-set}
\end{gather}
\begin{theorem}
    \label{thm:closed-uhommel}
    $\Dcal_{\textnormal{closed}}$, produced by the closed testing procedure for $(\varphi_{\alpha\textnormal{-}\UHommel}^I)_{I \subseteq [K]}$, is equivalent to $\Dcal_{\UHommel}$ and ensures $\FWER \leq \alpha$.
    Further, $\Dcal_{\UHommel} \supseteq \Dcal_{\Hommel}$ holds almost surely.
\end{theorem}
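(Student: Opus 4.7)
The plan is to dispatch the three claims of \Cref{thm:closed-uhommel} in order: FWER control, the shortcut equivalence $\Dcal_{\textnormal{closed}} = \Dcal_{\UHommel}$, and the containment $\Dcal_{\UHommel} \supseteq \Dcal_{\Hommel}$. FWER control is the easiest: for any $I \subseteq [K]$ on which $H_I$ is true, the p-values $\{P_i : i \in I\}$ form an arbitrarily dependent family with $U$ independent of them, so $P^I_{\UHommel}$ is the U-Hommel p-value of this $|I|$-element subfamily; applying \Cref{thm:uhommel} to the subfamily shows $\varphi^I_{\alpha\textnormal{-}\UHommel}$ is a valid $\alpha$-level test, and \Cref{fact:fwer-closed} yields $\FWER \leq \alpha$.

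The main work is the shortcut, where I would adapt the argument of \citet{meijer_hommels_procedure_2019} with the randomized denominator $\lfloor j/U\rfloor \wedge |I|$ replacing the original $j$. The workhorse inequality is $P_{(j, I)} \leq P_{(K - |I| + j)}$ for every $j$ (the $j$-th smallest of any $|I|$-subset is dominated by the $j$-th smallest of the top $|I|$). This immediately settles all sizes $|I| > h_{\U}(\alpha)$, because the definition of $h_{\U}(\alpha)$ supplies a $j$ making the corresponding ratio $\leq \alpha$, forcing $P^I_{\UHommel} \leq \alpha$. For sizes $|I| \leq h_{\U}(\alpha)$, I would bound $P^I_{\UHommel}$ by its $j = 1$ term, using $P_{(1, I)} \leq P_i$, and then invoke the key monotonicity that $s \mapsto s\ell_s/(\lfloor 1/U\rfloor \wedge s)$ is nondecreasing in $s$ (argued separately on $s \leq \lfloor 1/U\rfloor$ and $s > \lfloor 1/U\rfloor$) to reduce the sufficiency check across all such $s$ to a single inequality at $s = h_{\U}(\alpha)$, which is exactly the threshold in \eqref{eq:closed-uhommel-discovery-set}. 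For necessity, I would exhibit an explicit witness $I \ni i$ of size $h_{\U}(\alpha)$ for which $P^I_{\UHommel} > \alpha$: take the top $h_{\U}(\alpha) - 1$ p-values together with $i$ when $P_i$ is outside the top $h_{\U}(\alpha)$, and the top $h_{\U}(\alpha)$ p-values otherwise; the definition of $h_{\U}(\alpha)$ keeps every $j \geq 2$ term above $\alpha$, and the $j = 1$ term exceeds $\alpha$ precisely when $P_i$ violates the stated threshold.

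For the containment, since $U \in [0, 1]$ we have $\lfloor j/U\rfloor \wedge |I| \geq j$ for every $j \in [|I|]$, so termwise $P^I_{\UHommel} \leq P^I_{\Hommel}$, giving $\varphi^I_{\alpha\textnormal{-}\UHommel} \geq \varphi^I_{\alpha\textnormal{-}\Hommel}$ for every $I$; closed testing is monotone in its local tests, so $\Dcal_{\UHommel} \supseteq \Dcal_{\Hommel}$ almost surely. The main obstacle is clearly the shortcut step: the randomized denominator $\lfloor j/U\rfloor \wedge |I|$ is only piecewise linear in $j$ and $|I|$, breaking the clean ratio algebra of the deterministic case. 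The key technical hook rescuing the argument is the monotonicity of $s \mapsto s\ell_s/(\lfloor 1/U\rfloor \wedge s)$, which lets the per-size sufficiency condition collapse to its strongest form at $s = h_{\U}(\alpha)$ and thus to the single threshold displayed in \eqref{eq:closed-uhommel-discovery-set}.
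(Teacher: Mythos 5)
Your proposal is correct and follows essentially the same route as the paper's proof: the subfamily application of \Cref{thm:uhommel} plus \Cref{fact:fwer-closed} for FWER, the Meijer-et-al.-style shortcut argument split on $|I|$ versus $h_{\U}(\alpha)$ with the witness set of the top $h_{\U}(\alpha)-1$ p-values together with $i$, and the termwise comparison $\lfloor j/U\rfloor \wedge |I| \geq j$ for the containment. Your only addition is to spell out the monotonicity of $s \mapsto s\ell_s/(\lfloor 1/U\rfloor \wedge s)$, a step the paper uses implicitly when invoking ``the definition of $\Dcal_{\UHommel}$ and $|I| \leq h_{\U}(\alpha)$''.
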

We defer proof of this theorem to \Cref{sec:uhommel-proof}.
One can also derive a family of randomized procedures through closed testing by using the randomized admissible p-merging functions we introduce in \Cref{sec:admissible}.
\section{Randomized post-selection inference with confidence intervals}\label{sec:posi}

    We can also apply our randomization results to the post-selection inference analogs of e-BH and BY \citep{benjamini_false_discovery_2005,xu_post-selection_inference_2022}. In the post-selection inference problem, we consider $K$ parameters of initial interest, $(\vartheta_1, \dots, \vartheta_K)$, and denote their true values as
    \({\boldsymbol\theta}^* \coloneqq (\theta_1^*, \dots, \theta_K^*)\).  Formally, a parameter is defined as a functional $\vartheta: \Pcal \rightarrow \Theta$ where $\Pcal$ is the universe of distributions and $\Theta$ is the support of parameter values. Let $\pdist^* \in \Pcal$ denote the true distribution, and consequently $\theta_i^* = \vartheta_i(\pdist^*) \in \Theta$ (here we let the range of all parameters be the same for simplicity). For any arbitrary $\pdist \in \Pcal$, we let $\sprob_{\pdist}$ and $\expect_{\pdist}$ denote the probability of an event and expectation of a random variable under $\pdist$, respectively.

    We observe some data, \(\mathbf{D} = (D_1, \dots, D_K)\) that is drawn from the true distribution $\pdist^*$. For each $i \in [K]$, we assume that we can construct confidence intervals for each parameter $\vartheta_i$, i.e., we have access to $C_i : [0, 1] \rightarrow 2^\Theta$ that is constructed from $D_i$ and satisfies the following property:
\begin{align}
    \sprob_{\pdist}(\vartheta_i(\pdist) \in C_i(\alpha)) \geq 1 - \alpha \text{ for all }\pdist \in \Pcal \text{ and } \alpha \in [0, 1]. \label{eq:ci-def}
\end{align}

The data $\mathbf{D}$ is also applied to a (possibly unknown) selection rule \(\selalg: \mathbf{D} \to 2^{[K]}\) to produce a selection of parameters \(\selset \coloneqq \selalg(\mathbf{D}) \subseteq [K]\) that we wish to perform inference on.

    Our goal is to formulate a (possibly data-dependent) method for choosing $\{\alphath_i\}_{i \in \selset}$. We want the confidence intervals we produce, $(C_i(\alphath_i))_{i \in \selset}$, to be valid, i.e., most of the true parameter values of the selected parameters in $\selset$ are covered by their respective CIs. Similar to FDR control for multiple testing, we can control the \emph{false coverage rate (FCR)} to be bounded under a fixed level $\alpha \in [0, 1]$. The FCR is the expected proportion of selected parameters whose true values are not covered --- we formally define it in the following fashion.
\begin{align}
    \FCP \coloneqq \frac{\sum_{i \in \selset}\ind{\theta_i^* \not\in C_i(\alphath_i)}}{|\selset| \vee 1}, \qquad \FCR \coloneqq \expect\left[\FCP\right].
\end{align} Here, $\vee$ is the maximum operator.

To generalize our randomization results to post-selection inference, we also require the concept of an e-confidence interval \citep{vovk_confidence_discoveries_2023,xu_post-selection_inference_2022}.
\begin{definition} \label{def:e-ci}
    $C(\alpha)$ is said to be a \emph{$(1 - \alpha)$-e-confidence interval (e-CI)} for the parameter $\vartheta$ if there exists a family of e-values $\{X(\theta)_{\theta \in \Theta}\}$ associated with parameter $\vartheta$, i.e.,  $X(\theta)$ is an e-value for the set of distributions $\Pcal_\theta = \{\pdist \in \Pcal: \vartheta(\pdist) = \theta\}$ for each $\theta \in \Theta$, where $C(\alpha)$ satisfies the following identity for all $\alpha \in [0, 1]$:
    \begin{align}
        C(\alpha) = \left\{\theta \in \Theta: X(\theta) < \alpha^{-1}\right\}
    \end{align}
    Furthermore, all e-CIs are also CIs, i.e., $C(\alpha)$ is a $(1 - \alpha)$-CI for $\vartheta$ and $C$ satisfies \eqref{eq:ci-def} \citep[Prop. 1]{xu_post-selection_inference_2022}.
\end{definition}

The e-BY procedure \citep{xu_post-selection_inference_2022} requires that $C_i(\alpha)$ are e-CIs for each $i \in [K]$ and $\alpha \in [0, 1]$ and outputs the following error levels for each $i \in \selset$:
\begin{align}
    \alphath_i = \alpha |\selset| / K \text{ for each }i \in \selset, \tag*{(e-BY procedure)}
\end{align}
We can define variants of the e-BY procedure similarly to the ones for the e-BH procedure. For simplicity, we will only explicitly define and prove properties of the Ue-BY procedure, which is defined as follows:
\begin{align}
\alphath_i = \alpha |\selset| / (UK) \text{ for each }i \in \selset. \tag*{(Ue-BY procedure)}
\end{align}
As seen previously, $U$ is a uniform random variable over $[0, 1]$ independent of $\mathbf{X}$. Now, we have the following guarantee:
\begin{theorem}\label{thm:ueby}
    Assume that, for each $i \in [K]$, $C_i(\alpha)$ is an e-CI for all $\alpha \in [0, 1]$. Then, the Ue-BY procedure ensures $\FCR \leq \alpha$ for any $\alpha \in [0, 1]$ under any dependence structure among $D_1, \dots, D_K$ and selection rule $\selalg$.
\end{theorem}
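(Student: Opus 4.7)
The plan is to reduce FCR control of Ue-BY to the marginal e-value property of adaptively rounded e-values (\Cref{prop:dynamic-round-e}), thereby mirroring the FCR proof of the deterministic e-BY procedure but with the stochastic rounding identity playing the role of Markov's inequality. By the e-CI definition (\Cref{def:e-ci}), each $C_i$ is induced by an e-value family $\{X_i(\theta)\}_{\theta \in \Theta}$; write $X_i^* \coloneqq X_i(\theta_i^*)$ and $\widehat\alpha \coloneqq \alpha|\selset|/K$, so the non-coverage event becomes $\{X_i^* \geq 1/\alphath_i\} = \{X_i^* \geq U/\widehat\alpha\}$. Crucially, $X_i^*$ is an e-value under $\pdist^*$ because $\vartheta_i(\pdist^*) = \theta_i^*$, while $(\widehat\alpha, X_1^*, \dots, X_K^*)$ are all functions of $\mathbf{D}$ and hence jointly independent of the external uniform $U$.

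The key observation is that the non-coverage indicator coincides with the rejection indicator of adaptive stochastic rounding. From the representation \eqref{eq:rand-e}, $S_{\widehat\alpha}(X_i^*) \geq \widehat\alpha^{-1}$ iff $X_i^* \geq U\widehat\alpha^{-1}$: when $X_i^* \geq \widehat\alpha^{-1}$ both sides hold trivially since $U \leq 1$, and when $X_i^* < \widehat\alpha^{-1}$ the second branch of the max attains $\widehat\alpha^{-1}$ exactly on the event $\{X_i^* \geq U\widehat\alpha^{-1}\}$. Combining this equivalence with the bound $\widehat\alpha\,S_{\widehat\alpha}(X_i^*) \geq \ind{S_{\widehat\alpha}(X_i^*) \geq \widehat\alpha^{-1}}$ and the fact that $|\selset|/(|\selset|\vee 1) \leq 1$,
\begin{align*}
\FCR
= \expect\!\left[\frac{\sum_{i \in \selset}\ind{S_{\widehat\alpha}(X_i^*) \geq \widehat\alpha^{-1}}}{|\selset| \vee 1}\right]
\leq \expect\!\left[\frac{\widehat\alpha\sum_{i\in\selset}S_{\widehat\alpha}(X_i^*)}{|\selset|\vee 1}\right]
\leq \frac{\alpha}{K}\sum_{i \in [K]}\expect[S_{\widehat\alpha}(X_i^*)],
\end{align*}
where the last step extends the summation from $\selset$ to $[K]$ using nonnegativity of $S_{\widehat\alpha}(X_i^*)$.

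I would then conclude by invoking \Cref{prop:dynamic-round-e} marginally for each $i \in [K]$: because that proposition explicitly permits $\widehat\alpha$ to depend arbitrarily on $X_i^*$, and because $U$ is uniform and independent of $\mathbf{D}$, we obtain $\expect[S_{\widehat\alpha}(X_i^*)] = \expect[X_i^*] \leq 1$, giving $\FCR \leq \alpha$. The main subtlety worth flagging is that the same external $U$ drives the rounding across all $K$ coordinates simultaneously; this is harmless because \Cref{prop:dynamic-round-e} is applied coordinate-wise and never requires any joint independence of the rounded e-values. Arbitrary dependence among $(D_1, \dots, D_K)$ and any measurable selection rule $\selalg$ are then handled automatically, since the argument uses nothing beyond $U \perp \mathbf{D}$ and the individual e-value property of each $X_i^*$ under $\pdist^*$.
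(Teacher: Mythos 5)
Your proposal is correct and follows essentially the same route as the paper's own proof: both translate the non-coverage event $\{X_i(\theta_i^*) \geq U/\widehat\alpha\}$ into the rejection indicator of the adaptively rounded e-value $S_{\widehat\alpha}(X_i(\theta_i^*))$, bound the indicator by $\widehat\alpha\,S_{\widehat\alpha}(X_i(\theta_i^*))$, and conclude via the preserved e-value property from \Cref{prop:dynamic-round-e}. Your explicit remark that the shared external $U$ across coordinates is harmless because the proposition is applied marginally is exactly the role the paper assigns to the joint rounding construction of \eqref{eq:joint-rounding}.
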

\begin{proof}
    Since $C_i(\alpha)$ is an e-CI, we can denote the e-value for the true parameter value $\theta_i^*$ in the associated family of e-values as $X_i(\theta^*)$. Let $\widehat{\alpha}_{|\selset|} \coloneqq \alpha|\selset|/ K$. Now, let $S_{\widehat{\alpha}_{|\selset|}}(X_i(\theta_i^*))$ be the jointly rounded e-value derived from $X_i(\theta_i^*)$ that utilizes the external randomness $U$ from \eqref{eq:joint-rounding}. We can make the following derivations about the FCR:
\begin{align}
    &\FCR
    = \expect\left[\frac{\sum\limits_{i = 1}^K \ind{\theta_i^* \not\in C_i(\alpha|\selset| / (UK))} \cdot \ind{i \in \selset}}{|\selset| \vee 1}\right]
    \labelrel{=}{rel:family-e-ci} \expect\left[\frac{\sum\limits_{i = 1}^K \ind{X_i(\theta_i^*) \geq U / \alpha_{|\selset|}} \cdot \ind{i \in \selset}}{|\selset| \vee 1}\right]\\
    &= \expect\left[\frac{\sum\limits_{i = 1}^K \ind{\alpha_{|\selset|}X_i(\theta_i^*) \geq U } \cdot \ind{i \in \selset}}{|\selset| \vee 1}\right]
    \labelrel{=}{rel:stoch-evalue} \expect\left[\frac{\sum\limits_{i = 1}^K \ind{S_{\alpha_{|\selset|}}(X_i(\theta_i^*)) \geq \alpha_{|\selset|}} \cdot \ind{i \in \selset}}{|\selset| \vee 1}\right]\\
    &\labelrel{\leq}{eq:det-ub} \expect\left[\frac{\sum\limits_{i = 1}^K \alpha_{|\selset|}S_{\alpha_{|\selset|}}(X_i(\theta_i^*))}{|\selset| \vee 1}\right]
    = \frac{\alpha}{K}\sum\limits_{i = 1}^K\expect\left[S_{\alpha_{|\selset|}}(X_i(\theta_i^*))\right] \leq \alpha.
\end{align}
Equality \eqref{rel:family-e-ci} holds by the definition of an e-CI. Equality \eqref{rel:stoch-evalue} comes from observing the formulation of stochastic rounding in \eqref{eq:rand-e} and noting that $\ind{S_{\widehat\alpha}(X) \geq \widehat\alpha{-1}} = \ind{\widehat\alpha X \geq U}$ for any $\widehat\alpha \in [0, 1]$. Thus, we have shown our desired result.
\end{proof}
Define the CI-BY procedure (under general dependence) from \citet{benjamini_false_discovery_2005} for post-selection inference as choosing the following error levels:
\begin{align}
    \alpha_i = \alpha |\selset| / (K\ell_K) \text{ for each }i \in \selset. \tag*{(CI-BY procedure)}
\end{align}
\citet{xu_post-selection_inference_2022} showed that the CI-BY procedure can be formulated as a special case of the e-BY procedure that results from calibrating standard CIs to e-CIs, in the same way the BY procedure for multiple testing is a special case of e-BH.
\begin{fact}[Theorem 1 of \citet{xu_post-selection_inference_2022}]
    If $C(\alpha)$ is a CI that satisfies \eqref{eq:ci-def}, then
    \begin{align}
    C^{\caltext}(\alpha) \coloneqq C(f^{-1}(\alpha^{-1}))
    \end{align}is an e-CI for each $\alpha \in [0, 1]$.
\end{fact}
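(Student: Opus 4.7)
The plan is to verify the two clauses of Definition \ref{def:e-ci} by explicitly constructing, for each $\theta\in\Theta$, an e-value $X(\theta)$ from the given CI $C$ and the calibrator $f$, and then showing that the resulting superlevel set agrees with $C^{\caltext}(\alpha)$.

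First I would extract a p-value family from $C$. For each $\theta\in\Theta$, define
\begin{align}
P(\theta) \coloneqq \inf\{\alpha\in[0,1] : \theta \notin C(\alpha)\},
\end{align}
with the convention $\inf\emptyset = 1$. The validity condition \eqref{eq:ci-def} on $C$ translates directly into superuniformity of $P(\theta)$: for every $\pdist\in\Pcal_\theta$ and $s\in[0,1]$, the event $\{P(\theta)\le s\}$ is (essentially) contained in $\{\theta\notin C(s)\}$, whose probability is at most $s$. Hence $P(\theta)$ is a valid p-value on $\Pcal_\theta$, and by the nesting property of CIs we get the equivalence $\theta\in C(\alpha) \Longleftrightarrow P(\theta) > \alpha$ for all $\alpha$ at which $f^{-1}$ will be evaluated (with the usual boundary conventions).

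Next I would calibrate. Define
\begin{align}
X(\theta) \coloneqq f(P(\theta)),
\end{align}
where $f$ is the (nonincreasing, upper semicontinuous) p-to-e calibrator satisfying $\int_0^1 f(x)\,dx = 1$. By the standard calibration fact, $X(\theta)$ is an e-value on $\Pcal_\theta$ for each $\theta\in\Theta$, which establishes the required family $\{X(\theta)\}_{\theta\in\Theta}$.

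Finally I would verify the characterization in Definition \ref{def:e-ci}. For any $\alpha\in[0,1]$, using the equivalence above and the monotonicity of $f$,
\begin{align}
\theta \in C^{\caltext}(\alpha) = C(f^{-1}(\alpha^{-1})) \Longleftrightarrow P(\theta) > f^{-1}(\alpha^{-1}) \Longleftrightarrow f(P(\theta)) < \alpha^{-1} \Longleftrightarrow X(\theta) < \alpha^{-1},
\end{align}
which is exactly the e-CI identity. The main obstacle is the middle equivalence: $f$ is only nonincreasing and upper semicontinuous, so one must define $f^{-1}$ as a generalized (quantile-style) inverse and check that the strict/non-strict inequality directions line up correctly with the definition of $P(\theta)$ (left-continuity vs.\ right-continuity at jump points). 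Once the inverse is fixed as $f^{-1}(y)\coloneqq \inf\{x\in[0,1]:f(x)\le y\}$ (or the symmetric choice, depending on $f$'s convention), the equivalence holds by standard properties of generalized inverses, and the three displayed steps combine to give the e-CI property.
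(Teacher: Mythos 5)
The paper does not actually prove this statement---it is imported verbatim as a Fact from Theorem~1 of the cited reference---so there is no in-paper proof to compare against. Your route (dualize the CI family to a p-value $P(\theta)=\inf\{\alpha:\theta\notin C(\alpha)\}$, calibrate to $X(\theta)=f(P(\theta))$, then match superlevel sets) is exactly the standard argument used in that reference, and the overall structure is sound.

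One assumption you are using silently is doing real work: the family $\{C(\alpha)\}_{\alpha\in[0,1]}$ must be \emph{nested} (decreasing in $\alpha$). Condition \eqref{eq:ci-def} only asserts marginal coverage of each $C(\alpha)$ separately; without nestedness, the inclusion $\{P(\theta)\le s\}\subseteq\{\theta\notin C(s)\}$ fails (the infimum could be attained by some $\alpha<s$ with $\theta\in C(s)$ anyway), so $P(\theta)$ need not be superuniform, and the equivalence $\theta\in C(\beta)\Leftrightarrow P(\theta)>\beta$ breaks. Moreover the e-CI definition demands the exact identity $C^{\caltext}(\alpha)=\{\theta:X(\theta)<\alpha^{-1}\}$, which itself forces the output family to be nested, so the hypothesis cannot be dispensed with; you should either state it or note that one may monotonize $C$ first. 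Beyond that, the issues you flag yourself---the boundary case $P(\theta)=\beta$ exactly, the choice of generalized inverse $f^{-1}$, and the fact that the calibrator must be taken nonincreasing (the paper's working definition only requires upper semicontinuity and unit integral) for $f^{-1}$ and the middle equivalence to make sense---are real but are handled by fixing conventions, as you indicate. With nestedness made explicit, the proof is correct and matches the source.
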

One can derive the FCR control of the CI-BY procedure by viewing it as an application of the e-BY procedure to calibrated CIs.
\begin{fact}[Theorem 4 of \citet{benjamini_false_discovery_2005}; Corollary 2 of \citet{xu_post-selection_inference_2022}]\label{fact:by-from-eby}
    The CI-BY procedure controls $\FCR \leq \alpha$ for any selection rule $\selalg$ and dependence structure among $D_1, \dots, D_K$.
    Further, the CI-BY procedure is equivalent to applying the e-BY procedure to the calibrated e-CIs $C^{\caltext}$ with the calibrator $f$ defined for the BY procedure in \eqref{eq:by-calibrator}.
\end{fact}

Now, we can define the UCI-BY\ procedure for post-selection inference as follows:
\begin{align}
    \alpha_i = \alpha (\lfloor |\selset|U^{-1} \rfloor \wedge K) / (K\ell_K) \text{ for each }i \in \selset, \tag*{(UCI-BY procedure)}
\end{align}
By combining \Cref{fact:by-from-eby} and \Cref{thm:ueby}, we get that \RBY\ also has FCR control.
\begin{theorem}
    The UCI-BY procedure controls $\FCR \leq \alpha$ for any selection rule $\selalg$ and dependence structure among $D_1, \dots, D_K$.
\end{theorem} As a result, we have developed randomized versions of the e-BY and CI-BY procedures for post-selection inference that produce CIs that are always tighter than their deterministic versions, and are strictly tighter (i.e., have larger $\alpha_i$) with nonzero probability.

\section{Simulations}
\label{sec:Simulations}
\begin{figure}[h]
    \centering
    \begin{subfigure}{\textwidth}
        \centering
\includegraphics[width=\textwidth]{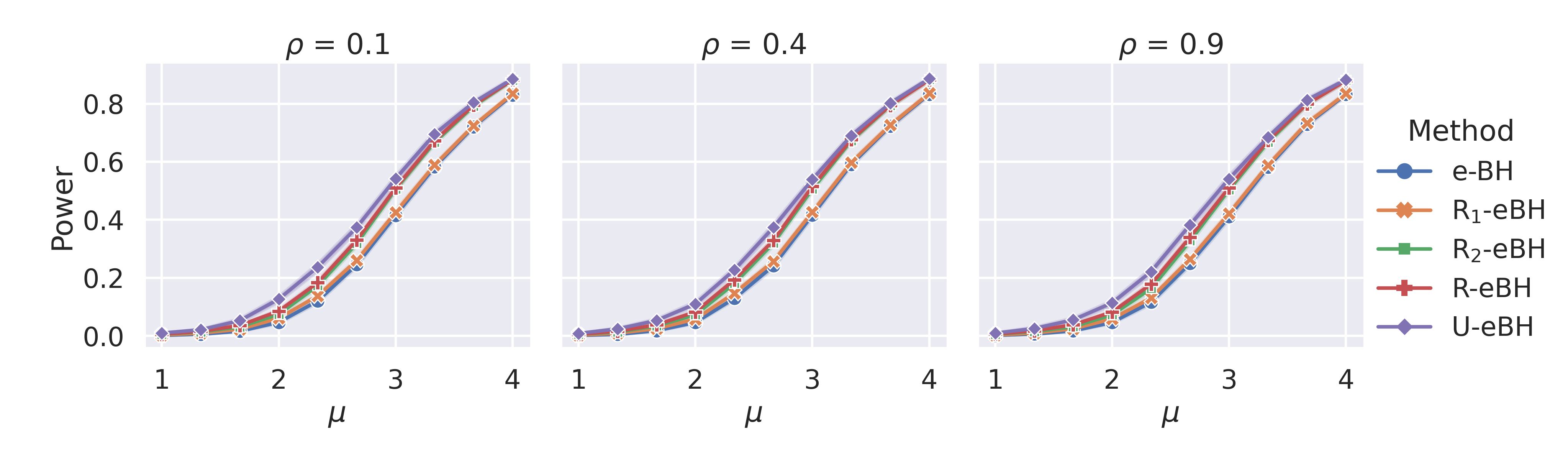} 
    \caption{Positive dependence: \RtwoEBH, \RbothEBH, and \UEBH\ all have more power than e-BH.}
    \end{subfigure}

    \begin{subfigure}{\textwidth}
        \centering
\includegraphics[width=\textwidth]{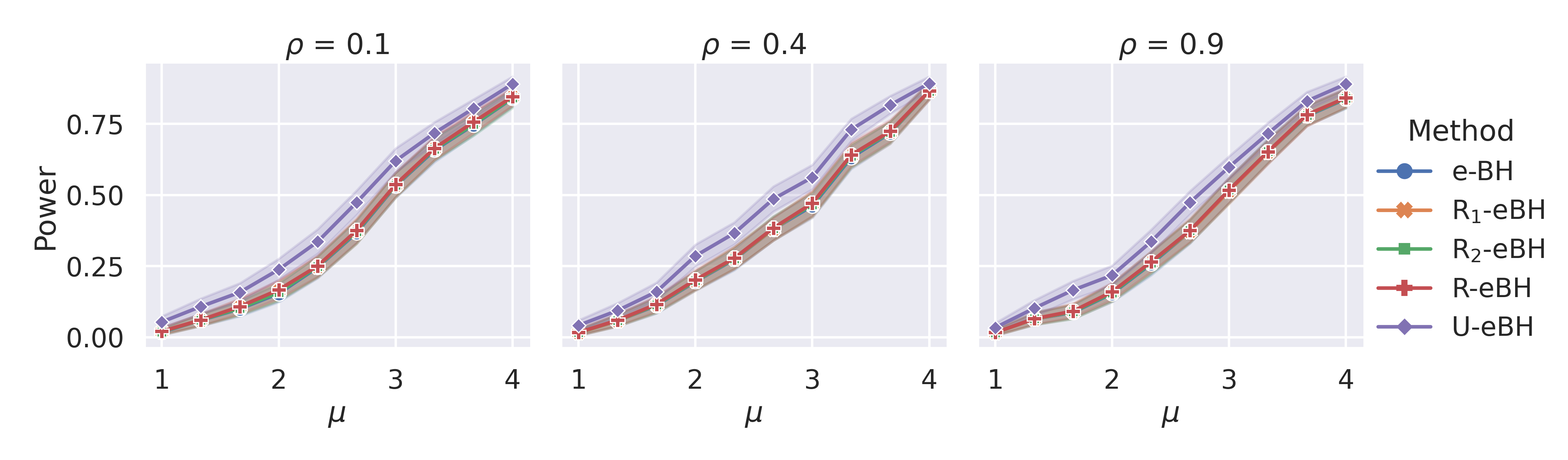}
    \caption{Negative dependence: \UEBH\ is much more powerful than the other methods.}
    \end{subfigure}
\caption{\added{Plots of the power for randomized e-BH methods and e-BH. The hypotheses are one-sided tests of the mean of a standard Gaussian with covariance parameterized by $\rho$ (larger $\rho$ means larger covariance) and non-null means of $\mu$. \UEBH\ is the most powerful in each setting, and randomized procedures uniformly improve on e-BH.}}
    
    \label{fig:ebh-heatmap}
\end{figure}
\begin{figure}[h!]
    \centering
    \begin{subfigure}{\textwidth}
\includegraphics[width=\textwidth]{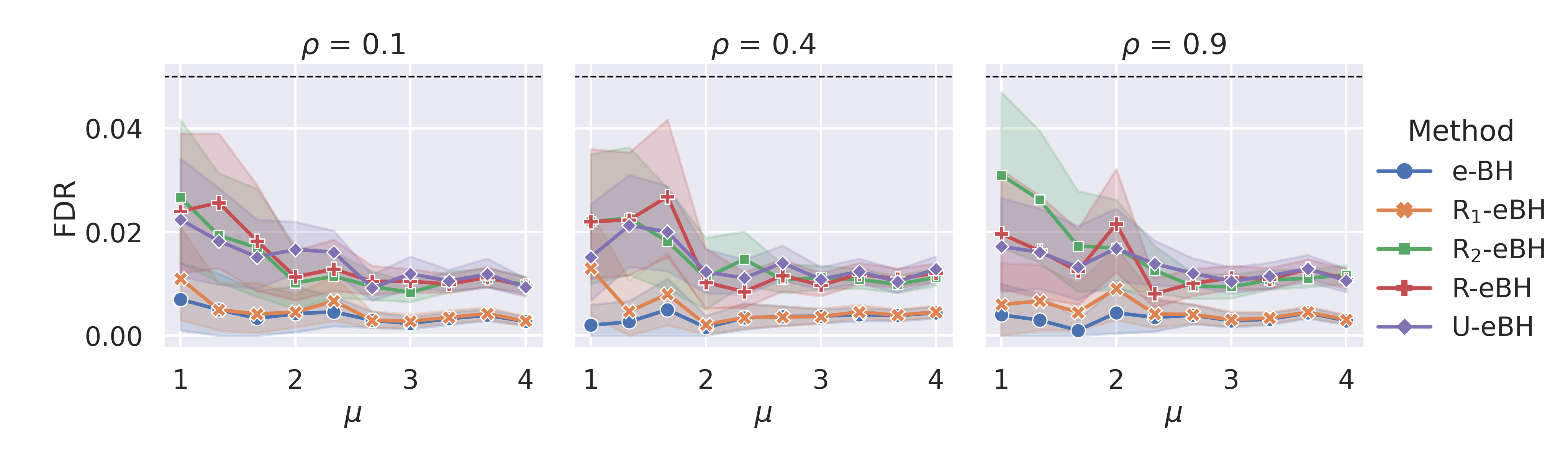} 

    \caption{Positive dependence: \RtwoEBH, \RbothEBH, and \UEBH\ all have similar FDR.}
    \end{subfigure}

    \begin{subfigure}{\textwidth}
\includegraphics[width=\textwidth]{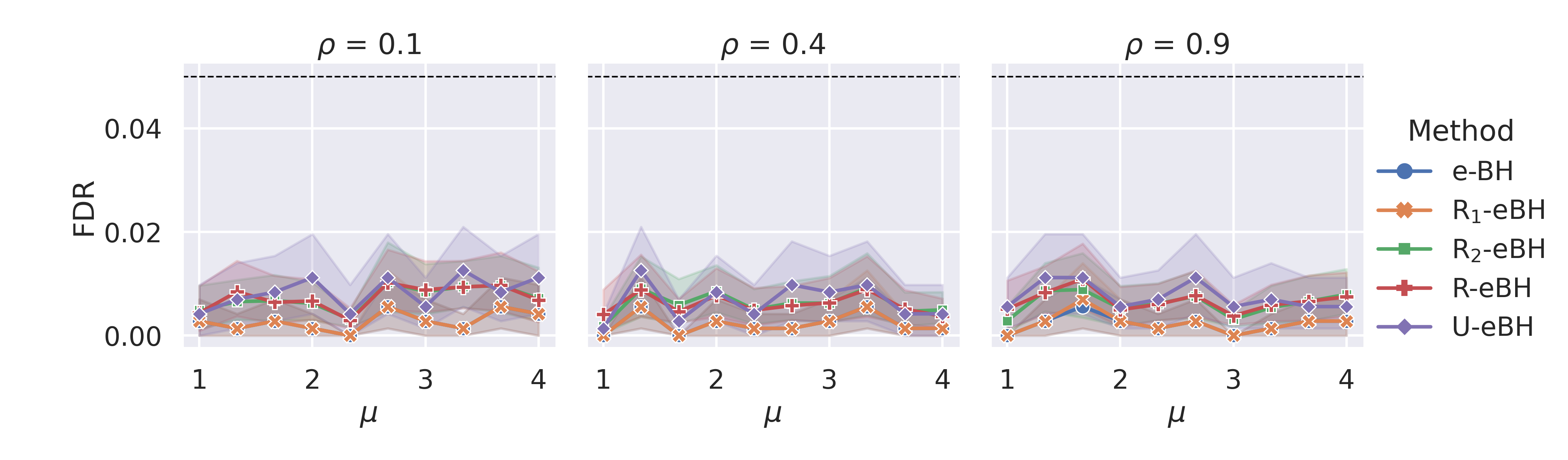} 

    \caption{Negative dependence: \UEBH\ has much larger achieved FDR across all settings.}
    \end{subfigure}
\caption{Plots of the achieved FDR of randomized e-BH methods and e-BH across different signal strengths $\mu$ and correlations $\rho$. Note that all the FDR values are under the desired level of $\alpha=0.05$. Further, the randomized e-BH methods have higher realized FDR, i.e., tighter FDR control, than e-BH (but still below $\alpha$).
    }
    \label{fig:ebh-fdr}
\end{figure}
We run simulations of the e-BH procedure and the BY procedure in the Gaussian setting, where we test $K = 100$ hypotheses. Let $\pi_0 = (K - |\Ncal|)/ K = 0.3$, i.e., the proportion of hypotheses where the null is false. For each hypothesis, we sample $X_i \sim (\mu_i, 1)$, and perform the following one-sided hypothesis test:
\begin{align}
    H_0:&\ \mu_i = 0\ \text{vs.}\ H_1:\ \mu_i \geq 0,
\end{align} for each $i \in [K]$. We consider two dependence settings:
\begin{enumerate}
    \item Positive dependence: $\mathrm{Cov}(X_i, X_j) = \rho^{|i - j|}$ for each $i, j \in [K]$, i.e., the covariance matrix is Toeplitz.
    \item Negative dependence: $\mathrm{Cov}(X_i, X_j) = -\rho / (K - 1)$, i.e., the covariance is equivalent among all pairs of samples.
\end{enumerate}
We let $\mu$ range in $[1, 4]$, and $\rho$ range in $[0, 0.9]$.
For each $i \in [K]$, where the null hypothesis is false, we let $\mu_i = \mu$, i.e., all non-null hypotheses have a single mean. Power is defined as the expected proportion of non-null hypotheses that are discovered, i.e., $\expect[|\Dcal - \Ncal| / (K - |\Ncal|)]$. We calculate the power averaged over 500 trials.
In \Cref{fig:ebh-heatmap}, we can see that \RoneEBH\ slightly increases power, while \RtwoEBH\ provides a much larger improvement. Since the two types of rounding are complementary, \RbothEBH\ has the largest increase in power over e-BH. Further, we can see in \Cref{fig:ebh-fdr} that the FDR of all the methods is controlled to be below $\alpha=0.05$. Additional analysis of these results is in \Cref{sec:add-sim}.
\begin{figure}[h]
    \centering
    \begin{subfigure}{\textwidth}
    \centering
\includegraphics[width=\textwidth]{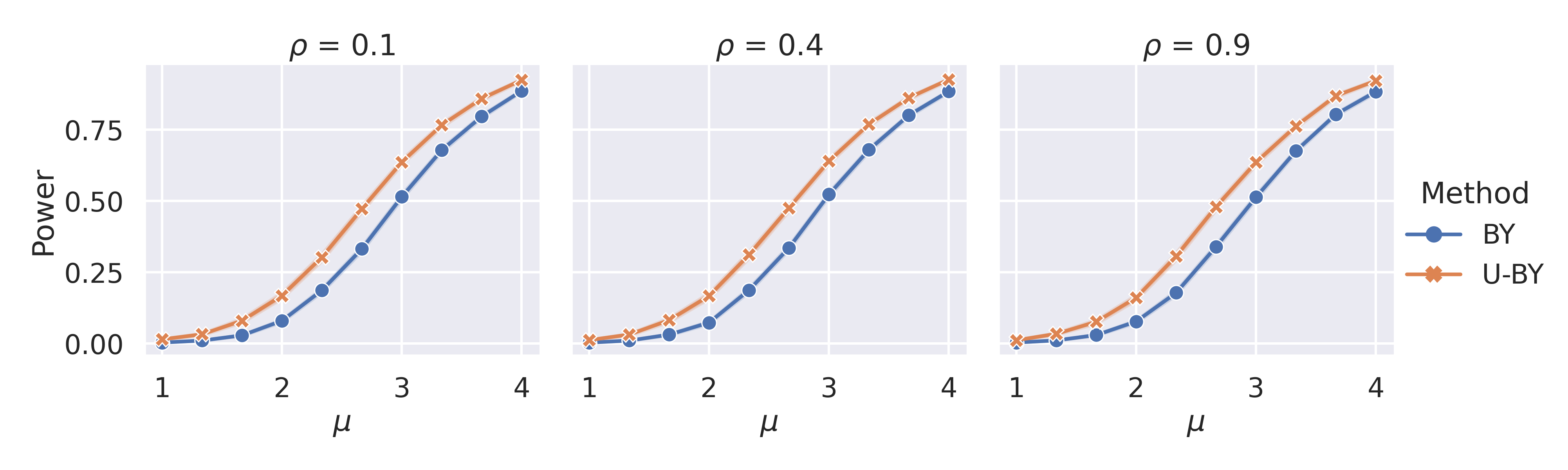}
    \caption{Positive dependence: \RBY\ has maximal power improvement when $\mu \in [2, 3]$.}
    
\end{subfigure}
\begin{subfigure}{\textwidth}
    \centering
\includegraphics[width=\textwidth]{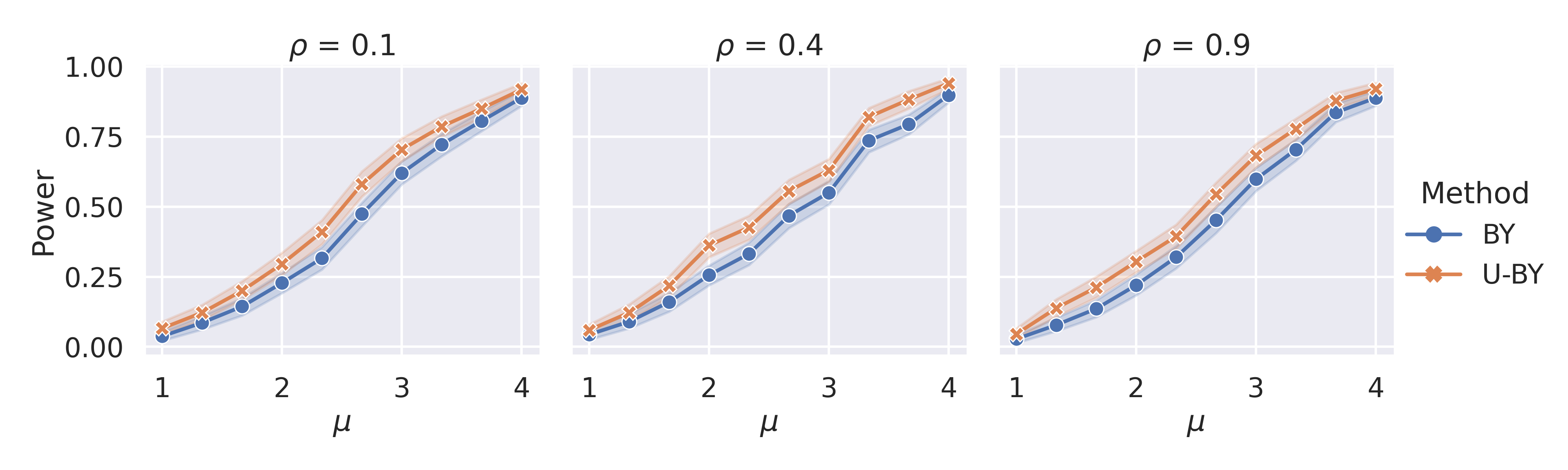}
    \caption{Negative dependence: the power improvement of \RBY\ varies across $\rho$ and $\mu$.}
    \end{subfigure}
    \caption{Plot of power of \RBY\ and BY at the corresponding value of $\mu$ and $\rho$. The hypotheses are one-sided tests of the mean of a standard Gaussian that has covariance parameterized by $\rho$ (null means are 0 and non-null means are $\mu$). \RBY\ dominates the BY procedure across all $\rho$ and $\mu$.}
    \label{fig:by-heatmap}
\end{figure}
In \Cref{fig:by-heatmap}, we compare the power of \RBY\ vs.\ BY. Similarly, \RBY\ uniformly dominates the BY procedure and shows a large amount of power improvement in most settings.
\section{Conclusion}

Stochastic rounding is a simple way to improve power for multiple testing procedures with e-values, and also procedures based on p-values that a priori appear to have nothing to do with e-values. We have primarily discussed the use of stochastic rounding to improve the e-BH and BY procedures for FDR control, and we also demonstrated that the idea works for other multiple testing procedures, such as closed testing or post-selection inference with confidence intervals.
We also illuminate some connections between stochastic rounding and using e-values as weights for p-value procedures in \Cref{sec:weighting}.

An open question that remains for stochastic rounding is: what is the rounding scheme that maximizes expected discoveries for each realization of e-values? For each realization, we can choose a different rounding scheme if necessary, since our randomization is independent of the e-values. As evidenced by \UEBH\ being more powerful than \RtwoEBH, some rounding schemes are powerful than others. We have not proven optimality of \UEBH\ or for any of our randomized e-BH procedures, so either proving their optimality or deriving a novel optimal procedure would complete the picture of what the best randomization is.

Despite hesitations that may arise in using these randomized methods in practice, we believe our results are of significant theoretical and methodological interest. At the very least, our results should be quite surprising to the reader, since even the possibility that randomization can (usually strictly) improve the power of deterministic procedures in multiple testing has not been previously considered, let alone constructively demonstrated.

\paragraph{Acknowledgments} We thank Ruodu Wang for his insightful comments and discussions, including pointing out the application of stochastic rounding to admissible p-value merging in \Cref{sec:admissible}.  
\bibliography{ref}

\appendix

\newpage

\section{Omitted proofs}
\label{sec:omitted-proofs}
\subsection{Proof for \Cref{thm:grid-ebh}}
\label{sec:ebh-improve}
We now prove the claim about strict power improvement.

Define the following event
\begin{align}
    \mathcal{E}_\varepsilon \coloneqq \{\exists k \in [K - k^*]: X_{[k^* + j]} > K/(\alpha(k^* + k + 1)) + \varepsilon \text{ for all }j \in [k]\}.
\end{align}
Note that \eqref{eq:kcal-cond} implies that $\mathcal{E}_\varepsilon$ occurs with a positive probability for some fixed $\varepsilon > 0$. When $\mathcal{E}_\varepsilon$ occurs, we note that if all e-values in $\Dcal_{\rmEBH}$ are successfully rounded (which will happen with fixed positive probability as well), there will be $k \geq 1$ additional discoveries made. Thus, we get $\prob{\Dcal_1 \supset \Dcal_{\rmEBH}} > 0$.

For $\prob{\Dcal_1 \supset \Dcal_{\rmEBH}} > 0$ to be true, we note that $\mathcal{E}_\varepsilon$ must occur with positive probability for some fixed $\varepsilon > 0$. This implies that \eqref{eq:kcal-cond} holds and we get our desired result.
\subsection{Proof of \Cref{prop:uebh-better-than-r2ebh}}
\label{sec:uebh-better-proof}
We observe the following by definition of stochastic rounding:
\begin{align}
    \ind{S_{\alpha_{\ell(i)}}(X_i) \geq 1 / \alpha_{\ell(i)}} = \ind{U \leq \alpha_{\ell(i)}X_i}.
    \label{eq:uni-rep}
\end{align}

Let $L \coloneqq \{\ell(i): i \in [K] \}$ be the set of levels that are rounded to. We define the following set of hypotheses for each $i \in L$:
\begin{align}
    \Dcal_i \coloneqq \{j \in [K]: \alpha_{\ell(j)} \leq \alpha_{i}\} = \{j \in [K]: \rank(j) \geq i\}.
\end{align} By definition of $\ell$, the following properties are true:
\begin{gather}
    |\Dcal_i| = i\text{ for each }i \in L,\text { and }\Dcal_i \subset \Dcal_{i'} \text{ for each }i, i' \in L\text{ where }i < i'. \label{eq:d-subset}
\end{gather}
Further, we can only reject a number of hypotheses that is in $L$, i.e., $k^*_\U \in L$. This is because, for any $i \not\in L$, we know that $|\{j \in [K]: S_{\alpha_{\ell(j)}}(X_j) \geq 1 / \alpha_i\}| < i$ since for all $j \in [K]$ s.t. $X_j \geq X_{[i]}$, we have that $\ell(j) < i$. Hence, define $L(i) \coloneqq \min\{j \in L: j \geq i\}$, and we get the following relationship for any $i \geq k^* + 1$:
\begin{align}
    \ind{k_\U^* \geq i} = \ind{k^*_\U \geq L(i)}.
    \label{eq:Li-i}
\end{align}
Now, we derive the following property for each $i \in L$ where $i \geq k^* + 1$:
\begin{align}
&\ind{k^*_\U \geq i}
    = \ind{S_{\alpha_{\ell(j)}}(X_j) \geq 1 / \alpha_{\ell(j)} \text{ for all }j \in \Dcal_{i}}\\
    &=\ind{U \leq \alpha_{\ell(j)}X_j \text{ for all }j \in \Dcal_{i}}
    =\ind{U \leq \alpha_{i}X_{[i]}} \label{eq:uni-swap}.
\end{align} The first line by \eqref{eq:d-subset}, and the observation that if $S_{\alpha_{\ell(j)}}(X_j) = 0$ for any $j \in \Dcal_{i}$, there will be fewer than $i$ rejections.    The equality in the second line is by \eqref{eq:uni-rep}, and the final equality is because---by definition of $\ell$---there exists $j \in \Dcal_i$ s.t. $\ell(j) = i$ and $X_j = X_{[i]}$, and $\alpha_{\ell(j)}X_j \geq \alpha_i X_{[i]}$ for all $ j\in \Dcal_i$ by the definition of $\Dcal_i$.
Thus, we have:
\begin{align}
    &\expect[k^*_\U \mid X_1, \dots, X_K]
    = k^* + \sum\limits_{i = k^* + 1}^K\expect\left[\ind{k^*_\U \geq i} \mid X_1, \dots, X_K \right]\\
    &=k^* + \sum\limits_{i =k^* + 1}^K\prob{U \leq \alpha_iX_{[i]} \mid X_1, \dots, X_K}
    = k^* + \sum\limits_{i =k^* + 1}^K \alpha_i X_{[i]}
    \geq k^* + \sum\limits_{i =k^* + 1}^K \widehat\alpha^* X_{[i]}\\
    &= k^* + \sum\limits_{i =k^* + 1}^K \prob{U_i \leq \widehat\alpha^* X_{[i]} \mid X_1, \dots, X_K}
    =\expect[k^*_2 \mid X_1, \dots, X_K].
\end{align}
The second equality is by \eqref{eq:uni-swap}, and $U_i$ in the inequality are i.i.d. uniform random variables over $[0, 1]$ for each $i \in [K]$.

\subsection{Proof for \Cref{thm:rounded-by}}
\label{sec:by-improve}

We now prove the claim about strict power improvement.
Note that $\prob{\Dcal_{\RBY} \supset \Dcal_{\BY}} > 0$  iff there is positive probability on a p-value vector $(p_1, \dots, p_K) \in [0, 1]^K$ where more discoveries can be made by the randomization in \RBY.
Note that for a realized p-value $p_i$ such that $p_i \in (\alphath_{k^*_{\BY + 1}} / \ell_K, \alpha / \ell_K]$, there is a positive probability that it will be rejected by \RBY and would not have been rejected by BY.
Further, if no $p_i$ lies within that range, then \RBY\ would not make any extra discoveries over \BY, since the p-values corresponding to the hypotheses that were not rejected by BY would be larger than $\alpha / \ell_K$, which is the maximum level of rejection that \RBY\  can achieve. Thus, $\prob{\Dcal_{\RBY} \supset \Dcal_{\BY}} > 0$ holds iff \eqref{eq:p-value-improv} holds.

\subsection{Proof of \Cref{lemma:rand-superuniform}}
\label{sec:rand-superuniform-proof}
Define $R^* \coloneqq   \inf\ \{s \in [R, \infty): P_i \leq c \beta(s)\}$.
We note that
\begin{align}
    \ind{P \leq c\beta(R / U)} = \ind{P \leq c\beta(R)} + \ind{P > c\beta(R), U \leq R / R^*}.
\end{align} Hence, we simply need to prove that
\begin{align}
     \expect\left[\frac{\ind{P \leq c\beta(R)} + \ind{P > c\beta(R), U \leq R / R^*}}{R}\right] \leq c.
     \label{eq:superuniform-reform}
\end{align}
The following proof repeatedly uses the fact that \[
1 / z = \int_0^\infty \ind{r \geq z}/z^2\ dr,
\text{ for positive $z$}.
\]
We begin with the following equalities:
\begin{align}
    &\expect\left[\frac{\ind{P \leq c\beta(R)} + \ind{P > c\beta(R), U \leq R / R^*}}{R}\right]\\
    &= \expect\left[\int\limits_0^\infty\frac{\ind{r \geq R, P \leq c\beta(R)}}{r^2}\ dr + \frac{\ind{P > c\beta(R), U \leq R / R^*}}{R}\right]\\
    &=\expect\left[\int\limits_0^\infty\frac{\ind{r \geq R, P \leq c\beta(R), P \leq c\beta(r)}}{r^2}\ dr\right]+ \expect\left[\frac{\ind{P > c\beta(R), U \leq R / R^*}}{R}\right].\label{eq:split-rand}
    \end{align}
    Now, we can get the following equalities for the right summand:
    \begin{align}
        &\expect\left[\frac{\ind{P > c\beta(R), U \leq R / R^*}}{R}\right]\\
        &=\expect\left[\frac{\ind{P > c\beta(R)}\expect[\ind{U \leq R / R^*} \mid R, P]}{R}\right]\\
        &\labelrel{=}{eq:u-ind} \expect\left[ \frac{\ind{P > c\beta(R)}}{R^*}\right]\\
        &= \expect\left[\ind{P > c\beta(R)}\int\limits_0^\infty \frac{\ind{r \geq R^*}}{r^2} dr\right]\\
        &\labelrel{=}{eq:r-star-def} \expect\left[\ind{P > c\beta(R)}\int\limits_0^\infty \frac{\ind{r \geq R, P \leq c \beta(r)}}{r^2}\ dr\right].
        \label{eq:rhs-equality}
    \end{align}
    The equality in \eqref{eq:u-ind} is by the independence and the uniform distribution of $U$. The equality in \eqref{eq:r-star-def} is by the definition of $R^*$, i.e., $r\geq R^*$ if and only if  $P \leq c\beta(r)$ and $r \geq R$. Now, we can combine \eqref{eq:superuniform-reform}, \eqref{eq:split-rand} and \eqref{eq:rhs-equality} to derive the following:
    \begin{align}
    &\expect\left[\frac{\ind{P \leq c\beta(R / U)}}{R}\right]\\
    &=\expect\left[\int\limits_0^\infty\frac{\ind{r \geq R, P \leq c\beta(R), P \leq c\beta(r)}}{r^2}\ dr\right] + \expect\left[\int\limits_0^\infty \frac{\ind{r \geq R, P > c\beta(R), P \leq c \beta(r)}}{r^2}\ dr\right]\\
    &=\expect\left[\int\limits_0^\infty \frac{\ind{r \geq R, P \leq c \beta(r)}}{r^2}\ dr \right]=\int\limits_0^\infty \frac{\prob{r \geq R, P \leq c \beta(r)}}{r^2}\ dr\\
    &\labelrel{\leq}{eq:drop-r}\int\limits_0^\infty \frac{\prob{P \leq c \beta(r)}}{r^2}\ dr \labelrel{\leq}{eq:superuniformity} c\int\limits_0^\infty \frac{\beta(r)}{r^2}\ dr \\ &=c\int\limits_0^\infty x \int\limits_0^\infty \frac{\ind{r \geq x}}{r^2}\ dr\ d\nu(x) =  c\int_0^\infty d\nu(x) = c.
\end{align}
We get \eqref{eq:drop-r} by taking the probability of a superset, and \eqref{eq:superuniformity} is by superuniformity of $P$. Consequently, we have proved our lemma.

\subsection{Proof of \Cref{thm:reshaped-rand-by}}
\label{sec:reshape-proof}
Since $\beta$ is monotonically increasing, $\beta(k^*_\beta) \leq \beta(k^*_\beta / U_i)$ for every $i \in [K]$, so $k^*_\beta \leq k^*_{\beta,\U}$. Now, define the following quantities:
\begin{align}
    m(i) \coloneqq \max\{u \in [0, 1]: P_{(i)} \leq \alpha \beta(i / u) / K\}, \qquad t(i) \coloneqq \underset{j \geq \rank(i)}{\argmax}\ m(j).
\end{align} where $\rank(i) \coloneqq \sum_{j \in [K]} \ind{P_j \leq P_i}$ refers to the rank of $P_i$ in ascending order of p-values. We claim that
\begin{align}
    \ind{P_i \leq \alpha \beta(k^*_{\beta, \U} / U) / K} = \ind{P_i \leq \alpha \beta(t(i) / U) / K}.
    \label{eq:rej-eq}
\end{align}
If the right side equals 1, $k^*_{\beta, \U} \geq t(i) \geq \rank(i)$ by construction, so the left side equals 1. On the other hand, if the right side equals 0, then we know $U > m(j)$ for all $j \geq \rank(i)$. This means that at most $\rank(i) - 1$ hypotheses are rejected, and $\rank(i) - 1 < t(i)$, so the left side is also 0.
Now we can look at the FDR of the $\beta$-reshaped \RBY\ procedure.
\begin{align}
    \FDR &=\sum\limits_{i \in \Ncal} \expect\left[\frac{\ind{P_i \leq \alpha \beta(k^*_{\beta, \U} / U) / K}}{k^*_{\beta,\U} \vee 1}\right]\\
&\labelrel{\leq}{eq:rand-bigger} \sum\limits_{i \in \Ncal} \expect\left[\frac{\ind{P_i \leq \alpha \beta(t(i) / U) / K}}{t(i)}\right] \labelrel{\leq}{eq:superuniform-apply} \sum\limits_{i \in \Ncal} \frac{\alpha}{K} \leq \alpha.
\end{align} We know \eqref{eq:rand-bigger} is true by \eqref{eq:rej-eq} and $t(i) \leq k^*_{\beta, \U}$ whenever the $i$th hypothesis is rejected. Inequality \eqref{eq:superuniform-apply} is by \Cref{lemma:rand-superuniform}. Thus, we have FDR control for the reshaping function $\beta$.

Now, we will prove the strict power improvement portion.
$\Dcal_{\beta, \U} \supset \Dcal_{\beta}$ iff there exists some $r > k^*_\beta + 1$ such that there is a $P_i$ where $i \not\in \Dcal_{\beta}$ and $\alpha\beta(r) / K \geq P_i$.
This is because there will be some critical threshold $u \geq (k^*_{\beta} + 1) / r$ where, if $U \leq u$, we will have a set $\Dcal \supset \Dcal_{\beta}$ where $P_i \leq \alpha \beta(|\Dcal| / U) / (|\Dcal| \vee 1)$ for each $i \in \Dcal$ --- such a threshold exists since we can let such $\Dcal = \Dcal_{\beta} \cup \{i\}$ if $u = (k^*_\beta + 1) / r$.
Since $\beta$ is a monotonically increasing function, there exists a $r$ iff $P_i \leq \max_{r > k^*_\beta + 1}\ \alpha\beta(r) / K$. This, combined with the fact that $i \not\in \Dcal_\beta$ implies that $\Dcal_{\beta, \U} > \Dcal_\beta$ iff there exists $i \in [K]$ s.t.\ $\alpha \beta(k^*_\beta + 1) / K < P_i \leq \max_{r > k^*_\beta  + 1}\ \alpha \beta(r) / K$. Hence, this implies our desired result.

\subsection{Proof of \Cref{thm:closed-uhommel}}
\label{sec:uhommel-proof}
First, we prove the equivalence between $\Dcal_{\textnormal{closed}}$ defined in \eqref{eq:closed-discovery-set} produced by the closed testing procedure for $(\varphi^I_\UHommel)_{I \subseteq [K]}$ and $\Dcal_{\UHommel}$ by following the proof of Lemma 1 in \citet{meijer_hommels_procedure_2019}.

Assume that $ i \in \Dcal_{\UHommel}$ and let $I \subseteq [K]$ be such a set that satisfies $i \in I$.

\begin{itemize}
    \item If $|I| > h_{\U}(\alpha)$, then there exists $j \leq |I|$ s.t.\ $P_{(k - |I| + j)} \leq \alpha (\lfloor j / U \rfloor \wedge |I|) / (|I|\ell_{|I|})$. Since
        \begin{align}
        P_{(j, I)} \leq \max_{I' \subseteq [K]: |I| = |I'|}\ P_{(j, I')} = P_{(k - |I| + j)},
    \end{align} we get that
    \begin{align}
    P_{\UHommel}^I
    \leq  P_{(k - |I| + j)}|I|\ell_{|I|}(\lfloor j / U \rfloor \wedge |I|)^{-1}
    \leq \alpha
    \end{align}
and $\varphi_\UHommel^I = 1$.

\item Otherwise, if $|I| \leq h_\U(\alpha)$, we can simply derive that
$P_\UHommel^I
\leq P_i |I|\ell_{|I|}(\lfloor U^{-1} \wedge |I| \rfloor)^{-1}
\leq \alpha.$
The first inequality is by the definition of $P_{\UHommel}^I$ and the last inequality is by the definition of $\Dcal_{\UHommel}$ and $|I| \leq h_{\U}(\alpha)$.
\end{itemize}

Now, we assume that $i \in \Dcal_{\textnormal{closed}}$. We can define $\Dcal^C \coloneqq [K] \setminus \Dcal_\UHommel$ and note it contains exactly the hypotheses with the $h_{\U}(\alpha)$ largest p-values by the definition of $h_\U(\alpha)$.
This means that $P_i < P_{(K - h_\U(\alpha) + 1)}$.
Now, let $I$ be set which contains the hypotheses with the $h_\U(\alpha) - 1$ largest p-values and $i$ and hence $|I| = h_\U(\alpha)$.

For $j \in \{2, \dots, h_\U(\alpha)\}$, we know that $P_{(j, I)} = P_{(K - |I| + j)}$, which means that $P_{(j, I)} |I|\ell_{|I|}(\lfloor j / U\rfloor \wedge |I|)^{-1} > \alpha$.
Since $i \in \Dcal_{\textnormal{closed}}$ we know that $P^I_{\UHommel} \leq \alpha$. Thus, we get that
\begin{align}
P^I_{\UHommel}
&= P_{(1, I)} |I|\ell_{|I|}(\lfloor U^{-1}\rfloor \wedge |I|)^{-1}\\
&= P_i |I|\ell_{|I|}(\lfloor U^{-1}\rfloor \wedge |I|)^{-1}\\
&=P_i h_\U(\alpha)\ell_{h_\U(\alpha)}(\lfloor U^{-1}\rfloor \wedge h_\U(\alpha))^{-1}
\leq \alpha.
\end{align}
Thus, we have shown our desired result that $\Dcal_{\textnormal{closed}} = \Dcal_{\UHommel}$.

$(\varphi^I_\UHommel)_{I \subseteq [K]}$ are valid $\alpha$-level tests because \Cref{thm:uhommel} implies $P^I_{\UHommel}$ is a p-value for $H_I$. Thus, \Cref{fact:fwer-closed} implies that the closed testing procedure using $(\varphi^I_\UHommel)_{I \subseteq [K]}$ ensures $\FWER \leq \alpha$.

Lastly, we know that $\Dcal_{\UHommel} \supseteq \Dcal_{\Hommel}$ almost surely because $\varphi^I_{\alpha\textnormal{-}\Hommel} \leq \varphi^I_{\alpha\textnormal{-}\UHommel}$ for all $I \subseteq [K]$ almost surely. Thus, we have shown all of our desired results.

\section{\added{
    The \JEBH\ procedure with i.i.d.\ uniform randomization
}}
\label{sec:jebh-short-proof}

\added{
We now introduce the \JEBH\ procedure, which allows one to use independent uniform random variables to improve the power of e-BH while still ensuring FDR control via stochastic rounding.
}

\added{
Let $(U_1, \dots, U_K)$ be i.i.d. uniform random variables, independent of $(X_1, \dots, X_K)$. Define
\begin{align}
   k^*_{\J} \coloneqq \max\left\{k \in [K]:\sum_{i \in [K]} \ind{U_i / X_i \leq \alpha_k} \geq k \right\}.
\end{align}
to be the number of discoveries made by the \JEBH\ procedure. Then, let $\Dcal_{\J} \coloneqq \{i \in [K]: U_i / X_i \leq \alpha_{k^*_\J}\}$ be the corresponding rejection set. Thus, we can see that \JEBH\ procedure can be viewed as the application of the BH procedure at level $\alpha$ applied to $(U_1 / X_1, \dots, U_K / X_K)$. We will now show that this procedure controls the FDR at level $\alpha$ since it results from e-BH applied to stochastically rounded e-values.
}

\added{
First, we note that we can define the following pseudo discovery counts for each $i \in [K]$ as follows:
\begin{align}
   k_{\J, -i} \coloneqq \max\left\{k \in [K]:1 + \sum_{j \in [K], j \neq i} \ind{U_i / X_i \leq \alpha_k} \geq k \right\}.
\end{align} This is equivalent to the number of discoveries that would be made by BH applied to the same p-values except with the $i$th p-value replaced by a 0, i.e., $(U_1 / X_1, \dots, U_{i-1} / X_{i-1}, 0,  U_{i+1} / X_{i+1}, \dots, U_K / X_K)$.
}

\added{
Now, we can show FDR control for \JEBH.
\begin{theorem}
    The J-eBH procedure applied at level $\alpha$ controls $\FDR \leq \alpha$ under arbitrary dependence between $(X_1, \dots, X_K)$.
\end{theorem}
}
\begin{proof}
    \added{
    Let $\Dcal$ be the rejection set of produced by applying the e-BH procedure at level $\alpha$ to the rounded e-values $(S_{\alpha_{k_{\J, -1}}}(X_1), \dots, S_{\alpha_{k_{\J, -K}}}(X_K))$ where we tie the stochastically rounded e-values to $(U_1, \dots, U_K)$ via the formulation
    \begin{align}
        S_{\alpha_{k_{\J, -i}}}(X_i) \coloneqq \frac{\ind{U_i \leq \alpha_{k_{\J, -i}} X_i}}{\alpha_{k_{\J, -i}}}.
    \end{align}
    }
    
    \added{
    Immediately, we notice this has valid FDR control via \Cref{prop:dynamic-round-e} and FDR control of the e-BH procedure since $\alpha_{\J, -i}$ is independent of $U_i$ for each $i \in [K]$ by construction, and hence $(S_{\alpha_{k_{\J, -1}}}(X_1), \dots, S_{\alpha_{k_{\J, -K}}}(X_K))$ are valid e-values. Now, it only remains to show that $\Dcal = \Dcal_{\J}$.
    }

    \added{
    For each $i \in \Dcal_\J$ we know that $U_i / X_i \leq \alpha_{k^*_\J} = \alpha_{k_{\J, -i}}$. 
    This implies that $S_{\alpha_{k_{\J, -i}}}(X_i) = 1 / \alpha_{k^*_\J}$, so $i \in \Dcal_\J$. 
    Conversely, for each $i \not\in \Dcal_\J$ we know that $U_i / X_i > \alpha_{k^*_\J} \geq \alpha_{k_{\J, -i}}$ where the last inequality follows from the fact that the number of discoveries made by BH is coordinatewise decreasing in each p-value.
    Thus, we have $S_{\alpha_{k_{\J, -i}}}(X_i) = 0$, so $i \not\in \Dcal_\J$.
    }

    \added{
    Since exactly $k^*_\J$ hypotheses are rejected by $\Dcal_\J$, we have that $\Dcal = \Dcal_\J$ via the definition of e-BH. Thus, we have shown that \JEBH\ controls the FDR at level $\alpha$.
    }
	\end{proof}

\added{
\boldparagraph{Simulations} In the same setting as \Cref{sec:Simulations}, we compare \JEBH\ and \UEBH. The difference in their power is shown in \Cref{fig:uebh-jebh-heatmap}. Interestingly, neither method dominates the other. \UEBH\ is more powerful when there is negative dependence or $\mu$ is small and positive dependence. On the other hand, \JEBH\ is more powerful when there is positive dependence and $\mu$ is large. This suggests one avenue for future work is to identify whether there is a single optimal joint generalized rounding scheme, and if not, in what situations are different schemes optimal.
}

\added{
\boldparagraph{Stability vs. power} We also investigate the stability of our randomized procedures compared to their power. We measure stability through the Jaccard similarity index, i.e., 
\begin{align}
    \text{JC}(\Dcal) = {n\choose 2}^{-1}\sum_{i, j \in [n], i < j }\frac{|\Dcal_i \cap \Dcal_j|}{|\Dcal_i \cup \Dcal_j|}
\end{align}
where $\Dcal_i$ is an independent run of the same procedure on the same data, and $n$ is the number of trials if we re-run the randomized procedure. We let $n = 50$. A higher JC value indicates higher stability. In \Cref{fig:stability}, we see that \UEBH\ is always more stable than \JEBH\ since it uses less randomization, in addition to being the more powerful of the two methods.
}
\begin{figure}[h!]
    \centering
    \begin{subfigure}[t]{\textwidth}
    \centering
\includegraphics[width=\textwidth]{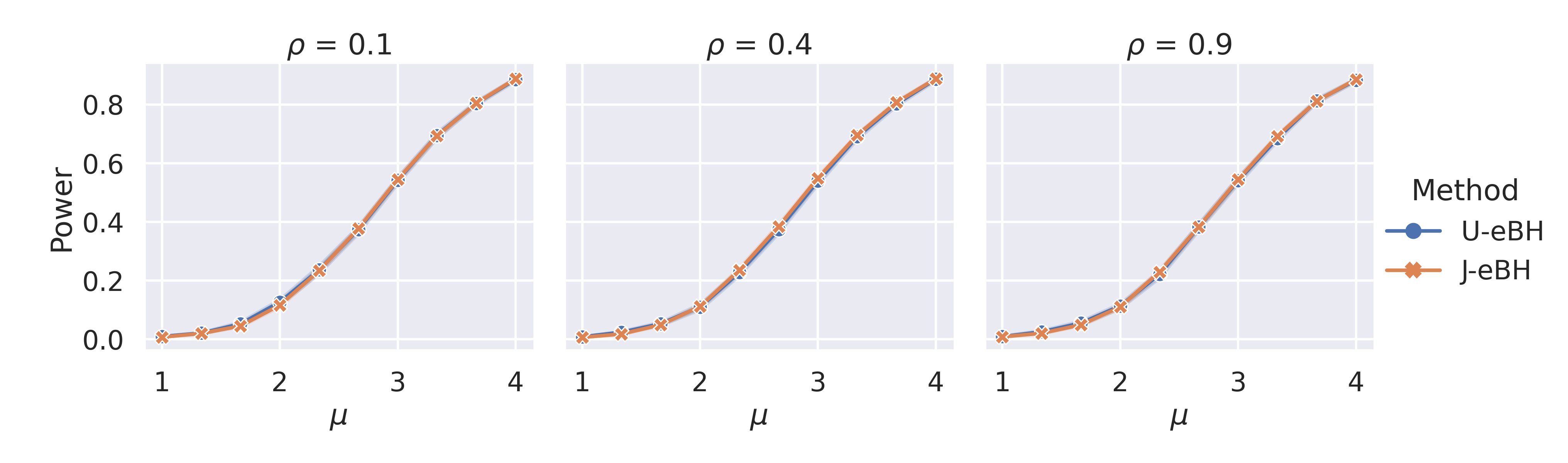}
    \caption{Positive dependence: \UEBH\ and \JEBH\ have simlar power.}
    \end{subfigure}
    \begin{subfigure}[t]{\textwidth}
    \centering
\includegraphics[width=\textwidth]{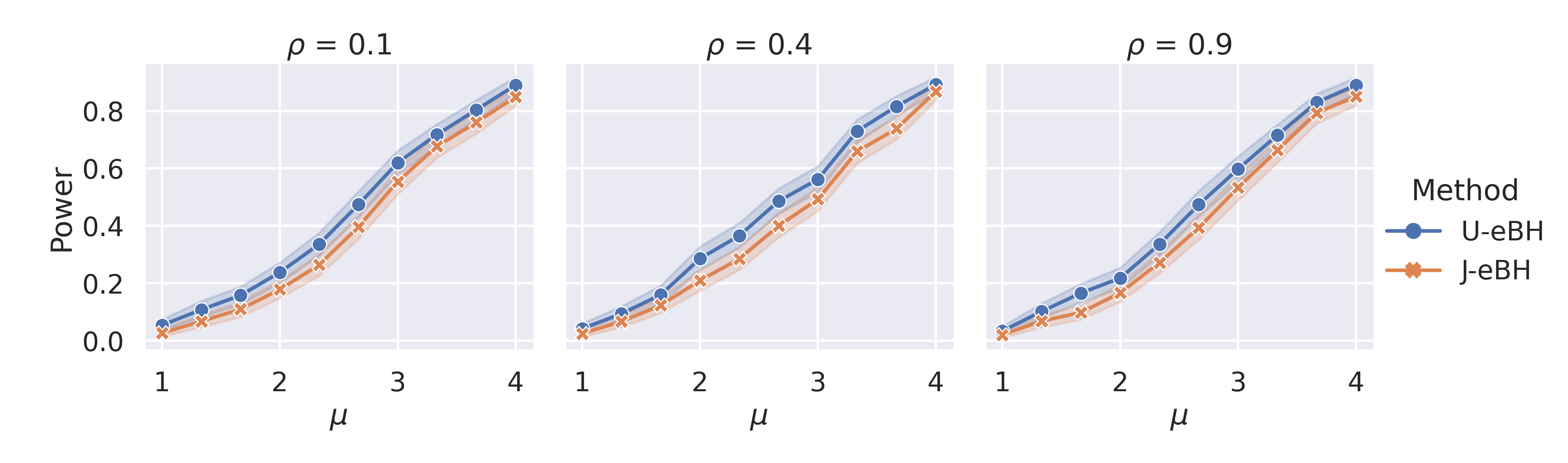}
    \caption{Negative dependence: \UEBH\ dominates \JEBH\ across the board.}
    \end{subfigure}
    \caption{Plots of of power using \UEBH\ vs \JEBH, i.e., single $U$ vs. a independent $U_i$, across different values of $\mu$ and $\rho$ in the same simulation setting as described in \Cref{sec:Simulations}. \UEBH\ is more powerful when the dependence is negative, and they have similar power when the dependence is positive.}
    \label{fig:uebh-jebh-heatmap}
\end{figure}

\begin{figure}[ht]
    \centering
    \begin{subfigure}{\textwidth}
    \centering
    \includegraphics[width=\textwidth]{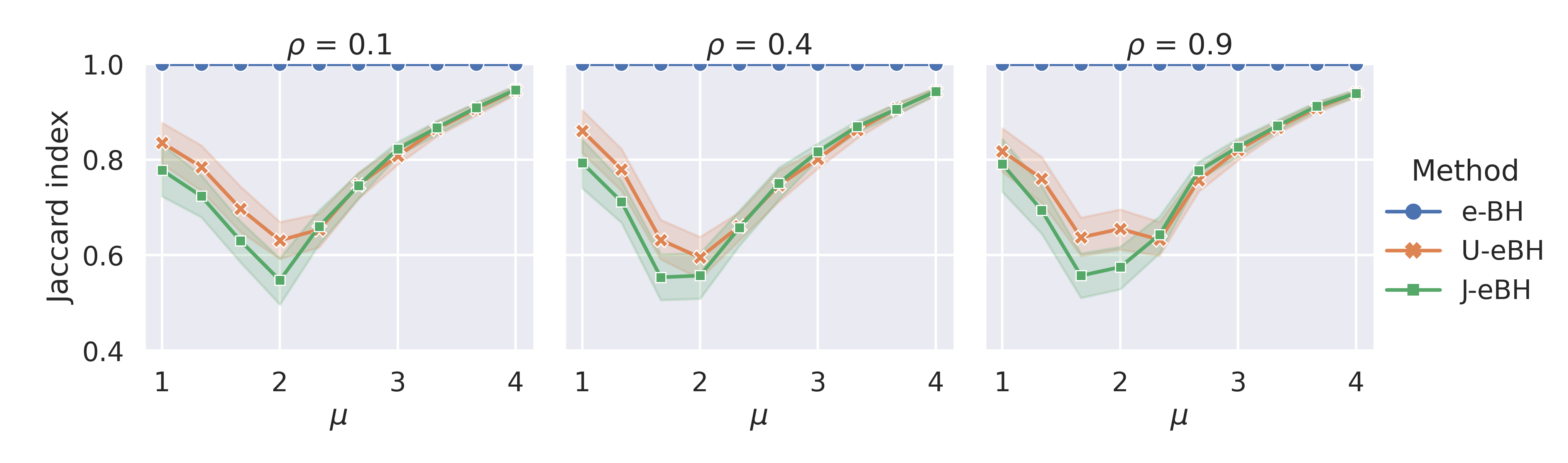}
    \caption{Positive dependence}
    \end{subfigure}
    
    \begin{subfigure}{\textwidth}
    \centering
    \includegraphics[width=\textwidth]{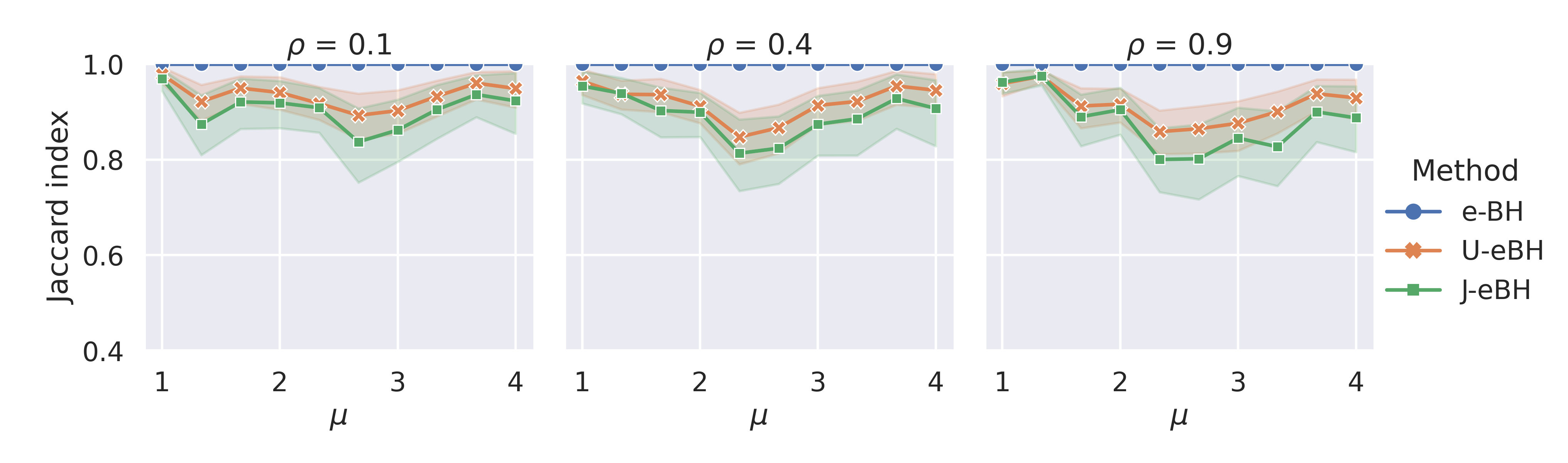}
    \caption{Negative dependence}
\end{subfigure}    
    \caption{\added{Plot of stability eBH, \UEBH, and \JEBH. We plot the stability of a procedure in terms of the Jaccard similarity index on the x-axis and power on the y-axis for different positive and negative dependence settings and values of $\mu$. We see that \UEBH\ is always more stable since it uses less randomization, and empirically seems to achieve more power.}\label{fig:stability}}
\end{figure}

\section{Generalization of stochastic rounding}
\label{sec:general-stoch-round}

One might wish to generalize stochastic rounding (to the nearest points in $\Gcal$) to instead take a probability distribution over values in $\Gcal$ that are greater than or equal to $x^+$. For e-BH, the simplest such generalization would be to assign a uniform probability to all potential e-BH rejection levels greater than the input $x$.  Let $r = x - x_-$, and define $j^+ = K / (\alpha x^+)$ . Then, we would define $S^{\mathrm{uni}}_\Gcal(X)$ a follows:
\begin{align}
    S^{\mathrm{uni}}_\Gcal(x) = \begin{cases}
        \frac{K}{\alpha j} & \text{ with probability }\frac{\alpha r}{K H_{j^+}}\\
        x_- & \text{ with probability }1 - \frac{\alpha r j^+}{K H_{j^+}}.
    \end{cases}
\end{align}
Another way of generalizing the stochastic rounding is to have equal ``expectation'' assigned to each potential e-BH rejection level. We would have $p_j \propto j$, where $p_j$ is the probability of outputting $K / (\alpha j)$ for some $j \geq j^+$.
\begin{align}
    S^{\mathrm{eq}}_\Gcal(x) = \begin{cases}
        \frac{K}{\alpha j} & \text{ with probability }\frac{\alpha j r}{K j^+}\text{ for }j \in [j^+]\\
        x_- & \text{ with probability }1 - \frac{\alpha r(j^+ + 1)}{2K}.
    \end{cases}
\end{align}

More generally, we know that we can consider $\Gcal^+(x) \coloneqq \{ y \in \Gcal: y \geq x\}$, i.e., the set of values in the grid greater than $x$. Let $F_x$ be the c.d.f. of a distribution over $\Gcal^+(x)$ for each $x$ that satisfies
\begin{align}
    p^+(x) \cdot \int\limits_{y \in \Gcal^+(x)} y\ dF_x(y) + (1 - p^+(x))x_- = x,
    \label{eq:constrained-F}
\end{align} where $p^+(x) \in [0, 1]$ is the probability $X$ is made larger by stochastic rounding.
Let $S^+_{\Gcal}(X)$ be our generalized stochastically rounded e-value i.e.,
\begin{align}
    S^+_{\Gcal}(x)\coloneqq \begin{cases}
        Y & \text{ where }Y \sim F_x\text{ ind. of }X\text{ with probability }p^+(x)\\
        x_- & \text{ with probability }(1 - p^+(x)).
    \end{cases}
\end{align}
\begin{proposition}
    If $X$ is an e-value, then $\expect[S^+_{\Gcal}(X)]$ is an e-value. Further, $\expect[X] = \expect[S^+_{\Gcal}(X)]$.
\end{proposition}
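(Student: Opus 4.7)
The plan is to verify the two requirements for $S^+_\Gcal(X)$ being an e-value with unchanged expectation: (i) nonnegativity of $S^+_\Gcal(X)$ almost surely, and (ii) the identity $\expect[S^+_\Gcal(X)] = \expect[X]$, from which the e-value property $\expect[S^+_\Gcal(X)] \leq 1$ follows immediately since $X$ was assumed to be an e-value. Both steps mirror the structure of the proof of \Cref{prop:stoch-round-e}, but now with a general mixture over $\Gcal^+(x)$ rather than a two-point distribution on $\{x_-, x^+\}$.

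Nonnegativity is immediate from the construction. Conditional on $X = x$, the output $S^+_\Gcal(x)$ takes values either in $\Gcal^+(x) \subseteq \preals$ (which consists of nonnegative reals and possibly $+\infty$) with probability $p^+(x)$, or equals $x_- \in \Gcal \subseteq \preals$ with probability $1 - p^+(x)$. Both are nonnegative, so $S^+_\Gcal(X) \geq 0$ almost surely.

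For the expectation identity, the key step is to compute the conditional expectation by a direct application of the tower property. Conditioning on $X = x$ and integrating against the independent randomness used to draw $Y \sim F_x$,
\begin{align}
\expect[S^+_\Gcal(X) \mid X = x]
= p^+(x) \int_{y \in \Gcal^+(x)} y\, dF_x(y) + (1 - p^+(x))\, x_-
= x,
\end{align}
where the last equality is exactly the defining constraint \eqref{eq:constrained-F} imposed on $F_x$ and $p^+(x)$. Taking expectations over $X$ then gives $\expect[S^+_\Gcal(X)] = \expect[\expect[S^+_\Gcal(X) \mid X]] = \expect[X] \leq 1$, which simultaneously delivers the expectation preservation and the e-value bound.

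The main obstacle, such as it is, lies only in making sure the construction is well-defined on the full domain $\preals$, not in any nontrivial analysis. One should specify $F_x$ and $p^+(x)$ consistently at the boundary cases already handled in the previous stochastic-rounding definitions: when $x \in \Gcal$, when $x < g_*$ or $x > g^*$, and when $x = +\infty$, the natural choice is to take $F_x$ to be a point mass at $x$ itself (equivalently $p^+(x) = 0$ with $x_-$ understood as $x$), so that \eqref{eq:constrained-F} holds trivially and $S^+_\Gcal(x) = x$. Measurability of $(x, \omega) \mapsto S^+_\Gcal(x)$ with respect to the product of $X$ and the independent randomizing variable follows from standard arguments on parametrized mixtures and need not be belabored. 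After these bookkeeping points, the proof collapses to the two-line computation above.
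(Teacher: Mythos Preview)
Your proof is correct and follows essentially the same approach as the paper, which simply remarks that the proposition ``follows from the definition of $S^+_\Gcal$, and \eqref{eq:constrained-F}.'' You have filled in the details (conditional expectation via \eqref{eq:constrained-F}, then the tower property) and added some careful bookkeeping on boundary cases and measurability that the paper omits.
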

The above proposition follows from the definition of $S^+_\Gcal$, and \eqref{eq:constrained-F}. We also note that in the multiple testing setting, we can allow the $Y$ that is sampled to be dependent on all the e-values $(X_1, \dots, X_K)$ while still having a conditional distribution with expectation $x$. This allows us to adaptively choose the rounding distribution $F_x$ for all $x \in [0, \infty)$ based on the randomized e-values results so far if randomization is done sequentially over the e-values.

 We can extend this generalized rounding idea to allow for joint distribution across the rounded e-values. In essence, our rounding scheme would be, given some input e-values $(X_1, \dots, X_K)$, the rounded e-values $(S_{\Gcal_1}^+(X_1), \dots, S_{\Gcal_K}^+(X_K))$ satisfy the following:
\begin{align}
    \expect[S_{\Gcal_i}^+(X_i) \mid X_i] \leq X_i.
    \label{eq:cond-exp}
\end{align}

\boldparagraph{Joint generalized rounding} We define a single grid for all e-values: $\Gcal_i = \Kcal_{k^*} \coloneqq \{1 / \alpha_j: j > k^*, j \in [K]\} $ for each $i \in [K]$. Since $1 / \widehat\alpha^*$ is larger than the max element of $\Kcal_{k^*}$, all $X_i$ that are rejected by e-BH do not change after rounding. Now, we will define a function $D$ that maps $(X_1, \dots, X_K)$ to a distribution over $[K]\setminus \Dcal_{\rmEBH}$, which we call the \emph{rounding distribution}. The rounding distribution $D$ will be restricted to ensure each $S_{\Kcal_{k^*}}^+(X_i)$ satisfies \eqref{eq:cond-exp}, and remains a valid e-value.

 Let $\Dcal^{\mathrm{rd}}$ be drawn from $D(X_1, \dots, X_K)$. Then,  define each rounded e-value $(i \in [K])$ as follows:
\begin{align}
    S_{\Kcal_{k^*}}^+(X_i) = (X_i \cdot \ind{X_i \geq 1 / \widehat\alpha^*}) \vee \frac{\ind{i \in \Dcal^{\mathrm{rd}}}}{\alpha_{|\Dcal^{\mathrm{rd}}|}}.
    \label{eq:j-dist}
\end{align}

\added{
 To find the sampling distribution that maximizes the set of discoveries and satisfies these constraints, we can define the following optimization problem,  where for each $S \subseteq [K]\setminus \Dcal_{\rmEBH}$, $\mathrm{p}_{S}$ denotes the probability that $S \cup \Dcal_{\rmEBH}$ is rejected:
\begin{align}
    \max\limits_{\mathrm{p}_S: S \subseteq [K]\setminus \Dcal_{\rmEBH}} &\sum\limits_{S \subseteq [K]\setminus \Dcal_{\rmEBH}} |S| \cdot \mathrm{p}_S\label{eq: general round program}\\
    \text{s.t. }&\sum\limits_{S \subseteq [K]\setminus \Dcal_{\rmEBH}} \mathrm{p}_S = 1,\\
    &\sum\limits_{S \subseteq [K] \setminus \Dcal_\rmEBH:\ i \in S} \frac{\mathrm{p}_S}{\alpha_{|S \cup \Dcal_\rmEBH|}X_i}  \leq 1 \text{ for each }i \in [K]\setminus \Dcal_{\rmEBH},\\
    &\mathrm{p}_S \in [0, 1] \text{ for each } S \subseteq [K]\setminus \Dcal_{\rmEBH}.
\end{align}
}

\added{
The objective function maximizes the expected number of discoveries made after rounding. The first and third constraints ensure that the distribution over $\Dcal^\textnormal{rd}$ is a valid probability distribution, and the second constraint ensures that the conditional expectation of each rounded e-value will be at most the original e-value.
Thus, any assignment that satisfies the above constraints will result in a rounding distribution that produces valid e-values after rounding via \eqref{eq:j-dist}. 
\begin{proposition}
    If $D(X_1, \dots, X_K)$ is a rounding distribution whose probabilities are characterized by $\prob{\Dcal^{\mathrm{rd}} = S \cup \Dcal_\rmEBH} = \mathrm{p}_S$ for each $S \subseteq [K]\setminus \Dcal_{\rmEBH}$ and $(\mathrm{p}_S)_{S \subseteq [K] \setminus \Dcal_\rmEBH}$ satisfy the constraints of \eqref{eq: general round program}, then $(S_{\Kcal_{k^*}}^+(X_1), \dots, S_{\Kcal_{k^*}}^+(X_K))$ are e-values.
\end{proposition}
This proposition follows from the fact that the constraints ensure that $\expect[S_{\Kcal_{k^*}}^+(X_i) \mid X_i] \leq X_i$ for each $i \in [K]$.
}
\added{
Thus, this framework enables flexibility in the type of dependence that may be used to randomize the e-BH procedure. 
We leave the study of efficient algorithms to solve the above optimization problem to future work.
}

\section{\added{Existing multiple testing procedures as stochastic rounding}}
\label{sec:weighting}
\added{
The randomization techniques that we introduced also have interesting connections to existing multiple testing methods that use randomization. In fact, we generalize the following three methods that have appeared in prior literature.}
\begin{enumerate}
    \item 
    \added{
    The results in \citet{ignatiadis_e-values_unnormalized_2022} and our work are complementary. We discuss in-depth connections of our method with theirs in the sequel, along with a new method for multiple testing in the setting of having access to both e-values and p-values for each hypothesis. 
    }
\item
    \added{
     The randomized pruning techniques in \citet{fithian_conditional_calibration_2022} can be interpreted as a variant of our J-eBH procedure, but are specific to improving the power of the dBH procedure. They do not consider the general applicability of randomization to other multiple testing methods, nor do they study its connection to e-values. Further, they use randomized pruning to repair the mismatch in thresholds that results from using an estimator of the total number of discoveries made by dBH based on the sufficient statistic, versus the actual total number of discoveries. In contrast, the results in this paper show that randomization can be used to strictly improve the power of several multiple testing procedures, and will never fail to reject existing discoveries made by e-BH. 
    }
    \item
    \added{
     Following our work, \citet{jin_model-free_selective_2023} also use a randomized pruning in their methods that is similar in spirit to the methods of \citet{fithian_conditional_calibration_2022}, but applied to the weighted conformal p-values that they develop. Their procedures are analogs of our \JEBH\ and \UEBH\ procedures. Again, their work does not consider the general applicability of randomization to various multiple testing methods nor the connection to e-values. Further, their focus is similar to \citet{fithian_conditional_calibration_2022}, i.e., they are using the randomized pruning to also repair the mismatch in thresholds that results from using an estimator of the total number of discoveries made by WCS based on the calibration set and a single test point.
    }
\end{enumerate}

\added{
\subsection{E-value weighted multiple testing procedures of \citet{ignatiadis_e-values_unnormalized_2022}}
\cite{ignatiadis_e-values_unnormalized_2022} investigate the setting where an e-value and a p-value are available for each hypothesis. 
}
They mentioned, in passing, how to convert an e-value to a randomized p-value \citep[Remark 3.2]{ignatiadis_e-values_unnormalized_2022}, whereas the current work shows how to convert an e-value into a randomized e-value (through stochastic rounding) and explores its implications for multiple testing. The robustness of e-values to dependencies is what allows us to achieve FDR control under dependence. We now discuss what implications and relations our techniques have for this aforementioned setting.

\boldparagraph{E-value weights for p-values} As we saw in \Cref{prop:uebh-as-bh}, an intriguing formulation of \UEBH\ is the application of the BH procedure to $(U / X_1, \dots, U / X_K)$.  Recall that the BH procedure operates on p-values $(P_1, \dots, P_K)$ and make the following number of discoveries:
\begin{align}
    k^*_{\BH} \coloneqq \max\left\{i \in [K]: P_{(i)} \leq \frac{ \alpha i}{K}\right\}.
\end{align}  The BH procedures then rejects the hypotheses corresponding to the $k^*_{\BH}$ smallest p-values.
\cite{ignatiadis_e-values_unnormalized_2022} and \cite{ramdas_randomized_exchangeable_2023} both note that $U / X$ is a p-value. \added{Further, \citet{ignatiadis_e-values_unnormalized_2022} noted that this allowed the use of their ep-BH procedure with only e-values (rather than requiring both a e-value and a p-value for each hypothesis), since one can always use a uniform random variable in place of the p-value --- this is precisely the \UEBH\ procedure. Our \JEBH\ procedure also recovers the ep-BH procedure with independent uniform random variables. We note that their proof of FDR control also only requires a PRDS assumption on the uniform random variables $(U_1, \dots, U_K)$. On the other hand, our stochastic rounding framework allows for an altogether different framework for randomness dependence. In \Cref{sec:general-stoch-round}, we allow a class of dependence structures between the external randomness used for each hypothesis that includes types of dependence that are not PRDS. To consider a simple situation where stochastic rounding allows for dependence structures beyond PRDS, note that in \RtwoEBH, one can choose to use $U_j = 1 - U_i$ for two hypotheses where $i \neq j$. The exact relation between these two frameworks is an interesting avenue for future work.}

A key conceptual difference between the methods in \cite{ignatiadis_e-values_unnormalized_2022} is that the randomization for our methods originates from the novel notion of a stochastically rounded e-value (and its variants). \citet{ignatiadis_e-values_unnormalized_2022} solely consider multiple testing when both p-values and e-values were available for a hypothesis, which is different from our (more traditional) setup. This lends greater flexibility to our methods since they can be directly plugged in whenever e-values are available. Consequently, we can prove FDR control for the \RBY\ procedure for arbitrarily dependent p-values, and derive a randomized superuniformity lemma (\Cref{lemma:rand-superuniform}) which ensures FDR control for any step-up multiple testing procedure that uses a reshaping function. In \Cref{sec:posi}, we use our framework to derive randomized versions of post-selection inference procedures for false coverage rate (FCR) control as well. Subsequent work in online multiple testing \citep{xu_online_multiple_2023} can also directly leverage the stochastic rounding framework to derive more powerful procedures with online FCR and FDR control. Hence, the framework for stochastic rounding that we develop in this paper is quite flexible and enables randomization to be generalized to many settings.
\boldparagraph{Combining a p-value and an e-value} Our randomization techniques also imply a new method for multiple testing when both p-values and e-values are at hand. We observe that we can replace the uniform random variable in \RtwoEBH\ with a p-value, and derive a method for combining a p-value $P$ and any e-value $X$ into a single e-value:
\begin{align}
    M_{\widehat\alpha}(X, P) \coloneqq (X \cdot\ind{X \geq \widehat\alpha^{-1}}) \vee (\alphath^{-1} \cdot \ind{\widehat\alpha X \geq P}).
\end{align}
\begin{proposition}
    Let $X$ be an e-value w.r.t.\ to null hypothesis $H_0$, and $\widehat\alpha \in (0, 1)$ be a (random) test level.
    If $P$ is a p-value when conditioned on $X$ and $\widehat\alpha$, i.e., $\prob{P \leq s \mid X, \widehat\alpha} \leq s$ for all $s \in [0, 1]$ when $H_0$ is true, then $M_{\widehat\alpha}(X, P)$ is an e-value under $H_0$ as well.
\end{proposition}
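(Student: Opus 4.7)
The plan is to mimic the proof of \Cref{prop:dynamic-round-e} essentially verbatim, with the only substantive change being that the uniformity of $U$ gets replaced by the (conditional) superuniformity of $P$, which turns one equality into an inequality that still points in the correct direction.

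First I would rewrite the max as a sum. The same case analysis as in the proof of \Cref{prop:dynamic-round-e} applies: when $X \geq \widehat\alpha^{-1}$ the first term equals $X \geq \widehat\alpha^{-1}$ and, since any p-value satisfies $P \leq 1 \leq \widehat\alpha X$, the second indicator is automatically $1$, so the max equals $X$; when $X < \widehat\alpha^{-1}$ the first term vanishes and the max is just the second term. This gives the decomposition
\begin{align}
    M_{\widehat\alpha}(X,P) \;=\; X \cdot \ind{X \geq \widehat\alpha^{-1}} \;+\; \widehat\alpha^{-1} \cdot \ind{\widehat\alpha^{-1} > X,\ \widehat\alpha X \geq P},
\end{align}
which is exactly the analogue of \eqref{eq:RandomERewrite} with $U$ replaced by $P$.

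Next I would bound the expectation of the second term by conditioning on $(X,\widehat\alpha)$:
\begin{align}
    \expect\!\left[\widehat\alpha^{-1} \ind{X<\widehat\alpha^{-1}}\, \ind{P \leq \widehat\alpha X}\right]
    &= \expect\!\left[\widehat\alpha^{-1} \ind{X<\widehat\alpha^{-1}}\, \prob{P \leq \widehat\alpha X \mid X,\widehat\alpha}\right] \\
    &\leq \expect\!\left[\widehat\alpha^{-1} \ind{X<\widehat\alpha^{-1}}\cdot (\widehat\alpha X \wedge 1)\right]
    = \expect\!\left[X \cdot \ind{X<\widehat\alpha^{-1}}\right],
\end{align}
where the inequality is the only change from the earlier proof: instead of the exact identity $\prob{U \leq \widehat\alpha X \mid X,\widehat\alpha} = \widehat\alpha X \wedge 1$ used in \eqref{rel:UniformInd}, we only have the upper bound $\prob{P \leq \widehat\alpha X \mid X,\widehat\alpha} \leq \widehat\alpha X \wedge 1$, coming from the assumed conditional superuniformity of $P$.

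Finally I would add the two pieces: $\expect[M_{\widehat\alpha}(X,P)] \leq \expect[X \ind{X \geq \widehat\alpha^{-1}}] + \expect[X\ind{X<\widehat\alpha^{-1}}] = \expect[X] \leq 1$, which establishes that $M_{\widehat\alpha}(X,P)$ is an e-value under $H_0$. There is no real obstacle here; the only subtlety worth flagging is that one must check the boundary case $X \geq \widehat\alpha^{-1}$ in the decomposition step, which relies on $P \in [0,1]$ so that the indicator $\ind{\widehat\alpha X \geq P}$ is guaranteed to be $1$ whenever $\widehat\alpha X \geq 1$. Once that is in hand, the conditional superuniformity replaces uniformity cleanly and everything else is bookkeeping.
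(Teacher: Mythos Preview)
Your proposal is correct and follows exactly the approach the paper intends: the paper's own proof is simply the remark that ``the proof follows from the same argument as \Cref{prop:dynamic-round-e},'' and you have spelled out precisely that argument with the single change of replacing the equality $\prob{U\leq \widehat\alpha X\mid X,\widehat\alpha}=\widehat\alpha X\wedge 1$ by the inequality coming from conditional superuniformity of $P$. One tiny simplification: in the case $X\geq \widehat\alpha^{-1}$ the max equals $X$ regardless of whether $\ind{\widehat\alpha X\geq P}=1$, so you do not actually need to invoke $P\leq 1$ there.
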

The proof follows from the same argument as \Cref{prop:dynamic-round-e}.
Recall that $1 / \widehat\alpha^*$ is the rejection threshold for the e-BH procedure. Thus, we can define the \underline{\emph{\PEBH\ procedure}}, as rejecting the $i$th hypothesis when $M_{{\widehat\alpha}^*}(X_i, P_i) \geq 1 / \widehat\alpha^*$.

\begin{theorem}
Let $\mathbf{P} \coloneqq (P_1, \dots, P_K)$ and $\mathbf{X} \coloneqq (X_1, \dots, X_K)$ be vectors of p-values and e-values, respectively, where $P_i$ and $X_i$ are a p-value and e-value for the same null hypothesis for each $i \in [K]$. Assume $\mathbf{P}$ is independent of $\mathbf{X}$, but otherwise $\mathbf{P}$ and $\mathbf{X}$ can each have arbitrarily dependent components. Then, the \PEBH\ procedure ensures $\FDR \leq \alpha$ and rejects a superset of the hypotheses rejected by e-BH applied to $\mathbf{X}$.
\end{theorem}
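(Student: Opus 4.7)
The plan is to follow the proof template of Theorem~\ref{thm:rand-ebh} (for \RtwoEBH) line by line, substituting $M_{\widehat\alpha^*}(X_i,P_i)$ in place of $S_{\widehat\alpha^*}(X_i)$ throughout. Two structural ingredients are what is needed: first, that $M_{\widehat\alpha^*}(X_i,P_i)$ is an e-value under $H_i$ for every null $i\in\Ncal$, which is precisely the preceding proposition and only requires $P_i$ to act as a p-value conditional on $(X_i,\widehat\alpha^*)$; and second, the monotonicity fact $M_{\widehat\alpha}(X,P)\geq X\cdot\ind{X\geq \widehat\alpha^{-1}}$, which is immediate from the definition $M_{\widehat\alpha}(X,P)=(X\ind{X\geq\widehat\alpha^{-1}})\vee(\widehat\alpha^{-1}\ind{\widehat\alpha X\geq P})$.

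Because $\widehat\alpha^*$ is the e-BH rejection threshold computed from $\mathbf{X}$ alone, the assumption $\mathbf{P}\perp\mathbf{X}$ gives $P_i\perp(X_i,\widehat\alpha^*)$, so marginal superuniformity of $P_i$ under $H_i$ transfers to conditional superuniformity, and the preceding proposition yields $\expect[M_{\widehat\alpha^*}(X_i,P_i)]\leq 1$. The superset claim is immediate from the monotonicity fact: for $i\in\Dcal_{\rmEBH}$ we have $X_i\geq 1/\widehat\alpha^*$, hence $M_{\widehat\alpha^*}(X_i,P_i)\geq X_i\geq 1/\widehat\alpha^*$, placing $i\in\Dcal_{\PEBH}$. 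In particular, $k^*_{\rmEBH}\leq k^*_{\PEBH}$, so $\widehat\alpha^*=\alpha k^*_{\rmEBH}/K\leq \alpha k^*_{\PEBH}/K$.

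FDR control then follows by the standard e-BH chain of inequalities:
\begin{align*}
\FDR &= \sum_{i\in\Ncal} \expect\left[\frac{\ind{M_{\widehat\alpha^*}(X_i,P_i)\geq 1/\widehat\alpha^*}}{k^*_{\PEBH}\vee 1}\right] \leq \sum_{i\in\Ncal} \expect\left[\frac{\widehat\alpha^* M_{\widehat\alpha^*}(X_i,P_i)}{k^*_{\PEBH}\vee 1}\right] \\
&\leq \frac{\alpha}{K}\sum_{i\in\Ncal} \expect[M_{\widehat\alpha^*}(X_i,P_i)] \leq \alpha,
\end{align*}
applying $\ind{x\geq 1}\leq x$, then the inequality $k^*_{\rmEBH}\leq k^*_{\PEBH}$ derived above, and finally the e-value property established in the previous paragraph.

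The only delicate point is verifying the conditional superuniformity of $P_i$ given $(X_i,\widehat\alpha^*)$, which is precisely where the independence $\mathbf{P}\perp\mathbf{X}$ enters. Crucially, $M_{\widehat\alpha^*}(X_i,P_i)$ depends on $\mathbf{P}$ only through $P_i$, so arbitrary dependence among the components of $\mathbf{P}$ is harmless; had $P_i$ been allowed to depend on $\mathbf{X}$, the conditional p-value property of $P_i$ given $\widehat\alpha^*$ could fail and the expectation bound on $M_{\widehat\alpha^*}(X_i,P_i)$ would collapse.
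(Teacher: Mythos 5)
Your overall strategy---transplanting the proof of \Cref{thm:rand-ebh} with $M_{\widehat\alpha^*}(X_i,P_i)$ in place of $S_{\widehat\alpha^*}(X_i)$, and using the independence of $\mathbf{P}$ and $\mathbf{X}$ to verify the conditional superuniformity hypothesis of the preceding proposition---is exactly the intended argument, and your handling of the e-value property and of the superset claim is correct. However, the displayed FDR chain has a genuine gap. You assert $\widehat\alpha^*=\alpha k^*_{\rmEBH}/K$, but the paper defines $\widehat\alpha^*=\alpha(k^*_{\rmEBH}+1)/K$ (it is precisely this ``$+1$'' that allows \PEBH\ to reject hypotheses that e-BH does not). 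With the correct definition, your middle inequality requires $\widehat\alpha^*/(k^*_{\mathrm{P}}\vee 1)\leq \alpha/K$, i.e.\ $k^*_{\mathrm{P}}\geq k^*_{\rmEBH}+1$ pointwise, where $k^*_{\mathrm{P}}$ denotes the number of \PEBH\ rejections. This fails on the event that \PEBH\ makes no extra rejections, and in the worst case ($k^*_{\mathrm{P}}=k^*_{\rmEBH}=1$) your chain only certifies $\FDR\leq 2\alpha$.

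The missing ingredient is the step labeled \eqref{eq:k2-dominates} in the proof of \Cref{thm:rand-ebh}, which your stated template includes but your write-up drops. Since $M_{\widehat\alpha^*}(X_i,P_i)$ takes values in $\{0\}\cup[1/\widehat\alpha^*,\infty)$, the \PEBH\ procedure coincides with e-BH applied to $(M_{\widehat\alpha^*}(X_1,P_1),\dots,M_{\widehat\alpha^*}(X_K,P_K))$, and hence
\begin{align}
\ind{M_{\widehat\alpha^*}(X_i,P_i)\geq 1/\widehat\alpha^*}=\ind{M_{\widehat\alpha^*}(X_i,P_i)\geq K/(\alpha (k^*_{\mathrm{P}}\vee 1))}\leq \frac{\alpha (k^*_{\mathrm{P}}\vee 1)}{K}\, M_{\widehat\alpha^*}(X_i,P_i),
\end{align}
after which the $(k^*_{\mathrm{P}}\vee 1)$ factors cancel in the FDR sum and the bound $\sum_{i\in\Ncal}\expect[M_{\widehat\alpha^*}(X_i,P_i)]\leq |\Ncal|\leq K$ finishes the proof. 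With this one-line repair your argument is complete and agrees with the proof the paper implicitly intends (the paper states this theorem without proof, deferring to the \RtwoEBH\ argument).
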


 \cite{ignatiadis_e-values_unnormalized_2022} discuss procedures for FDR control under arbitrarily dependent p-values where one first uses a calibrator, $h$, to produce the weighted e-value $h(P)E$, on which the e-BH procedure is then applied. However, there is no deterministic calibrator $h$ s.t. $h(p) \geq 1$ for all values of $p \in [0, 1]$, except the trivial calibrator $h(p) = 1$. Thus, these calibrator methods are unlike \PEBH, as they cannot \emph{guarantee} improvement over e-BH. \added{Notably, \citet{ignatiadis_e-values_unnormalized_2022} noted that the ep-BH procedure is most powerful when the e-values are constructed from a smaller, initial dataset (which is usually obtained first) before collecting and constructing a p-value from a larger follow-up dataset. In this setting, the \PEBH\ procedure ensures that all discoveries that are made solely by using the e-values and eBH will \emph{always} be preserved by the \PEBH\ procedure, and the introduction of p-values will only increase the number of discoveries.}

\section{\RBY\ when the FDR control is sharp}
\label{sec:tight-fdr}

One might wonder what the behavior of \RBY\ is when applied to the p-value distribution constructed by \cite{guo_control_false_2008}, on which BY applied at level $\alpha$ has an FDR of exactly $\alpha$. We will show that in this situation \RBY\ has a behavior that is identical to BY, and also has an FDR of exactly $\alpha$.

Let $K_0 \in [K]$ be the number of null hypotheses --- let the first $K_0$ hypotheses be true nulls (i.e., $\Ncal = [K_0]$ is the set of true nulls), and the remaining hypotheses be non-nulls (i.e, $\{K_0 + 1, \dots, K\} = [K] \setminus [K_0]$). For a specific level $\alpha \in [0, 1]$, \cite{guo_control_false_2008} begin their construction by sampling $N \in [K + 1]$ according to the following distribution:
    \begin{align}
        \prob{N = n} =
        \begin{cases}
            \alpha K_0 /(nK\ell_K) & \text{ for }n \in [K_0]\\
            \alpha / (K\ell_K) & \text{ for }n \in \{K_0 + 1, \dots,  K\}\\
            1 - \alpha\left(1 + (\ell_{K_0} - 1)K_0\right) / (K\ell_K) & \text{ for }n = K + 1
        \end{cases}.
    \end{align}

Let $U_0$ and $U_1$ be independent, uniform random variables in $[0, 1]$.
\begin{enumerate}
\item If $N \leq K$, draw $I_0 \subseteq [K_0]$ uniformly randomly from the subsets of $[K_0]$ that are of size $N \wedge K_0$. If $N > K_0$, additionally draw $I_1 \subseteq [K]\setminus [K_0]$ uniformly randomly from subsets of $[K]\setminus [K_0]$ that are of size $N - K_0$ --- otherwise let $I_1 = \emptyset$. Let $I \coloneqq I_0 \cup I_1$. Now, we define our p-values as follows:
\begin{align}
    P_i =
    \begin{cases}
    \alpha(N - 1 + U_0) / (K\ell_K) & \text{ for }i \in I\\
    \alpha / \ell_K + (1 - \alpha / \ell_K)U_1 & \text{ for }i \not\in I
    \end{cases} \label{eq:guo-rao-good-p}.
\end{align}
\item Otherwise, if $N = K + 1$, the p-values are generated as follows:
\begin{align}
    P_i = \begin{cases}
        \alpha / \ell_K + (1 - \alpha / \ell_K)U_0 \text{ for }i \in [K_0]\\
        1 \text{ for }i \in [K] \setminus [K_0]
    \end{cases}.
    \label{eq:guo-rao-bad-p}
\end{align}
\end{enumerate}
\begin{fact}[Theorem 5.1 (iii) of \citet{guo_control_false_2008}]
    The BY procedure, when applied to the aforementioned p-value distribution characterized by \eqref{eq:guo-rao-good-p} and \eqref{eq:guo-rao-bad-p}, has an FDR of exactly $\alpha(K + K_0(\ell_{K_0} - 1)) / (K\ell_K)$. Note that, when $K_0 = K$, the FDR of BY is exactly $\alpha$.
    \label{fact:guo-rao-fdr}
\end{fact}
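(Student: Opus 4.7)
The plan is to condition on the value of $N$ and apply the law of total expectation:
\[
\FDR = \sum_{n=1}^{K+1} \prob{N = n} \cdot \expect[\FDP \mid N = n].
\]
Since the construction is completely explicit once $N$ is fixed, each conditional expectation can be evaluated in closed form, and summing over the given PMF should yield the stated answer.

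First, I would characterize the BY rejection set on the event $\{N = n\}$ for $n \leq K$. All $n$ p-values indexed by $I$ take the common value $\alpha(n-1+U_0)/(K\ell_K)$, while the $K - n$ remaining p-values lie in $[\alpha/\ell_K, 1]$. Since $\alpha/\ell_K = \alpha K/(K\ell_K)$ exceeds every BY threshold $\alpha i/(K\ell_K)$ for $i \leq K$ almost surely (equality requires $U_1 = 0$, a probability-zero event), the ``large'' p-values fail every step-up check. For the $n$ ``small'' p-values, the highest candidate position $i = n$ satisfies $P_{(n)} \leq \alpha n/(K\ell_K)$ iff $U_0 \leq 1$, which always holds, so BY rejects exactly the $n$ hypotheses in $I$. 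On the complementary event $\{N = K + 1\}$, all $K_0$ nontrivial p-values lie in $[\alpha/\ell_K, 1]$, which again exceeds every BY threshold almost surely, so BY makes zero rejections.

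Second, I would count false rejections conditional on $N$. By construction $|I_0| = N \wedge K_0$ elements of the rejection set are true nulls. Combined with Step~1, this gives $\FDP = 1$ when $n \in [K_0]$, $\FDP = K_0/n$ when $n \in \{K_0 + 1, \dots, K\}$, and $\FDP = 0$ when $n = K + 1$. Note these are deterministic functions of $n$, so no additional expectation is required.

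Finally, I would substitute the stated PMF and simplify:
\[
\FDR = \sum_{n=1}^{K_0}\frac{\alpha K_0}{n K \ell_K} + \sum_{n=K_0+1}^{K}\frac{\alpha}{K\ell_K}\cdot\frac{K_0}{n},
\]
factor out $\alpha/(K\ell_K)$, and simplify the resulting harmonic sums to match the claimed expression $\alpha(K + K_0(\ell_{K_0}-1))/(K\ell_K)$; the $K_0 = K$ specialization then follows by plugging in and using $K\ell_K$ on top and bottom. The main obstacle is the careful step-up verification in Step~1: one must rule out ties and show that no position above $n$ triggers a rejection, which requires handling the boundary $i = K$ separately since there the threshold $\alpha/\ell_K$ equals the floor of the ``large'' p-value support. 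Once that edge case is dispatched with a probability-zero argument, the rest of the proof is bookkeeping.
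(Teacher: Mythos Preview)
The paper does not prove this statement; it is cited as a fact from \citet{guo_control_false_2008}, so there is no in-paper proof to compare against. Your conditioning-on-$N$ approach is the natural one and Steps~1--2 are carried out correctly: on $\{N=n\le K\}$ the BY procedure rejects exactly $I$, and the conditional FDP is $(n\wedge K_0)/n$.

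There is, however, a genuine gap in Step~3. You assert that the sum
\[
\sum_{n=1}^{K_0}\frac{\alpha K_0}{nK\ell_K}\;+\;\sum_{n=K_0+1}^{K}\frac{\alpha}{K\ell_K}\cdot\frac{K_0}{n}
\]
simplifies to the claimed $\alpha(K+K_0(\ell_{K_0}-1))/(K\ell_K)$, but you never verify this, and in fact it does not: factoring out $\alpha K_0/(K\ell_K)$ leaves $\sum_{n=1}^{K}1/n=\ell_K$, so the sum equals $\alpha K_0/K$. These two expressions agree only when $K_0=K$ (both equal $\alpha$), but differ otherwise; e.g.\ for $K_0=1,K=2$ they give $\alpha/2$ versus $2\alpha/3$. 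So either your FDP computation in Step~2 is wrong or the paper has mis-transcribed the Guo--Rao construction or formula. A quick check shows the paper's PMF for $N$ does not sum to $1$ unless $K=1$ (the stated $\prob{N=K+1}$ uses $1+(\ell_{K_0}-1)K_0$ where it should use $K+K_0(\ell_{K_0}-1)$), and the claimed FDR expression is actually $\prob{N\le K}$ under the stated PMF, which equals the FWER rather than the FDR of this construction. So the discrepancy is in the paper's transcription, not in your reasoning, but you should not claim the algebra ``matches'' without doing it.
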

We note that in this setting, BY always rejects all hypotheses corresponding to p-values less than $\alpha / \ell_K$. Hence, \RBY\ cannot improve upon BY, as \RBY\ never rejects any hypotheses with p-values above $\alpha / \ell_K$. Thus, BY and \RBY\ should have the same FDR in this setting. We verify through simulations in \Cref{fig:guo-rao-exp}, where we set $\alpha=0.05$, $K = 10^5$, vary $\pi_0 \coloneqq K_0 / K$ in $[0, 1]$, and plot the FDR calculated from simulations against the FDR derived in \Cref{fact:guo-rao-fdr}. Clearly, BY and \RBY\ have identical empirical FDR that matches closely to what \Cref{fact:guo-rao-fdr} indicates.

\begin{figure}[h]
    \centering
\includegraphics[width=0.9\textwidth]{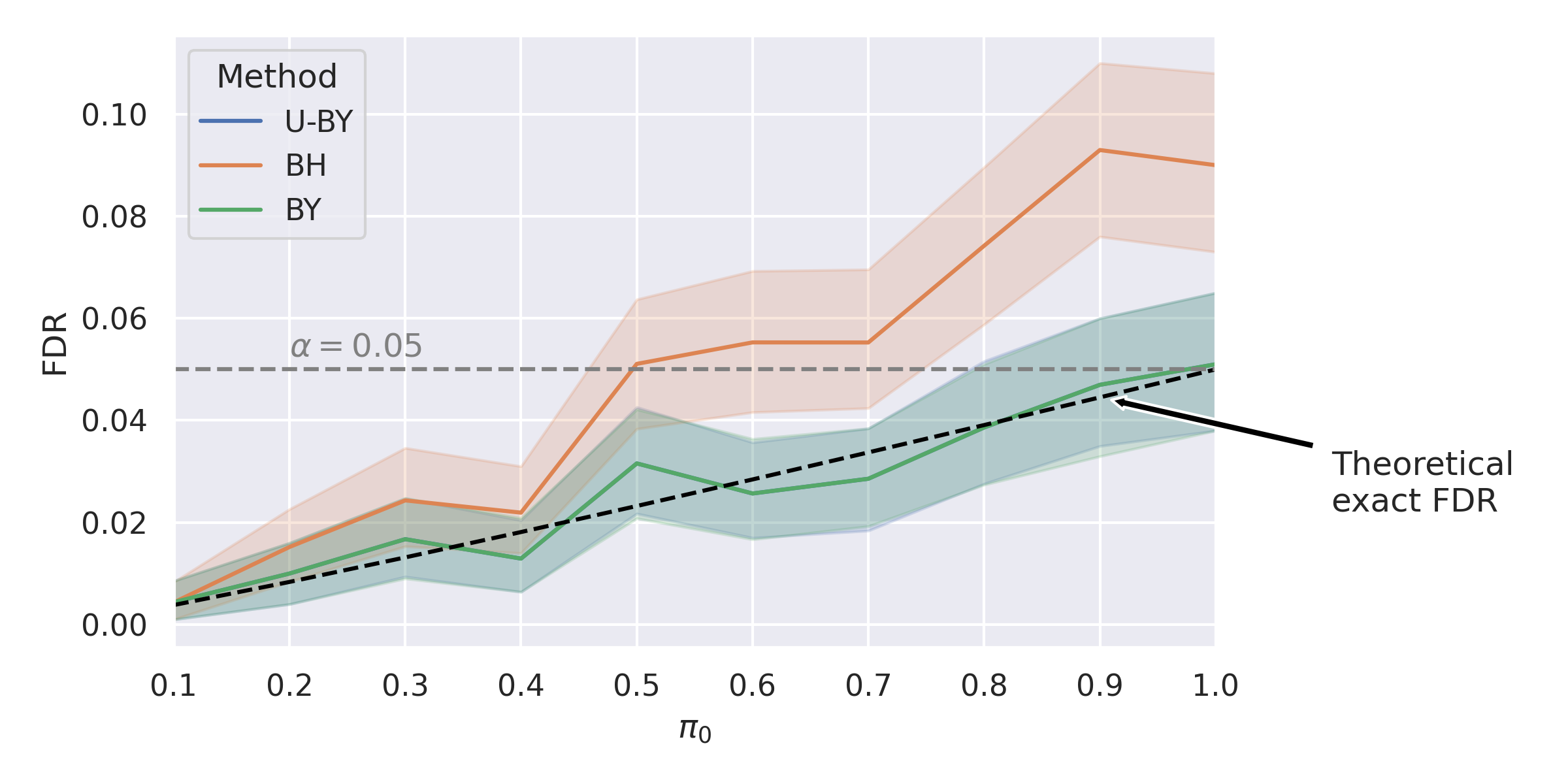}
\caption{Plot of proportion of nulls, $\pi_0$, vs.\ the FDR (estimated through simulations) of different FDR controlling procedures applied at level $\alpha=0.05$ (denoted by the dashed grey line) in the setting described in \citet{guo_control_false_2008}. The bands surrounding each line are 95\% CIs for the FDR that arise from Monte Carlo error. The dashed black line indicates the exact FDR of BY derived theoretically by \cite{guo_control_false_2008}. BH does not ensure  $\FDR \leq \alpha$. \emph{Both BY and \RBY\ have identical estimates and closely match the theoretical FDR.}}
    \label{fig:guo-rao-exp}
\end{figure}

\section{Randomizing admissible \added{e-merging} and p-merging functions}
\label{sec:admissible}

We can extend our results from \Cref{sec:rand-simes} concerning the Hommel p-merging function to randomize admissible p-merging functions in general, as well as randomizing e-merging functions. \added{We can define an \emph{e-merging function} $E: [0, \infty)^K \rightarrow [0, \infty)$ to be a function where $E(\mathbf{X})$ is an e-value if $\mathbf{X} = (X_1, \dots, X_K)$ is a vector of $K$ arbitrarily dependent e-values.} 
\added{Similarly}, we define a \emph{p-merging function} $F: [0, \infty)^K \rightarrow [0, \infty)$ to be a function where $F(\mathbf{P})$ is a p-value if $\mathbf{P} = (P_1, \dots, P_K)$ is a vector of $K$ arbitrarily dependent p-values. Note that p-values are usually restricted to the domain $[0, 1]$, but the definition of p-merging functions relaxes the domain to $[0, \infty)$ for simplicity. We also call $F$ \emph{ homogeneous} if $F(\lambda\mathbf{p}) = \lambda F(\mathbf{p})$ for any $\lambda \in (0, 1]$ and $\mathbf{p} \in [0, \infty)^K$. The class of homogeneous p-merging functions includes the analog of the Bonferroni correction, methods introduced by \citet{hommel_tests_overall_1983} and  \citet{ruger_1978} based on order statistics, and functions that use p-value averages devised by \citet{vovk_combining_p-values_2020}.

\added{
\citet{wang_only_admissible_2025} proved that the only admissible e-merging functions are convex combinations of the input e-values and 1. Let $\mathbf{x} = (x_1, \dots, x_K) \in [0, \infty)^K$ be a vector of realized e-values.
\begin{fact}[Theorem 1 of \citet{wang_only_admissible_2025}]
    Every admissible e-merging function can be formulated as
    \begin{align}
        E(\mathbf{x}) = \lambda_{K + 1} + \sum\limits_{i = 1}^{K} \lambda_i x_i,
    \end{align}
    for some $(\lambda_1, \dots, \lambda_{K + 1}) \in \Delta^{K + 1}$, where $\Delta^{K + 1}$ is the simplex on $K + 1$ dimensions.
\end{fact}
Then, we can define the stochastically rounded e-merging function for a data-dependent threshold $\hat\alpha \in (0, 1)$ as 
\begin{align}
\widetilde{E}_{\hat\alpha}(\mathbf{x}) \coloneqq S_{\hat\alpha}(E(\mathbf{x})).
\end{align}
\begin{proposition}
    $\widetilde{E}_{\hat\alpha}$ is a bona fide e-merging function for any (possibly data-dependent) $\hat\alpha \in (0, 1)$. Further, $\widetilde{E}$ is always more powerful than $E$ in the sense that $\prob{\widetilde{E}_{\hat\alpha}(\mathbf{X}) \geq \hat\alpha^{-1}} \geq \prob{E(\mathbf{X}) \geq \hat\alpha^{-1}}$.
    \label{prop:rand-e-merge}
\end{proposition}
The proof of this follows directly from $E$ being a valid e-merging function and the properties of stochastic rounding preserving e-value validity in \Cref{prop:dynamic-round-e} and having improved power in \Cref{prop:adaptive-round-power}.
}

\citet{vovk_admissible_ways_2022} prove that any admissible homogeneous p-merging function has a dual form that is formulated in terms of calibrators. We will leverage this dual form to apply our randomization techniques and produce randomized p-merging functions that are never greater than the p-values produced by the unaltered p-merging function. The key representation result that we use is the following:
\begin{fact}[\citealt{vovk_admissible_ways_2022}; Theorem 5.1]
    Let $R_{\alpha}(F) \coloneqq \{\mathbf{p} \in [0, \infty)^K: F(\mathbf{p}) \leq \alpha\}$. For any admissible homogeneous p-merging function $F$, there exist $(\lambda_1, \dots, \lambda_K) \in \Delta^K$ ($\Delta^K$ is the simplex on $K$ dimensions) and admissible calibrators $f_1, \dots, f_K$ such that:
    \begin{align}
        X(\mathbf{p}) &\coloneqq \sum\limits_{i = 1}^K \lambda_i f_i(p_i) \text{ and }\\
        R_\alpha(F) &= \left\{\alpha \cdot \mathbf{p} : \mathbf{p}\in [0, \infty)^K,  X(\mathbf{p}) \geq 1\right\} \text{ for each }\alpha \in (0, 1),
    \label{eq:pmerge-dual}
\end{align} where $p_i$ is the $i$th component of $\mathbf{p}$. Conversely, for any $(\lambda_1, \dots, \lambda_K) \in \Delta^K$ and calibrators $(f_1, \dots, f_K)$, \eqref{eq:pmerge-dual} determines a homogeneous p-merging function.
\end{fact}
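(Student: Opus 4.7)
The plan is to prove the two directions of the dual characterization separately: the \emph{constructive direction}, that any choice of $(\lambda_1, \ldots, \lambda_K) \in \Delta^K$ and calibrators $f_1, \ldots, f_K$ yields a valid homogeneous p-merging function via \eqref{eq:pmerge-dual}, and the harder \emph{characterization direction} that every admissible homogeneous p-merging function arises in this way.

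For the constructive direction, the key observation is that $X(\mathbf{P}) = \sum_{i=1}^K \lambda_i f_i(P_i)$ is a convex combination of e-values (since each $f_i(P_i)$ is an e-value by the definition of a calibrator), hence an e-value itself with $\expect[X(\mathbf{P})] \leq 1$ under the global null. Homogeneity of the induced $F$ is built into the definition $R_\alpha(F) = \alpha R_1(F)$. Applying Markov's inequality to the e-value $X$ gives $\prob{F(\mathbf{P}) \leq \alpha} = \prob{X(\mathbf{P}/\alpha) \geq 1} \leq \alpha$ (extending each $f_i$ to vanish outside $[0,1]$ handles the case $P_i/\alpha > 1$), so $F$ is indeed a valid p-merging function.

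For the characterization direction, the plan is a duality argument. Given an admissible homogeneous p-merging function $F$, the rejection region $R_1(F) \subseteq [0,\infty)^K$ is closed and upward-closed, and homogeneity makes it determine all of $F$. Associate to $R_1(F)$ an e-statistic $X_F$ whose super-level set $\{X_F \geq 1\}$ coincides with $R_1(F)$; admissibility of $F$ then translates into extremality of $X_F$ among all nonnegative functions $Y$ on $[0,\infty)^K$ satisfying $\expect[Y(\mathbf{P})] \leq 1$ for every arbitrarily dependent p-value vector $\mathbf{P}$. The main obstacle is showing that any such extremal $Y$ must decompose as $Y(\mathbf{p}) = \sum_i \lambda_i f_i(p_i)$ with $\lambda \in \Delta^K$ and each $f_i$ a calibrator. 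The approach is to exploit the flexibility of arbitrary dependence by testing $Y$ against adversarial couplings---for instance $\mathbf{P}$ with $U \sim \mathrm{Unif}[0,1]$ placed in the $i$-th slot and $1$ in all others---which forces each coordinate slice $u \mapsto Y(1, \ldots, u, \ldots, 1)$ to be a scaled calibrator. A separating-hyperplane argument (equivalently, linear-programming duality on the convex set of admissible e-statistics under arbitrary dependence) then assembles the slices into a convex combination with weights in the simplex, and admissibility of each $f_i$ follows because replacing one with a strictly larger calibrator would contradict admissibility of $F$.
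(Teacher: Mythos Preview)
The paper does not prove this statement at all: it is stated as a \textbf{Fact} with the citation \emph{Vovk, Wang (2022), Theorem 5.1} and is imported wholesale as a black box to set up the randomized p-merging result (\Cref{thm:rand-rej-pmerge}). There is therefore no ``paper's own proof'' to compare your proposal against; the authors simply quote the result and move on.

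That said, evaluating your proposal on its own terms: the constructive direction is essentially correct and in fact mirrors the computation the paper does carry out inside the proof of \Cref{thm:rand-rej-pmerge}, namely $\expect[X(\mathbf{P}/\alpha)] = \sum_i \lambda_i \expect[f_i(P_i/\alpha)] \leq \sum_i \lambda_i \int_0^1 f_i(x/\alpha)\,dx = \alpha$, followed by Markov. Your one-line summary is fine there.

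The characterization direction, however, is only a heuristic outline with real gaps. You assert that one can ``associate to $R_1(F)$ an e-statistic $X_F$ whose super-level set $\{X_F \geq 1\}$ coincides with $R_1(F)$'' and that admissibility of $F$ translates into extremality of $X_F$, but neither the construction of $X_F$ nor the passage from admissibility to extremality is specified. The adversarial-coupling idea (plug a uniform into one coordinate and $1$ elsewhere) does constrain the one-dimensional slices, but it does not by itself force the full function to be an \emph{additive} combination of those slices; you then invoke ``separating hyperplane / LP duality'' to assemble the pieces, which is the crux of the whole argument and is left as a slogan rather than a proof. The actual argument in Vovk--Wang requires a careful duality between admissible homogeneous p-merging functions and admissible e-merging functions under arbitrary dependence, together with a structural result that the extreme points of the latter class are exactly the calibrated p-values; your sketch gestures at this but does not supply it.
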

Thus, we can define a randomized e-merging function as 
\begin{align}
    \tilde{E}
\end{align}

Note that $X(\mathbf{P})$ is an e-value in the definition above. Hence, we can define a randomized p-merging function by stochastically rounding $X(\mathbf{P})$. Let $\widetilde{F}$ be the randomized version of the p-merging function $F$, where $F$ can be represented by \eqref{eq:pmerge-dual}. We can define the rejection region of $\widetilde{F}$ in the following fashion:
\begin{align}
    R_\alpha(\widetilde{F}) &\coloneqq \left\{\alpha \cdot \mathbf{p} \in [0, \infty)^K: S_{1}(X(\mathbf{p})) \geq 1\right\}\\
    &= \left\{\alpha \cdot\mathbf{p} \in [0, \infty)^K: X(\mathbf{p}) \geq U\right\} \text{ for each }\alpha \in (0, 1),
    \label{eq:rand-rej-region}
    \end{align} where $U$ is a uniform random variable.
\begin{theorem}
    $\widetilde{F}$ with the randomized rejection region defined in \eqref{eq:rand-rej-region} is a bona-fide p-merging function. Further $\widetilde{F}(\mathbf{p}) \leq F(\mathbf{p})$ for any $\mathbf{p} \in [0, \infty)^K$.
    \label{thm:rand-rej-pmerge}
\end{theorem}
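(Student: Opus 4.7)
The plan is to prove the two claims separately. The pointwise inequality $\widetilde{F}(\mathbf{p}) \leq F(\mathbf{p})$ follows directly from the dual representations: since $U \in [0, 1]$, every $\mathbf{p}$ with $X(\mathbf{p}) \geq 1$ also satisfies $X(\mathbf{p}) \geq U$, so $R_\alpha(F) \subseteq R_\alpha(\widetilde{F})$ for each $\alpha \in (0, 1)$, and hence $\widetilde{F}(\mathbf{p}) \leq F(\mathbf{p})$. The harder claim is that $\widetilde{F}$ is itself a p-merging function, i.e., $\prob{\widetilde{F}(\mathbf{P}) \leq \alpha} \leq \alpha$ for every $\alpha \in (0, 1]$ and every vector $\mathbf{P}$ of arbitrarily dependent p-values.

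To attack that, I would fix $\alpha$ and note that, by the definition of $R_\alpha(\widetilde{F})$ in \eqref{eq:rand-rej-region}, the event $\{\widetilde{F}(\mathbf{P}) \leq \alpha\}$ equals $\{X(\mathbf{P}/\alpha) \geq U\}$. Conditioning on $\mathbf{P}$ and using that $U$ is uniform on $[0, 1]$ and independent of $\mathbf{P}$,
\begin{align}
\prob{\widetilde{F}(\mathbf{P}) \leq \alpha} = \expect[X(\mathbf{P}/\alpha) \wedge 1] \leq \expect[X(\mathbf{P}/\alpha)].
\end{align}
The problem thus reduces to showing $\expect[X(\mathbf{P}/\alpha)] \leq \alpha$.

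For this bound, I would argue that the rescaled map $\tilde{f}_i(p) \coloneqq \alpha^{-1} f_i(p/\alpha)$ is itself a calibrator for each $i$. It inherits upper semicontinuity and monotonicity from $f_i$, and a change of variables gives $\int_0^1 \tilde{f}_i(p)\, dp = \int_0^{1/\alpha} f_i(u)\, du = \int_0^1 f_i(u)\, du = 1$, using $\alpha \leq 1$ and that calibrators are supported on $[0, 1]$. Therefore $\tilde{f}_i(P_i) = \alpha^{-1} f_i(P_i/\alpha)$ is an e-value, which yields $\expect[f_i(P_i/\alpha)] \leq \alpha$. Summing with the convex weights $(\lambda_i)$ gives $\expect[X(\mathbf{P}/\alpha)] = \sum_i \lambda_i \expect[f_i(P_i/\alpha)] \leq \alpha$, closing the argument.

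The main obstacle is this \emph{rescaled calibrator is a calibrator} step: it implicitly uses that calibrators are nonincreasing (and extended by zero outside $[0, 1]$), which is the natural convention so that $f(P)$ is an e-value for every p-value $P$, not only the uniform one. An equivalent packaging is to observe that $\alpha^{-1} X(\mathbf{P}/\alpha)$ is a convex combination of e-values and is therefore itself an e-value, so that $\expect[X(\mathbf{P}/\alpha)] \leq \alpha$ falls out directly. Either way, the power gain of $\widetilde{F}$ over $F$ corresponds precisely to the slack in Markov's inequality $\prob{Y \geq 1} \leq \expect[Y]$ that was wasted in the deterministic construction.
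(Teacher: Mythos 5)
Your proposal is correct and follows essentially the same route as the paper: both arguments reduce validity to the key bound $\expect[X(\mathbf{P}/\alpha)] \leq \alpha$, established via the same change-of-variables computation on the calibrators (your ``rescaled calibrator'' packaging is just a repackaging of the paper's direct integral bound), and both obtain the domination $\widetilde{F} \leq F$ from $U \leq 1$. The only cosmetic difference is that you compute $\prob{U \leq X(\mathbf{P}/\alpha)} = \expect[X(\mathbf{P}/\alpha) \wedge 1]$ directly by conditioning, whereas the paper routes the same computation through the stochastic-rounding proposition followed by Markov's inequality.
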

\begin{proof}
    First, we want to show the validity of this randomized p-merging function by proving the following:
    \begin{align}
        \prob{\mathbf{P} \in R_{\alpha}(\widetilde F)} &= \prob{S_1(X(\mathbf{P} / \alpha)) \geq 1} \leq \alpha.
        \label{eq:valid-rej-region}
    \end{align} To show this, we use the same argument as \cite{vovk_admissible_ways_2022}, and we can make the following derivation:
    \begin{align}
        \expect[X(\mathbf{P} / \alpha)] &= \sum\limits_{i = 1}^K \lambda_i \expect[f_i(P_i / \alpha)] \leq \sum\limits_{i = 1}^K \lambda_i \int\limits_0^1 f_i(x / \alpha)\ dx =\alpha \sum\limits_{i = 1}^K \lambda_i \int\limits_0^{1 / \alpha} f_i(x)\ dx = \alpha.
    \end{align} The first inequality is because each $f_i$ is a calibrator. The last equality is because $f_i(x) = 0$ for $x > 1$ since $f_i$ is a calibrator, and $\sum_{i \in [K]} \lambda_i = 1$ by definition.
    Consequently, $\expect[S_1(X(\mathbf{P} / \alpha))] \leq \alpha$ as a result of stochastic rounding, i.e., \Cref{prop:dynamic-round-e}. By Markov's inequality, we know that \eqref{eq:valid-rej-region} is true.

    We also know the following chain of implications by definition of $F$:
    \begin{align}
        F(\mathbf{p}) \leq \alpha \Leftrightarrow \sum\limits_{i = 1}^K \lambda_i f_i(p_i / \alpha) \geq 1 \Rightarrow \sum\limits_{i = 1}^K \lambda_if_i(p_i / \alpha) \geq U  \Leftrightarrow \widetilde{F}(\mathbf{p}) \leq \alpha.
    \end{align} The implication in the above chain is simply because $1 \geq U $ is always true. Hence, we have shown our desired result.
\end{proof}
    \Cref{thm:rand-rej-pmerge} now allows us to define a large family of randomized p-merging functions that improve over admissible deterministic p-merging functions. An example of such a p-merging function from \cite{vovk_admissible_ways_2022} is the \textit{grid harmonic merging function} which dominates the Hommel function and the domination is strict when $K \geq 4$. It is defined using the formulation in \eqref{eq:pmerge-dual} and the following calibrator:
\begin{align}
    f^{\textnormal{GH}}(x) \coloneqq \frac{K\ind{\ell_K x \leq 1}}{\lceil K\ell_K x\rceil}.
\end{align}
Consequently, the grid harmonic merging function, $F^{\textnormal{GH}}$, has a rejection region that is defined as follows:
\begin{align}
    R_\alpha(F^{\textnormal{GH}}) \coloneqq \left\{\alpha \cdot \mathbf{p} : \sum\limits_{i = 1}^K\frac{K\ind{\ell_K p_i \geq 1}}{\lceil K\ell_K x\rceil} \geq 1\right\}.
\end{align} Applying our randomization approach, we can define $\widetilde{F}^{\textnormal{GH}}$ as follows:
\begin{align}
    R_\alpha(\widetilde{F}^{\textnormal{GH}}) \coloneqq \left\{\alpha \cdot \mathbf{p} : \sum\limits_{i = 1}^K\frac{K\ind{\ell_K p_i \leq 1}}{\lceil K\ell_K x\rceil} \geq U\right\}.
\end{align}

\cite{gasparin_combining_exchangeable_2024} contains follow-up work that provides concrete examples of randomized p-merging functions and extends this theory to provide p-merging functions for exchangeable p-merging functions as well.

\section{Additional simulations}
\label{sec:add-sim}

\subsection{The effect of derandomization}\label{sec:derandomization-simulations}
\added{
We compare the power of \UEBH\ with its derandomized counterparts in \Cref{fig:derand-power}. The two derandomized procedures use either the average of stochastically rounded e-values (round-avg) or the compound e-values of \citet{ignatiadis_asymptotic_compound_2025} (set-avg), as we described in \Cref{remark:derandomization}, over $m$ independent runs of the \UEBH\ procedure. We can see that both derandomized procedures have lower power than \UEBH, with round-avg converging toward the power of e-BH as $m$ grows, while set-avg results in a significant drop in power even below that of e-BH and toward 0. We also see that the stability of round-avg increases with $m$, but actually decreases for set-avg. Thus, it seems that set-avg often produces e-values that fall below any thresholds that can be rejected via e-BH, and this is exacerbated as $m$ increases. Hence, derandomization via stochastic rounding seems to be the preferred method if one wants to derandomize \UEBH.
}
\begin{figure}
    \begin{subfigure}{\textwidth}
        \includegraphics[width=\textwidth]{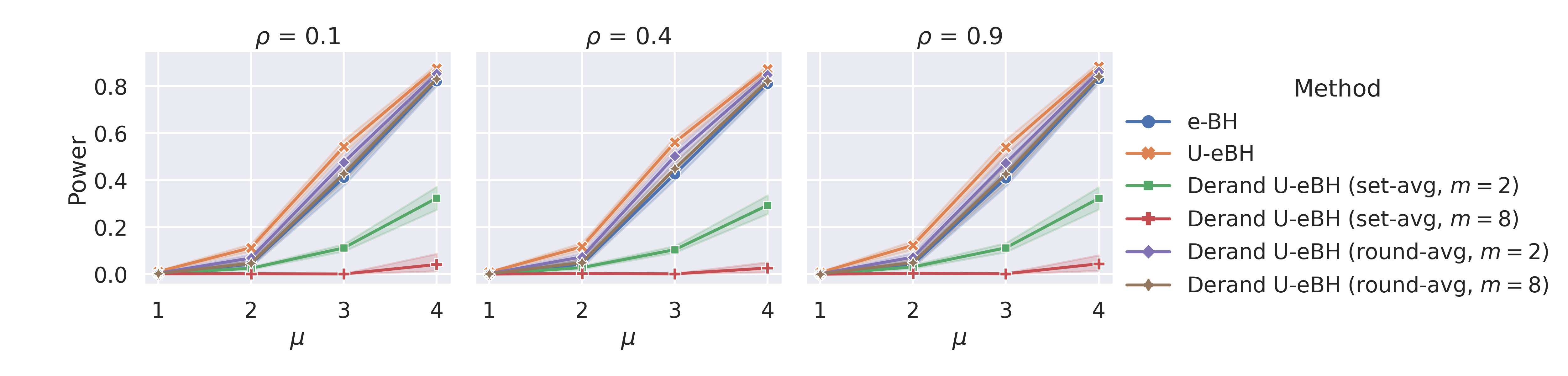}
        \caption{\added{Positive dependence}}
    \end{subfigure}
    \begin{subfigure}{\textwidth}
        \includegraphics[width=\textwidth]{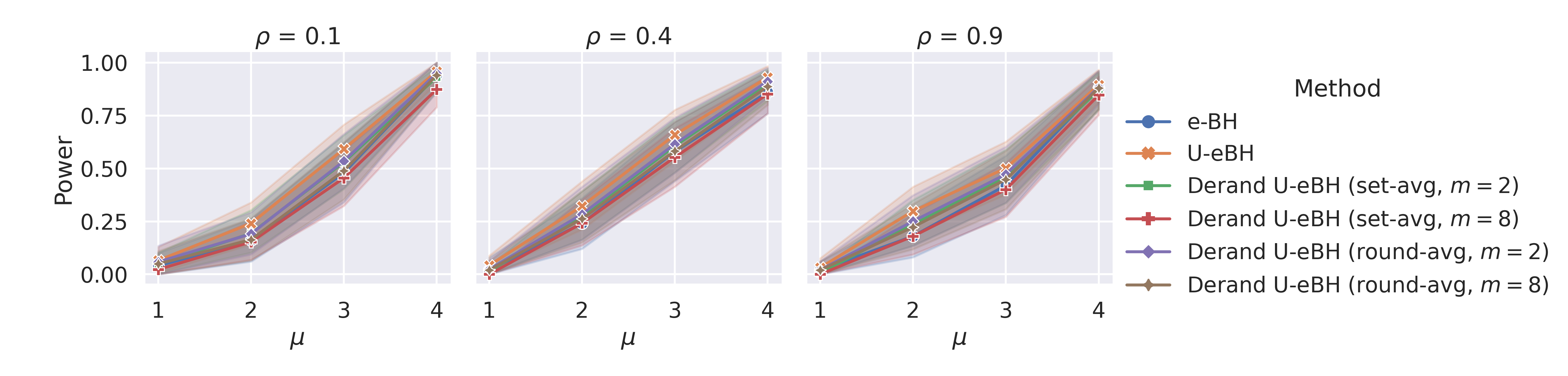}
        \caption{\added{Negative dependence}}
    \end{subfigure}
    \caption{\added{Plot of power vs.\ signal strength, $\mu$, for \UEBH\ and its derandomized counterparts using either stochastically rounded e-values (round-avg) or compound e-values \citep{ignatiadis_asymptotic_compound_2025} (set-avg) with different numbers of runs $m$. The simulation setting is the same as described in \Cref{sec:Simulations}. \UEBH\ has the highest power; the power of round-avg decreases with $m$ and approaches that of e-BH, while the power of set-avg is substantially lower and decreases towards 0 as $m$ grows.}}
    \label{fig:derand-power}
\end{figure}

\begin{figure}
    \begin{subfigure}{\textwidth}
        \includegraphics[width=\textwidth]{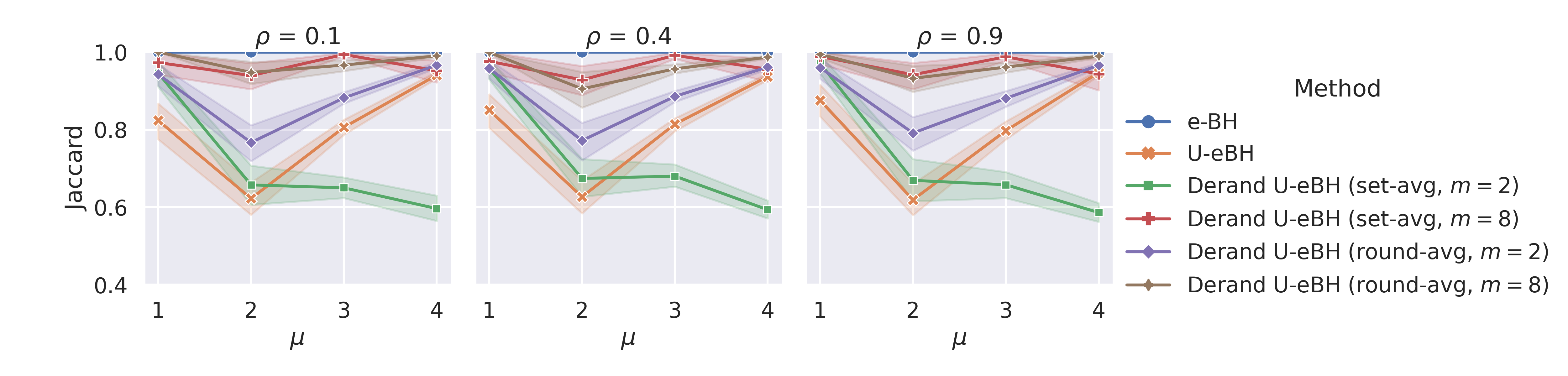}
        \caption{\added{Positive dependence}}
    \end{subfigure}
    \begin{subfigure}{\textwidth}
        \includegraphics[width=\textwidth]{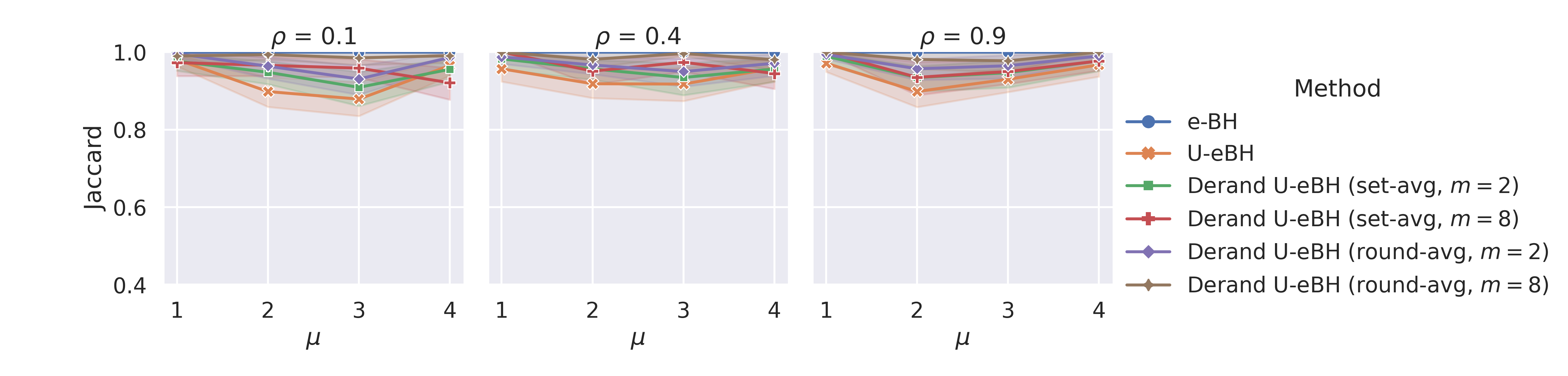}
        \caption{\added{Negative dependence}}
    \end{subfigure}
    \caption{\added{Plot of Jaccard similarity index vs.\ signal strength, $\mu$, for \UEBH\ and its derandomized counterparts. Interestingly, derandomized set-avg has lower stability as $m$ increases.}} 
    \label{fig:derand-stability}
\end{figure}

\subsection{Comparison of different randomized e-BH procedures}
We performed some additional simulations to probe the differences between methods more deeply. We can see the improvement of using \RbothEBH\ over each of the stochastic rounding approaches individually in \Cref{fig:ebh-ablation-heatmap}. Notably, \RtwoEBH\ is not uniformly dominated by \RbothEBH\ in all simulation settings, but it is beaten in the majority of them, particularly when the signal strength, $\mu$, is smaller.

\begin{figure}[h!]
    \centering
    \begin{subfigure}{0.7\textwidth}
    \centering
    \includegraphics[width=\textwidth]{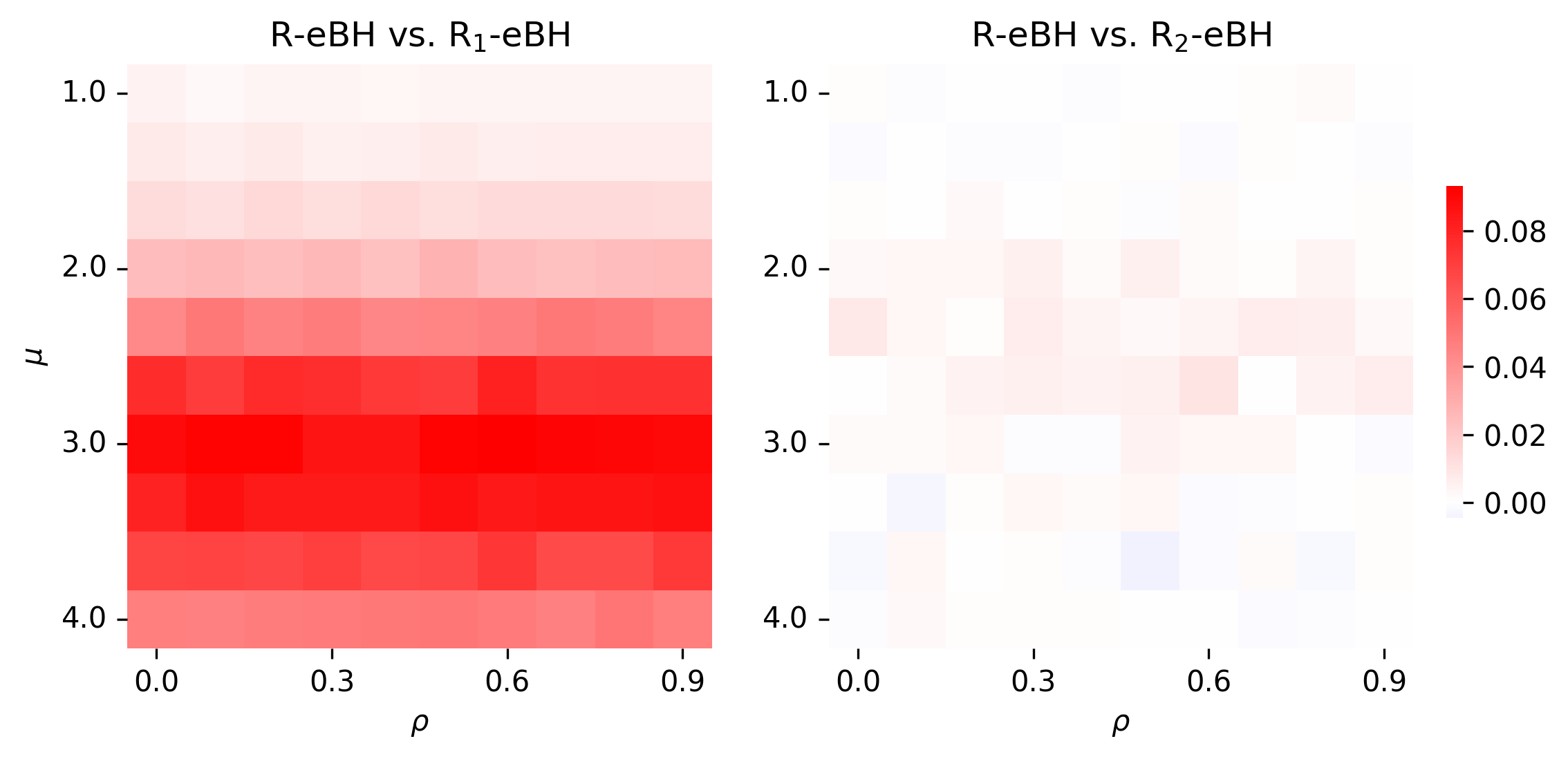}
    \caption{Positive dependence: \RbothEBH\ performs much better than \RoneEBH\ and slightly better than \RtwoEBH\ when $\mu$ is small.}
    \end{subfigure}

    \begin{subfigure}{0.7\textwidth}
    \centering
    \includegraphics[width=\textwidth]{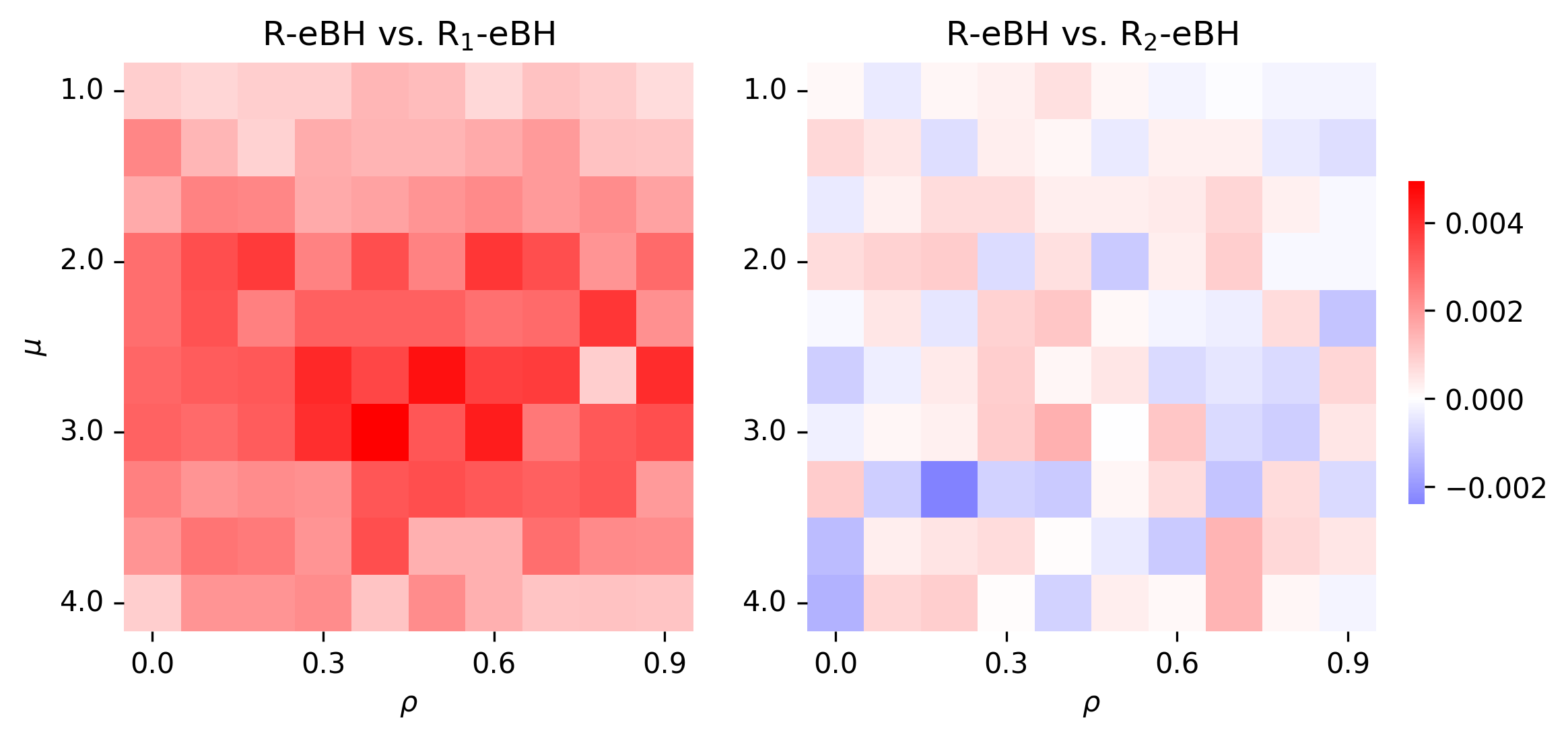}
    \caption{Negative dependence: \RbothEBH\ performs better than \RoneEBH, but neither \RbothEBH\ nor \RtwoEBH\ dominate the other.}
    \end{subfigure}
    \caption{Heatmap of the difference in power of using \RbothEBH\ over either \RoneEBH\  or \RtwoEBH across different values of $\mu$ and $\rho$ in the same simulation setting as described in \Cref{sec:Simulations}. \RbothEBH\ clearly performs better than \RoneEBH. Compared to \RtwoEBH, \RbothEBH\ performs only better in some settings.}
    \label{fig:ebh-ablation-heatmap}
\end{figure}

Another comparison to consider is the relationship between the e-BH procedure and using the BY procedure with $U_i / X_i$. We can see in \Cref{fig:ebh-ablation-trival-heatmap} that BY applied to $U_i / X_i$ has much lower power, and therefore is beaten even by baseline e-BH.

\begin{figure}[h!]
    \centering
    \begin{subfigure}{0.49\textwidth}
    \centering
    \includegraphics[width=0.8\textwidth]{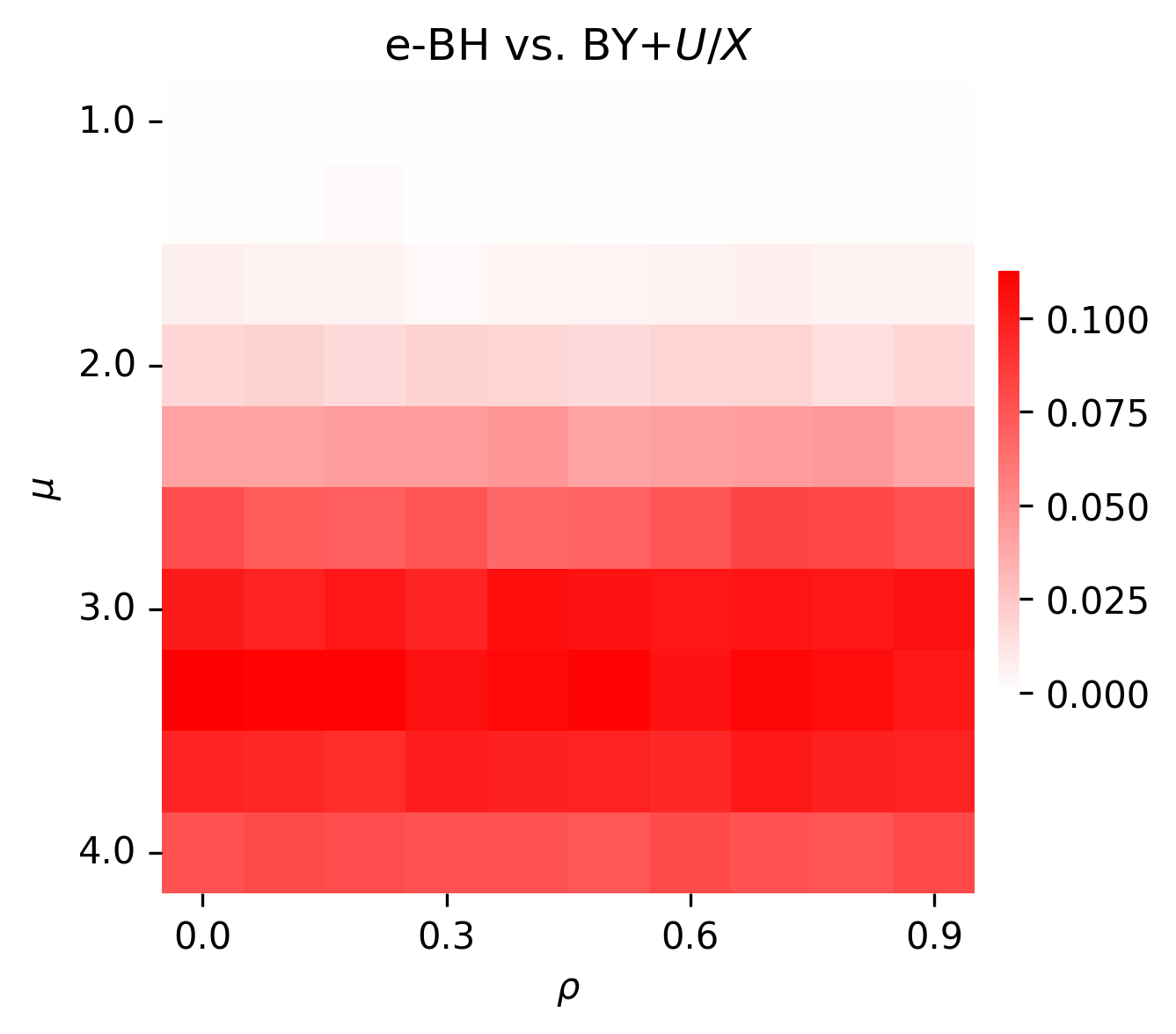}
    \caption{Positive dependence: e-BH dominates applying BY to $U / X_i$ across all $\mu$ and $\rho$.}
    \end{subfigure}\hfill\begin{subfigure}{0.49\textwidth}
    \centering
    \includegraphics[width=0.8\textwidth]{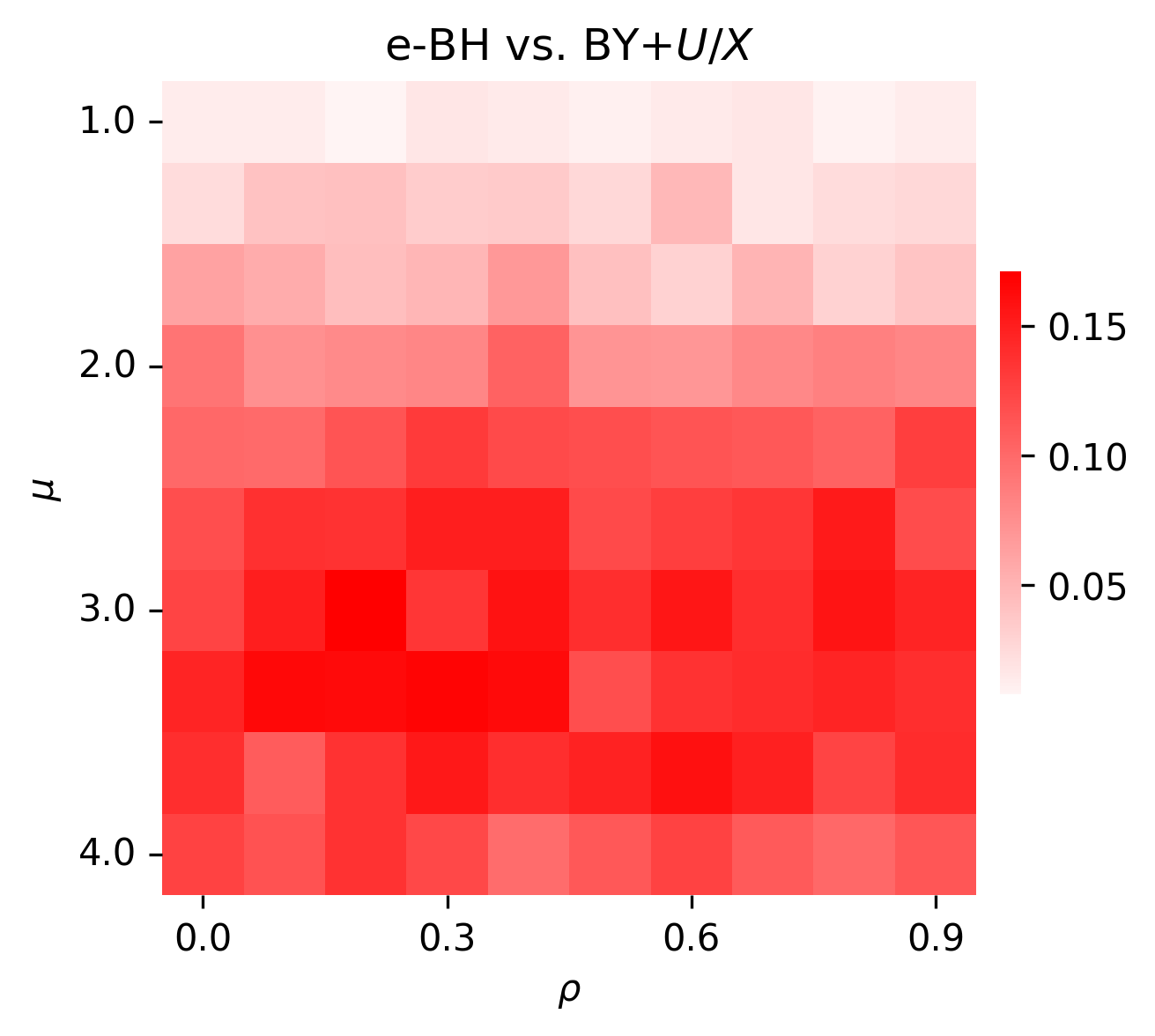}
    \caption{Negative dependence: e-BH also dominates BY with $U / X_i$ for all $\mu$ and $\rho$ here too.}
    \end{subfigure}
    \caption{Heatmap of the difference in power of using e-BH over applying BY to $U_i / X_i$ across different values of $\mu$ and $\rho$ in the same simulation setting as described in \Cref{sec:Simulations}.}
    \label{fig:ebh-ablation-trival-heatmap}
\end{figure}

We also compare how the use of a single $U_i = U$ for all uniform random variables instead of independent $U_i$ affects the performance of \RtwoEBH\ and \RbothEBH\ in \Cref{fig:u-choice}. There is no particular relationship between which one is better or worse, so it does not seem like selecting one vs.\ the other makes a significant difference in the procedure.

\begin{figure}[h!]
    \centering
        \begin{subfigure}{0.65\textwidth}
        \centering
        \includegraphics[width=\textwidth]
        {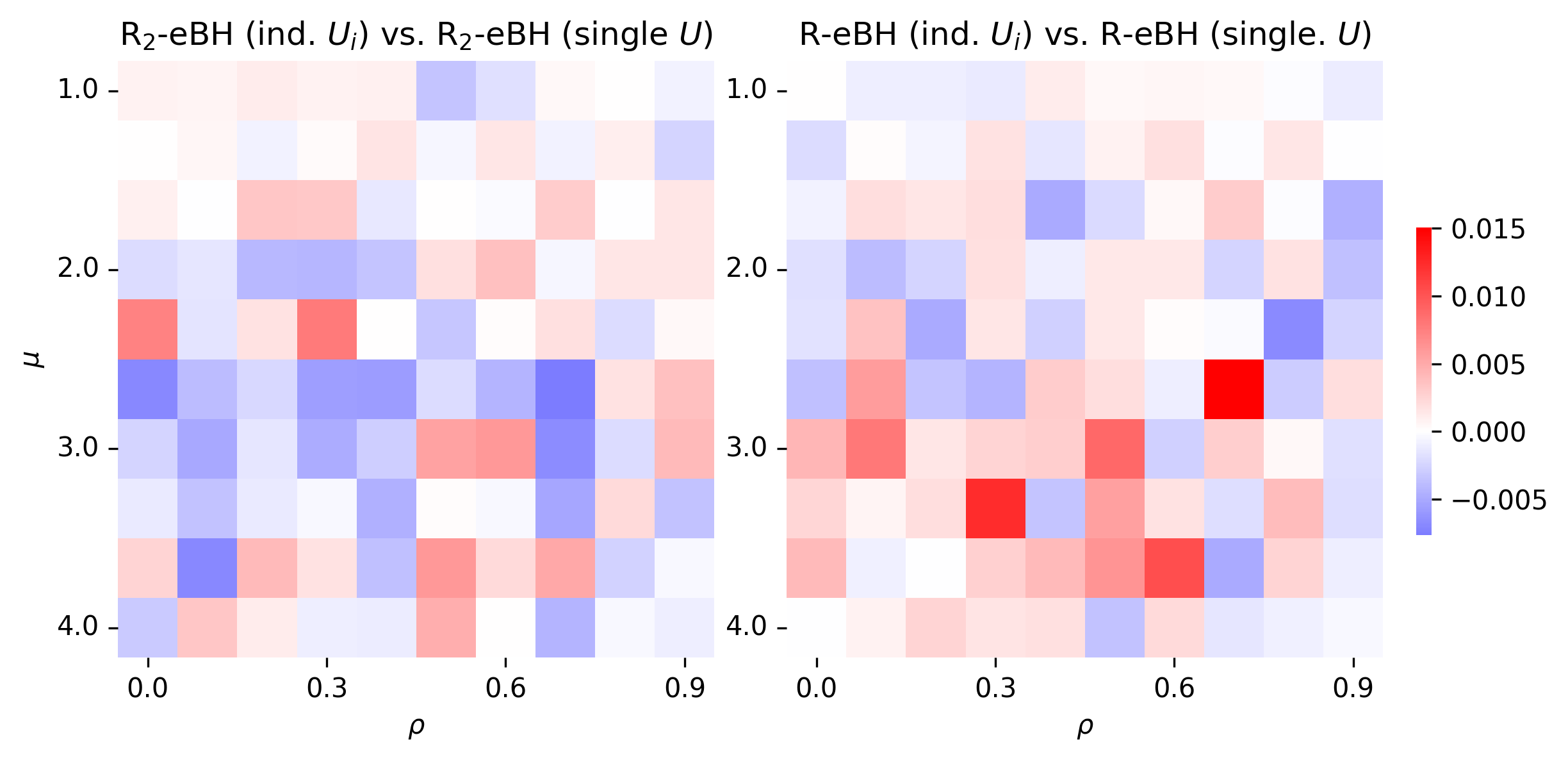}
        \caption{Positive dependence: independent $U_i$ seem to be better on average, particularly for \RbothEBH.}
        \end{subfigure}

        \begin{subfigure}{0.65\textwidth}
        \includegraphics[width=\textwidth]{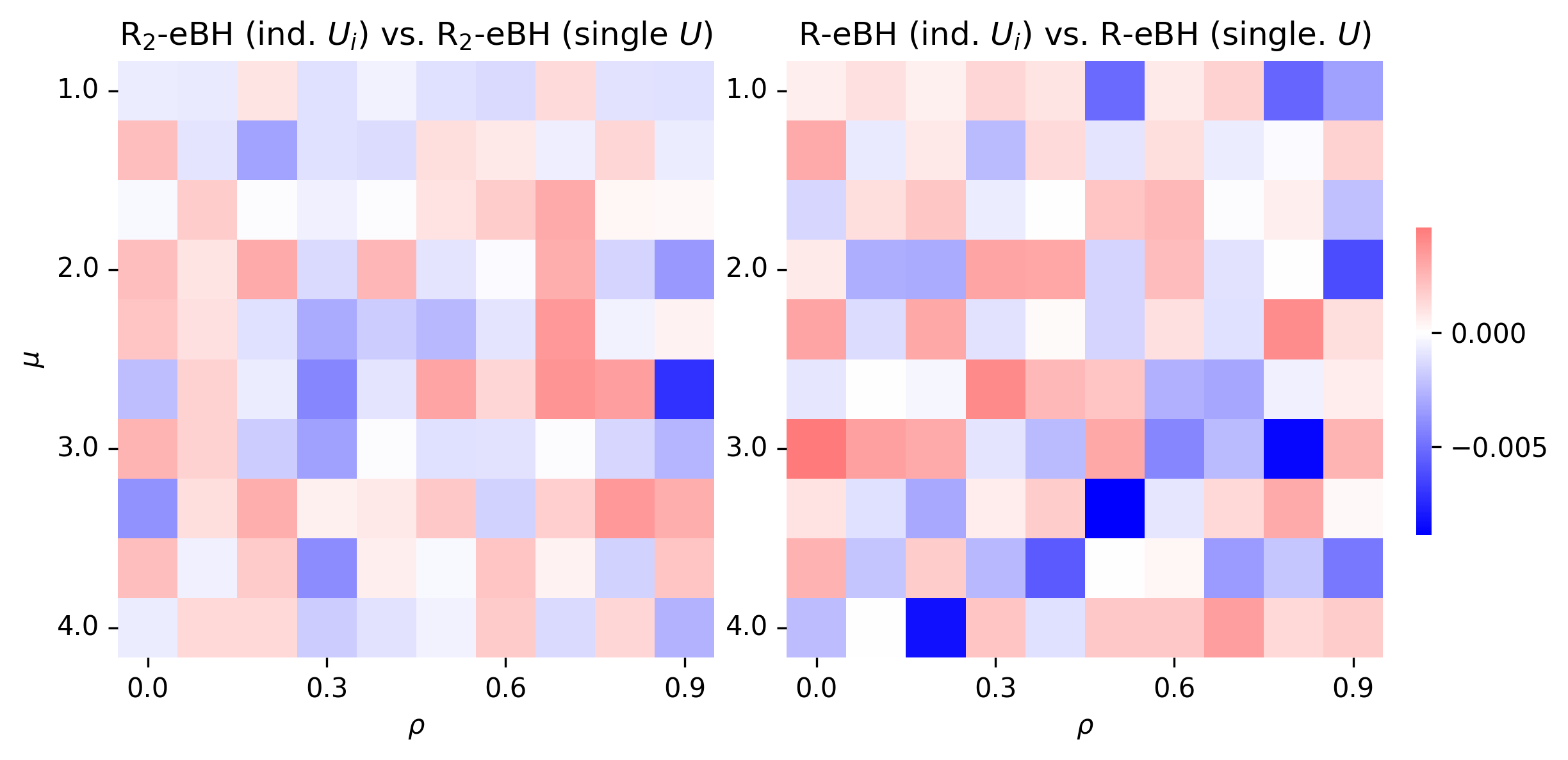}
        \caption{Negative dependence: Single $U$ seems to be better on average, particularly for \RbothEBH.}
        \end{subfigure}
        \caption{Heatmaps of the difference in power of using independent $U_i$ vs. a single $U$ (i.e., $U_i = U$ for each $i \in [K]$) across different values of $\mu$ and $\rho$ (as described in \Cref{sec:Simulations}).}
    \label{fig:u-choice}
\end{figure}
 
\end{document}